%% file: arxiv.tex
\documentclass[12pt]{article}
\input{paper-style}
\input{notation}

\hypersetup{pdftitle={Coarsening Latent-Class Probabilities: Directional Distortion and Coverage Loss}}
\hypersetup{pdfauthor={Marcell T. Kurbucz}}

\newcommand{\citepBinary}{\citealp{kurbucz2026binary}}
\newcommand{\citepKbs}{\citealp{kurbucz2026kbs}}
\newcommand{\citeparenKbs}{\citep{kurbucz2026kbs}}
\newcommand{\citetKbs}{\citet{kurbucz2026kbs}}

\newcommand{\MainLemResid}{1\xspace}
\newcommand{\MainThmIdent}{1\xspace}
\newcommand{\MainThmSpectral}{2\xspace}
\newcommand{\MainPropOrtho}{3\xspace}
\newcommand{\MainThmClt}{4\xspace}
\newcommand{\MainThmEff}{5\xspace}

\newcommand{\MainThmAtten}{7\xspace}
\newcommand{\MainCorAniso}{9\xspace}
\newcommand{\MainPropCoverage}{10\xspace}
\newcommand{\MainPropCal}{11\xspace}

\newcommand{\MainSecSim}{7\xspace}

\newcommand{\MainSecRecal}{9\xspace}
\newcommand{\MainSecBisg}{10\xspace}
\newcommand{\MainEqEst}{8\xspace}
\newcommand{\MainEqVeff}{10\xspace}

\newcommand{\MainEqCovHat}{14\xspace}

\newcommand{\MainTabAdult}{2\xspace}

\begin{document}
\spacingset{1.15}

\input{main-body}

\bibliographystyle{apalike}
\bibliography{references}

\clearpage
\setcounter{section}{0}
\setcounter{theorem}{0}
\setcounter{lemma}{0}
\setcounter{equation}{0}
\setcounter{table}{0}
\setcounter{figure}{0}
\renewcommand{\thetheorem}{S\arabic{theorem}}
\renewcommand{\thesection}{S\arabic{section}}
\renewcommand{\theequation}{S.\arabic{equation}}
\renewcommand{\thetable}{S\arabic{table}}
\renewcommand{\thefigure}{S\arabic{figure}}

\begin{center}
  {\Large\bfseries Supplement to ``Coarsening Latent-Class Probabilities:\\[0.6ex]
  Directional Distortion and Coverage Loss''}\\[1.5ex]
  Marcell T.\ Kurbucz
\end{center}
\bigskip

\input{supplement-body}

\end{document}

%% file: paper-style.tex
\usepackage{amsmath,amssymb,amsthm,mathtools}
\usepackage{graphicx}
\usepackage{enumerate}
\usepackage{float}
\usepackage{booktabs}
\usepackage{threeparttable}
\usepackage{natbib}
\usepackage{xurl}
\usepackage{xspace}
\usepackage[colorlinks=true,linkcolor=blue,citecolor=blue,urlcolor=blue]{hyperref}
\graphicspath{{figures/}}

\newtheorem{theorem}{Theorem}
\newtheorem{proposition}[theorem]{Proposition}
\newtheorem{lemma}{Lemma}
\newtheorem{corollary}[theorem]{Corollary}
\theoremstyle{definition}
\newtheorem{assumption}{Assumption}
\newtheorem{definition}[theorem]{Definition}
\theoremstyle{remark}
\newtheorem{remark}{Remark}

\def\spacingset#1{\renewcommand{\baselinestretch}{#1}\small\normalsize}

%% file: notation.tex
\newcommand{\E}{\mathbb{E}}
\newcommand{\Pbb}{\mathbb{P}}
\newcommand{\Var}{\operatorname{Var}}
\newcommand{\Cov}{\operatorname{Cov}}
\newcommand{\R}{\mathbb{R}}
\newcommand{\N}{\mathcal{N}}
\newcommand{\T}{^{\!\top}}
\newcommand{\pto}{\xrightarrow{\;p\;}}
\newcommand{\dto}{\xrightarrow{\;d\;}}
\newcommand{\plim}{\operatorname{plim}}
\newcommand{\rank}{\operatorname{rank}}
\newcommand{\tr}{\operatorname{tr}}
\newcommand{\range}{\mathcal{R}}
\newcommand{\nullsp}{\mathcal{N}}
\newcommand{\Vs}{V^{*}}
\newcommand{\Msf}{M}
\newcommand{\sgn}{\operatorname{sgn}}
\newcommand{\1}{\mathbf{1}}

%% file: main-body.tex
\begin{center}
  {\LARGE\bfseries Coarsening Latent-Class Probabilities:\\[0.6ex] Directional Distortion and Coverage Loss}\\[2ex]
  Marcell T.\ Kurbucz\\[0.5ex]
  Institute for Global Prosperity, The Bartlett,\\
  University College London\\
  \texttt{m.kurbucz@ucl.ac.uk}\\[0.5ex]
  \today
\end{center}

\begin{abstract}
\noindent Outcomes are regressed on a calibrated probability vector for unobserved class membership. Under a structural conditional mean excluding the score and conditional calibration, the observed-data model reduces to a partially linear regression. The probability vector is a Berkson-type surrogate for membership, so the effect vector $\tau$ is identified without attenuation. In practice the vector is often coarsened to a hard label---an argmax, a confidence threshold---and that label need not retain the Berkson property. For any coarsening the plug-in estimator converges to $\mathcal{A}\tau$, where $\mathcal{A}-I$ is determined by the regression of the discarded part of the score on the retained part. Coarsening therefore leaves $\tau$ undistorted exactly when that regression vanishes, and otherwise distorts some contrasts far more than others. The same operator determines the bias that drives coverage loss. Where that bias is of the order of the standard error, the Wald interval has limiting coverage $\Phi(z-\nu)-\Phi(-z-\nu)$, with $\nu$ their ratio. A fixed bias sends coverage to zero. Operator, standard error, and---through the uncoarsened estimator---the bias are estimable from observed data, so the implied coverage can be approximated before the interval is reported. Simulations show severe coverage loss after argmax coarsening. Three real-data audits exhibit the direction-specific distortion.
\end{abstract}

\noindent\textbf{MSC2020 subject classifications:} Primary 62G20, 62J20; secondary 62P25.

\noindent\textbf{Keywords:} classify-analyze; double machine learning; measurement error; partial identification; proxy variables; pseudo-labeling.

\bigskip

\section{Introduction}
\label{sec:intro}

A recurring problem in empirical work is to measure how membership in a latent group affects an outcome when the group indicator is never observed but a \emph{calibrated probability} for it is. Such probabilities come from a trained classifier or a model-based score: an unobserved protected attribute proxied for a fairness audit \citep{kallus2022,chen2019}, disease status from an imaging classifier, poverty status from administrative predictors. The binary version of this problem admits a closed-form treatment. A scalar calibrated score $p$ satisfying the conditional calibration condition $\E[G\mid p,X]=p$ is a Berkson-type surrogate for the latent indicator $G\in\{0,1\}$. Under a structural mean that excludes the score, the effect $\tau$ is point-identified by a moment equation whose denominator is a residual-score variance. Identification fails exactly when that variance vanishes. This is the two-class ($K=1$) special case of the results developed below (see \citepBinary, for the binary theory, and \citepKbs, for its confidence-thresholding diagnostics).

Most applications, however, are not binary. Fairness audits compare several protected groups; prevalence studies distinguish multiple disease subtypes; classifiers emit a probability vector over $K+1$ mutually exclusive classes. The natural object is then a vector $\tau\in\R^{K}$ of group effects and a vector-valued calibrated score $p$ on the probability simplex. This paper asks: what does coarsening that vector do to the coefficient and to the interval reported from it---along which directions---and can the answer be known before either is reported?

Existing approaches aim at correction rather than at assessing what coarsening does to the reported interval. One line of work documents the damage to an aggregate quantity, where a hard label demonstrably distorts a count or a group mean \citep{dong2025,chen2019}; another corrects the downstream estimate inside a latent-class model, using an estimated matrix of classification errors \citep{bolck2004}; a third restores validity with a gold-standard subsample \citep{angelopoulos2023}. The corrective approaches require something the analyst may not have: a fitted latent-class model, or a gold-standard subsample. None of these approaches directly assesses whether a Wald interval based on the coarsened regression attains its nominal coverage.

The argument rests on a constant-coefficient structural mean and on conditional calibration. Together they imply $\E[Y\mid p,X]=\mu(X)+\tau\T p$, so on observables the problem is a partially linear regression of the outcome on $p$ \citep{robinson1988}. This is the observed-data content of the assumptions: the two restrictions turn unobserved membership into an observable regression, without instruments, repeated measurements or a validation sample.

Define the score residual $a=p-r(X)$ with $r(X)=\E[p\mid X]$, and the \emph{residual-score covariance matrix}
\begin{equation}
  \Msf \;=\; \E\!\left[(p-r(X))(p-r(X))\T\right] \;=\; \E[\Cov(p\mid X)]
  \;\in\;\R^{K\times K}.
  \label{eq:M-intro}
\end{equation}
When $K=1$ this recovers the scalar residual-score variance $\Vs=\E[(p-r(X))^2]$ of the binary case.

In routine practice, however, that regression is not the one an analyst runs. The probability vector is coarsened before use, replaced by an argmax label \citep{lee2013}, by a confidence-thresholded label \citep{sohn2020}, or by a rounded summary. Our first result characterizes the effect of coarsening through an operator. For any such coarsening $h$, the plug-in estimator converges to $\mathcal{A}_h\tau$ with
\begin{equation}
  \mathcal{A}_h \;=\; I + D_h^{-1}\E[a_h u\T],
  \qquad D_h=\E[a_h a_h\T],\quad
  a_h=h-\E[h\mid X],\quad u=a-a_h.
  \label{eq:A-intro}
\end{equation}
The distortion is driven by the covariance between the retained signal $a_h$ and the discarded signal $u$. Coarsening produces no distortion exactly when the two are uncorrelated, and otherwise acts unevenly across directions, so the distortion need not be proportional across contrasts. In the binary case the same object is a scalar dilution factor \citeparenKbs.

The second result translates this distortion into its consequences for the reported interval. An analyst who coarsens and then forms the usual sandwich reports a Wald interval. Along sequences in which the coarsening bias shrinks at the rate the standard error does, its coverage tends to $\Phi(z-\nu_v)-\Phi(-z-\nu_v)$, where $\nu_v$ is the ratio of that bias along $v$ to the reported standard error (Proposition~\ref{prop:coverage}). A bias fixed in $n$ drives coverage to zero. Holding the bias fixed, coverage falls as the reported standard error shrinks, since $\nu_v$ then grows. The quantities needed to approximate this coverage are estimable from the observed data: $\mathcal{A}_h$ is identified from $(p,X)$ alone, the reported standard error from $(Y,p,X)$, and the bias additionally requires $\tau$, which the uncoarsened estimator supplies. That approximation is therefore available in advance, with no class labels. Reading $\mathcal{A}_h\tau$ as a distortion of the latent effect does require the identifying model. In the leading design an argmax interval is $39\%$ narrower than the correct one and covers $1\%$ of the time.

The remaining results tie $\mathcal{A}_h$ to $\tau$ itself rather than to the coefficient from the coarsened regression. Along an eigen-direction with vanishing eigenvalue, the corresponding component of $\tau$ is set-identified rather than point-identified, and the failure is observational rather than an artifact of one moment (Theorem~\ref{thm:spectral}). When calibration itself fails within a budget $\delta$, the induced bias along $v_j$ is bounded by $\delta\|\tau\|_1\lambda_j^{-1/2}$ (Proposition~\ref{prop:cal}), following the approximate-moment-condition approach of \citet{armstrong2021}. The same spectrum that governs precision also bounds sensitivity to miscalibration; coarsening is direction-specific as well, though not ordered by those eigenvalues.

In the proxy-based disparity literature most closely related to our setting, \citet{chen2019} contrast hard-thresholded imputation with a probability-weighted estimator of a binary demographic disparity. \citet{kallus2022} show that when the outcome and the protected class are never observed together, the coupling between them within proxy cells is unrestricted and common disparity measures are only partially identified. Our exclusion restriction constrains that coupling; together with conditional calibration and the rank condition it delivers point identification, and along a collapsed eigen-direction we return to a partially identified regime. \citet{mccartan2025} impose the analogue of our exclusion restriction one step upstream, on the surname rather than on the score, and use it to repair the imputation itself, treating surname as an instrument for membership within a Bayesian model for the disparity. They evaluate the thresholding estimator empirically but do not characterize its bias, noting, with reference to \citet{chen2019}, that its direction is not generally predictable. For any coarsening the distortion operator is the matrix \eqref{eq:A-intro}, identified from $(p,X)$ alone, so its size and directional structure are known without ground-truth class labels.

On North Carolina voter records, \citet{dong2025} attribute $11.9$ of a $28.2$-point undercount of African-American voters in a commercial voter file to argmax labeling rather than to miscalibration, and \citet{xin2026} obtain, to first order, coefficients that mix the true group effects through a confusion matrix. These studies consider different estimands---group-mean disparities, aggregate counts, and, in the last case, a regression coefficient as here---but $\mathcal{A}_h$ is exact, applies to arbitrary coarsenings, and is identified without ground-truth labels. None of this work asks whether the reported interval remains valid.

The misclassification literature identifies regression effects of a mismeasured binary regressor \citep{mahajan2006} and shows that misclassification attenuates them \citep{lewbel2007}; \citet{molinari2008} bounds the distribution of the underlying variable through the matrix of misclassification probabilities. Related results appear in the latent-class literature, where the classify-analyze strategy assigns units to classes from their posterior class probabilities---most simply by taking the class with the largest one. The resulting assignment is then used downstream as though it were the truth, typically yielding attenuated associations. The corrections either invert the classification-error matrix \citep{bolck2004,vermunt2010} or enlarge the classification model \citep{bray2015}. These approaches presuppose either a misclassified label or a fitted latent-class model. We observe a calibrated probability vector, so the coarsening is chosen by the analyst rather than imposed by the data. Consequently, $\mathcal{A}_h$ is estimable without a first-stage model, and the analysis below can assess the reported interval as well as the point estimate.

Recovering the latent structure itself calls for devices unavailable here: diagonalizing integral operators under injectivity conditions \citep{hu2008,schennach2016}, or many conditionally independent measurements \citep{allman2009,kasahara2009}. The data here carry a single probability vector, and the analysis rests instead on the partially linear model \citep{robinson1988} with double machine learning \citep{chernozhukov2018}. A parallel literature corrects inference when the outcome is a machine-learning prediction rather than a measurement \citep{wang2020,miao2025}; here the generated quantity enters as a regressor rather than as an outcome. Recent work on machine-learning-generated regressors \citep{battaglia2024} and on inference using a gold-labeled subsample \citep{angelopoulos2023} addresses a related coverage problem. In both, however, the distortion originates in prediction or estimation error, whereas a deterministic coarsening of an exactly observed vector calls for a diagnostic rather than a correction.

Section~\ref{sec:model} sets up the model and records the reduction to a partially linear regression. Section~\ref{sec:ident} states what that regression identifies and where it fails; Section~\ref{sec:inference} gives orthogonality, the sandwich limit, and the efficiency bound. Section~\ref{sec:atten} introduces the coarsening operator and Section~\ref{sec:coverage} the coverage of the resulting Wald interval. Section~\ref{sec:sim} reports simulations, Section~\ref{sec:adult} a UCI Adult disparity audit, and Section~\ref{sec:recal} a sensitivity bound for miscalibration together with a labeled-subsample experiment probing calibration failure and exclusion violations. Section~\ref{sec:bisg} presents a surname-based voter-turnout audit ($n=464{,}700$). Section~\ref{sec:disc} discusses scope. Proofs, additional experiments, and a land-cover audit are in the supplement.

\section{Model and Assumptions}
\label{sec:model}

We observe i.i.d.\ draws $(Y_i,X_i,p_i)\in\R\times\mathcal{X}\times\Delta_K$, where $\Delta_K=\{p\in[0,1]^{K}:\mathbf 1\T p\le 1\}$ collects the first $K$ coordinates of a $(K{+}1)$-class probability vector; the omitted reference probability is $p_{K+1}=1-\mathbf 1\T p$. On the same space there is an unobserved one-hot membership vector $g\in\{0,1\}^{K}$, $g_k=\mathbf{1}\{G=k\}$, $G\in\{1,\dots,K+1\}$, with $K+1$ the reference class. Write $m(x)=\E[Y\mid X=x]$, $r(x)=\E[p\mid X=x]\in\R^{K}$, and define
\begin{equation}
  a \;=\; p-r(X)\in\R^{K}, \qquad R \;=\; Y-m(X),
  \label{eq:resid}
\end{equation}
so $\E[a\mid X]=0$ and $\E[R\mid X]=0$. The residual-score covariance matrix $\Msf=\E[a a\T]$ of \eqref{eq:M-intro} is symmetric positive semidefinite.

\begin{assumption}[Structural conditional mean]
\label{ass:mean}
There exist a measurable $\mu:\mathcal{X}\to\R$ and a vector $\tau\in\R^{K}$ with $\E[Y\mid G,p,X]=\mu(X)+\tau\T g$ almost surely; here $\tau_k$ is the conditional mean contrast between class $k$ and the reference class $K{+}1$, holding $X$ fixed.
\end{assumption}

\begin{assumption}[Conditional calibration]
\label{ass:cal}
$\E[g\mid p,X]=p$ almost surely.
\end{assumption}

\begin{assumption}[Non-degeneracy]
\label{ass:rank}
$\Msf=\E[(p-r(X))(p-r(X))\T]$ is nonsingular ($\lambda_{\min}(\Msf)>0$).
\end{assumption}

\begin{assumption}[Moments]
\label{ass:mom}
$\E[Y^4]<\infty$ (since $p\in\Delta_K$ is bounded, all moments of $p$ are finite).
\end{assumption}

Assumption~\ref{ass:mean} has two parts: the latent-class effects are constant in $X$, and the probability vector is mean-independent of $Y$ given $(G,X)$. The latter is an \emph{exclusion restriction}. The score may predict membership, but conditional on $G$ and the downstream controls $X$ it carries no further mean-relevant information about $Y$. We use ``effect'' for this structural conditional-mean contrast, not a causal effect absent further assumptions; the supplement treats the marginal contrast, which is also point-identified and differs from $\tau_k$ by an explicit compositional term, and the case of effects varying with $X$. Assumption~\ref{ass:cal} is the sole link between the unobserved $g$ and the observed $p$. It makes $p$ a Berkson-type surrogate for $g$: true membership scatters around the observed score rather than the score around the truth, so regressing $Y$ on $(p,X)$ carries no attenuation and $\tau$ is recovered without bias. This reduction is standard, and the paper takes it as given; the property does not survive coarsening, since $\E[g\mid h(p),X]\neq h(p)$ in general. The condition is stronger than ordinary multiclass calibration: it requires the probability vector to be unbiased for membership after conditioning on $X$, the analogue of an identifying exogeneity condition. Subpopulation miscalibration in $X$ therefore violates it, and Proposition~\ref{prop:cal} shows the resulting bias is amplified by small eigenvalues of $\Msf$. Assumption~\ref{ass:rank} is the multivariate non-degeneracy condition.

\begin{lemma}[Residual decomposition]
\label{lem:resid}
Under Assumptions~\ref{ass:mean} and~\ref{ass:cal},
\begin{equation}
  R \;=\; \tau\T\!\left(g-r(X)\right) + \varepsilon,
  \qquad \E[\varepsilon\mid G,p,X]=0.
  \label{eq:lem-resid}
\end{equation}
\end{lemma}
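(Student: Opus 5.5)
We define $\varepsilon$ by the displayed identity and verify its conditional mean. Set $\varepsilon = Y-\mu(X)-\tau\T g$. Assumption~\ref{ass:mean} gives $\E[\varepsilon\mid G,p,X]=0$ immediately, since $g$ is a function of $G$.

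The only remaining task is to show that with this $\varepsilon$ the residual $R=Y-m(X)$ equals $\tau\T(g-r(X))+\varepsilon$. This is equivalent to the identity
\[
  m(X)=\mu(X)+\tau\T r(X).
\]
To obtain it, condition the structural mean on $(p,X)$ by the tower property. Assumption~\ref{ass:mean} gives
\[
  \E[Y\mid p,X]=\E\bigl[\E[Y\mid G,p,X]\mid p,X\bigr]=\mu(X)+\tau\T\E[g\mid p,X].
\]
Assumption~\ref{ass:cal} then reduces the right-hand side to $\mu(X)+\tau\T p$. Conditioning once more on $X$ gives $m(X)=\mu(X)+\tau\T r(X)$, by the definition $r(X)=\E[p\mid X]$. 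Substituting,
\[
  R=Y-\mu(X)-\tau\T r(X)=\tau\T g+\varepsilon-\tau\T r(X),
\]
which is \eqref{eq:lem-resid}.

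The only point needing care is measurability and integrability, so that the conditional expectations exist and the tower property applies. Assumption~\ref{ass:mom} gives $\E[Y^2]<\infty$. The vectors $g$ and $p$ are bounded, so $\tau\T g$ is integrable, and hence so is $\mu(X)=\E[Y-\tau\T g\mid X]$. All conditional expectations are therefore well defined, and the almost-sure identities combine without further conditions. I expect no genuine obstacle.

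As a by-product, the intermediate step shows $\E[Y\mid p,X]=\mu(X)+\tau\T p$, the partially linear reduction mentioned in the introduction. It also shows $\E[R\mid X]=0$, consistent with \eqref{eq:resid}.
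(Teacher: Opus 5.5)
Your proof is correct and is essentially the paper's: both take $\varepsilon=Y-\E[Y\mid G,p,X]$ (your $Y-\mu(X)-\tau\T g$ is the same variable by Assumption~\ref{ass:mean}) and obtain $m(X)=\mu(X)+\tau\T r(X)$ from the tower property plus calibration. The only difference is the route: the paper conditions on $X$ directly and uses $\E[g\mid X]=r(X)$, while you pass through $\E[Y\mid p,X]=\mu(X)+\tau\T p$. Your appeal to Assumption~\ref{ass:mom} is also unnecessary, since integrability of $Y$ is already implicit in Assumption~\ref{ass:mean} and in the definition of $m$.
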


Lemma~\ref{lem:resid} (proved in the supplement) gives the residual decomposition used throughout: the systematic part of the outcome residual is driven by the deviation $g-r(X)$ of true membership from its conditional expectation given $X$.

\subsection*{Observed-data implications of the assumptions}

Assumptions~\ref{ass:mean} and~\ref{ass:cal} act on the observed data as follows. Taking $\E[\,\cdot\mid p,X\,]$ in Assumption~\ref{ass:mean} and substituting Assumption~\ref{ass:cal},
\begin{equation}
  \E[Y\mid p,X] \;=\; \mu(X)+\tau\T\E[g\mid p,X] \;=\; \mu(X)+\tau\T p.
  \label{eq:plr}
\end{equation}
The latent class has disappeared: on observables the model is a partially linear regression of $Y$ on the probability vector $p$ with nonparametric component $\mu$ \citep{robinson1988}. Everything that follows about point identification, orthogonality and the sandwich limit is the corresponding statement for that regression, with $\Msf$ the covariance matrix of the partialled-out regressor and $\tau$ the partial regression coefficient.

We record \eqref{eq:plr} explicitly because it is what Assumptions~\ref{ass:mean} and~\ref{ass:cal} imply, not a further restriction imposed on top of them. It also defines the scope of the paper. Sections~\ref{sec:ident}--\ref{sec:inference} develop the implications of \eqref{eq:plr} needed later. Section~\ref{sec:atten} turns to the case where the probability vector is replaced by a coarsened version, for which the analogous partially linear regression generally does not hold.

\section{Identification and its Spectral Geometry}
\label{sec:ident}

\subsection{The matrix moment identity}

\begin{theorem}[Identification]
\label{thm:ident}
Under Assumptions~\ref{ass:mean}, \ref{ass:cal} and \ref{ass:mom},
\begin{equation}
  \E[a R] \;=\; \Msf\,\tau.
  \label{eq:moment}
\end{equation}
If in addition Assumption~\ref{ass:rank} holds, then $\tau=\Msf^{-1}\E[aR]$ is point-identified from the joint law of $(Y,X,p)$.
\end{theorem}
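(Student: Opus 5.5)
The plan is to compute $\E[aR]$ by conditioning on $(p,X)$ and using the observed-data form \eqref{eq:plr}, or equivalently Lemma~\ref{lem:resid}. First I check integrability: $a$ is bounded because $p\in\Delta_K$ and $r(X)\in\Delta_K$, and $R$ is square integrable under Assumption~\ref{ass:mom}. Hence $aR$ is integrable, and the tower property may be applied componentwise.

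The main computation runs as follows. By Lemma~\ref{lem:resid}, $R=\tau\T(g-r(X))+\varepsilon$ with $\E[\varepsilon\mid G,p,X]=0$. Since $a$ is a function of $(p,X)$, we get $\E[a\varepsilon]=\E\bigl[a\,\E[\varepsilon\mid G,p,X]\bigr]=0$. For the systematic part, I condition on $(p,X)$ and use Assumption~\ref{ass:cal}:
\begin{equation*}
\E\bigl[a\,(g-r(X))\T\bigr]=\E\bigl[a\,(\E[g\mid p,X]-r(X))\T\bigr]=\E\bigl[a\,(p-r(X))\T\bigr]=\E[aa\T]=\Msf .
\end{equation*}
Multiplying on the right by $\tau$ gives $\E[aR]=\Msf\tau$.

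An alternative route avoids the lemma: $R=\mu(X)-m(X)+\tau\T p+\eta$ with $\E[\eta\mid p,X]=0$ by \eqref{eq:plr}, and $\E[a\,f(X)]=0$ for any square-integrable $f$ because $\E[a\mid X]=0$. Then $\E[aR]=\E[a p\T]\tau=\E[a a\T]\tau$, since $\E[a\,r(X)\T]=0$.

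For identification, I note that $a$, $R$ and $\Msf$ are all functionals of the joint law of $(Y,X,p)$, since $m$ and $r$ are conditional expectations of observables. Under Assumption~\ref{ass:rank}, $\Msf$ is invertible, so $\tau=\Msf^{-1}\E[aR]$ is uniquely determined by that law: any two parameter values compatible with the same law give the same right-hand side.

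The only delicate point is justifying that $\E[a\varepsilon]=0$, or equivalently that $\varepsilon$ (resp.\ $\eta$) is integrable. This follows because $\varepsilon=R-\tau\T(g-r(X))$ is a difference of an $L^2$ variable and a bounded one. Beyond that, the argument is routine.
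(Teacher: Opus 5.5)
Your proposal is correct and follows the paper's own route: decompose $R$ via Lemma~\ref{lem:resid}, use $\sigma(p,X)$-measurability of $a$ to get $\E[a\varepsilon]=0$, condition on $(p,X)$ with Assumption~\ref{ass:cal} to turn $\E[a(g-r(X))\T]$ into $\E[aa\T]=\Msf$, and then invert under Assumption~\ref{ass:rank}. Your integrability checks and the alternative derivation through the partially linear form are sound additions that the paper leaves implicit.
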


The proof is in the supplement. Equation~\eqref{eq:moment} is the matrix analogue of the scalar identity $\E[(p-r(X))R]=\Vs\tau$ (the $K=1$ case, with $\Vs=\E[(p-r(X))^2]$): the scalar residual-score variance $\Vs$ becomes the covariance matrix $\Msf$, and the scalar covariance $\E[(p-r)R]$ becomes the cross-moment vector $\E[aR]$. Thus $\tau$ is the coefficient vector in the multivariate partial regression of $R$ on the residualized score $a$.

\subsection{Spectral characterization of identification failure}

Let $\Msf=\sum_{j=1}^{K}\lambda_j v_j v_j\T$ be the spectral decomposition, with $0\le\lambda_1\le\cdots\le\lambda_K$ and orthonormal eigenvectors $v_j$.

\begin{theorem}[Spectral identification]
\label{thm:spectral}
Under Assumptions~\ref{ass:mean}, \ref{ass:cal} and \ref{ass:mom}:
\begin{enumerate}[(a)]
\item The moment equation \eqref{eq:moment} identifies $\Msf\tau$, hence the
orthogonal projection of $\tau$ onto $\range(\Msf)=\mathrm{span}\{v_j:\lambda_j>0\}$.
The full vector $\tau$ is point-identified by \eqref{eq:moment} if and only if $\lambda_1>0$.
\item If $\rank(\Msf)=K-q<K$, the moment \eqref{eq:moment} does not
point-identify $\tau$: it determines only the projection
$P_{\range(\Msf)}\tau=\Msf^{+}\E[aR]$ (with $\Msf^{+}$ the Moore--Penrose
inverse), and its solution set is the affine subspace $\Msf^{+}\E[aR]+\nullsp(\Msf)$.
The components of $\tau$ in $\nullsp(\Msf)$ are unrestricted by the moment.
\item $\lambda_j=0$ if and only if $v_j\T p=v_j\T r(X)$ almost surely; that is,
the score combination $v_j\T p$ is a deterministic function of $X$.
\item The failure in (b) is genuinely observational, not an artifact of one
moment equation. Let $\tilde m(p,X)=\E[Y\mid p,X]$, $u=Y-\tilde m(p,X)$,
$g_0(u)=u/(1+|u|)$, and suppose there is $\underline\kappa>0$ with
$\kappa(p,X):=\E[u\,g_0(u)\mid p,X]\geq\underline\kappa$ a.s.\ and
$\|\tau\|_\infty\leq\underline\kappa/8$. Then for every
$z\in\nullsp(\Msf)$ with $\|z\|_\infty=1$ and every $c$ with
$|c|\leq\bar c:=\underline\kappa/8$, there exists a model satisfying
Assumptions~\ref{ass:mean}, \ref{ass:cal} and~\ref{ass:mom} with coefficient
vector $\tau+cz$ in which the observables $(Y,X,p)$ have exactly their
original joint distribution. The identified set is convex, contains the
segment $\{\tau+cz:|c|\leq\bar c\}$ along every null direction, and---whenever
each class probability is bounded away from zero on a positive-probability
set on which $\E[u^2\mid p,X]$ is essentially bounded---it is bounded, hence
a strict subset of the affine subspace in (b).
\end{enumerate}
\end{theorem}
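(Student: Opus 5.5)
Parts (a)--(c) are linear algebra on top of Theorem~\ref{thm:ident}; part (d) needs an explicit construction.

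\textbf{Parts (a)--(c).} Since $\E[aR]=\Msf\tau$ and the left side is a functional of the law of $(Y,X,p)$, the moment pins down exactly the vector $\Msf\tau$.

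\emph{Part (a).} Write $\tau=\sum_j (v_j\T\tau)v_j$, so $\Msf\tau=\sum_j\lambda_j(v_j\T\tau)v_j$. This determines $v_j\T\tau$ precisely for those $j$ with $\lambda_j>0$, that is, it determines $P_{\range(\Msf)}\tau=\Msf^{+}\Msf\tau$. Hence the moment identifies $\tau$ if and only if $\Msf$ is injective, i.e.\ $\lambda_1>0$.

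\emph{Part (b).} Two vectors satisfy $\Msf\tau'=\E[aR]$ exactly when their difference lies in $\nullsp(\Msf)$. The solution set is therefore $\Msf^{+}\E[aR]+\nullsp(\Msf)$.

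\emph{Part (c).} We have $\lambda_j=v_j\T\Msf v_j=\E[(v_j\T a)^2]$, which vanishes if and only if $v_j\T(p-r(X))=0$ almost surely.

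\textbf{Part (d): construction.} I keep $(X,p)$ and their joint law fixed and modify only the latent $g$ and the conditional law of $Y$.

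The key fact is that for $z\in\nullsp(\Msf)$, part (c) applied to each null eigenvector gives $z\T p=z\T r(X)=:\phi(X)$ almost surely. So shifting $\tau$ to $\tau+cz$ changes $\E[Y\mid p,X]=\mu(X)+\tau\T p$ only by $c\,\phi(X)$, a function of $X$ alone. This is absorbed by setting $\mu'=\mu-c\phi$, which preserves the observed conditional mean $\tilde m$.

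Keeping the same conditional mean does not yet make the full conditional law of $Y$ given $(p,X)$ identical. I would therefore define the new model directly:
\begin{itemize}
\item keep the law of $(X,p)$;
\item let $g\mid(p,X)$ be categorical with probabilities $p$, so Assumption~\ref{ass:cal} holds;
\item given $(G,p,X)$, draw $Y$ from a tilted version of the original conditional law $F_{Y\mid p,X}$, with mean $\mu'(X)+(\tau+cz)\T g$, chosen so that mixing over $G$ returns exactly $F_{Y\mid p,X}$.
\end{itemize}

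Concretely, I would set the conditional density of $Y$ given $(G=k,p,X)$ relative to $F_{Y\mid p,X}$ to be $1+\theta_k(p,X)\,g_0(u)/\kappa(p,X)$. The $\theta_k$ are centred, $\sum_k p_k\theta_k=0$, so the mixture is unchanged, and they are chosen so that the mean shift equals the required class contrast. The class-$k$ mean becomes $\tilde m+\theta_k$ because $\E[u\,g_0(u)\mid p,X]=\kappa$. The required $\theta_k$ equal $(\tau+cz)\T(e_k-p)$, with $e_{K+1}=0$. These satisfy $|\theta_k|\le 2\|\tau+cz\|_1$, which must be handled via $\|\cdot\|_\infty$ bounds.

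\textbf{Main obstacle.} The step I expect to be hardest is nonnegativity of these tilts. Since $|g_0|<1$, nonnegativity needs $|\theta_k|\le\kappa$. With $\|\tau\|_\infty\le\underline\kappa/8$ and $|c|\le\underline\kappa/8$, the coordinates of $\tau+cz$ are at most $\underline\kappa/4$. The difficulty is that the contrast $(\tau+cz)\T(e_k-p)$ involves an $\ell_1$-type sum over $K$ coordinates. I would bound it by $2\max_k|(\tau+cz)_k|\le\underline\kappa/2$, using the fact that $e_k-p$ has entries summing in absolute value to at most $2$, which gives $|\theta_k|\le\kappa$. Moments follow from $|g_0|<1$ and Assumption~\ref{ass:mom}.

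\textbf{Shape of the identified set.} Convexity follows because mixing two admissible latent models, with the same observed law, gives an admissible model whose coefficient is the corresponding convex combination. To see this, draw the latent structure from model $1$ with probability $\alpha$; the conditional mean given $(G,p,X)$ is then linear in $\alpha$ provided the class-wise mixtures are taken over the same $(p,X)$.

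For boundedness, take $(p,X)$ in the stated set where each $p_k\ge\eta$. Then
\[
\Var(Y\mid p,X)\ \ge\ \sum_k p_k\,\bigl(\E[Y\mid G=k,p,X]-\tilde m\bigr)^2
\]
by the law of total variance. This gives $|(\tau'-\tau'_{\text{ref}}\text{ contrasts})|\lesssim\eta^{-1/2}\,\mathrm{ess\,sup}\,\E[u^2\mid p,X]^{1/2}$, which bounds every coordinate of any admissible $\tau'$.
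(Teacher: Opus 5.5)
The gap is in the tilt of part~(d). Parts (a)--(c), convexity and boundedness are fine and follow the paper's proof. Your total-variance bound is the paper's Cauchy--Schwarz estimate in another form. Differencing the class-$k$ and reference-class deviations recovers $\tau'$ a little more directly than the paper's inversion of $I_K-\mathbf{1}p\T$.

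You tilt by the uncentered $g_0(u)$, but only $u$ is conditionally mean zero. Since $g_0$ is nonlinear, $\E[g_0(u)\mid p,X]\neq0$ in general. Here $u=\tau\T(g-p)+\varepsilon$ is a skewed mixture given $(p,X)$, so this is the generic case. Then
\[
  \int\bigl(1+\theta_k g_0(u)/\kappa\bigr)\,dF_{Y\mid p,X}
  \;=\;1+\theta_k\,\E[g_0(u)\mid p,X]/\kappa\;\neq\;1,
\]
so your class-$k$ law is not a probability measure, and its mean is not $\tilde m+\theta_k$. Patching this naively breaks the construction either way:
\begin{enumerate}[(i)]
\item Renormalizing each class law moves the class means to $\tilde m+\theta_k/(1+\theta_k\E[g_0(u)\mid p,X]/\kappa)$. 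The mixture over $G\sim\mathrm{Categorical}(p)$ then no longer returns $F_{Y\mid p,X}$ whenever some $\theta_k\neq0$, so the observables change.
\item Keeping the unnormalized joint $p_k(1+\theta_k g_0(u)/\kappa)\,dF_{Y\mid p,X}$ preserves the law of $Y$ given $(p,X)$. But it gives $\Pbb(G=k\mid p,X)=p_k(1+\theta_k\E[g_0(u)\mid p,X]/\kappa)\neq p_k$, which violates Assumption~\ref{ass:cal}.
\end{enumerate}

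The fix is to tilt by $\psi=g_0(u)-\E[g_0(u)\mid p,X]$.
\begin{enumerate}[(i)]
\item Because $\E[u\mid p,X]=0$, you still have $\E[u\psi\mid p,X]=\kappa$.
\item Each class law now integrates to one and has mean $\tilde m+\theta_k$. That is $\mu'(X)+(\tau+cz)_k$ for $k\le K$, and $\mu'(X)$ for the reference class.
\item The cost is that $|\psi|<2$ rather than $|g_0|<1$, so nonnegativity requires $|\theta_k|\le\kappa/2$ instead of $|\theta_k|\le\kappa$.
\item Your estimate $|\theta_k|\le 2\|\tau+cz\|_\infty\le\underline\kappa/2$ delivers exactly this. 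That is where the constant $\underline\kappa/8$ comes from.
\end{enumerate}
With that change your construction is the paper's. The paper keeps $Y$ and draws $g'$ given $(Y,p,X)$ with probabilities $q_k=p_k+c_k\psi$, where $c_k=p_k\theta_k/\kappa$. By Bayes' rule this is the same joint law as your tilt of $Y$ given $G$. The paper's presentation makes exact observational equivalence immediate; yours makes the class-conditional means immediate.
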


Theorem~\ref{thm:spectral} (proved in the supplement) characterizes identification failure along the spectral directions of $\Msf$. Part~(c) says a direction collapses precisely when the corresponding linear combination of probabilities carries no residual variation beyond $X$. Part~(b) says the failure is directional: the projection of the effect onto $\range(\Msf)$ remains point-identified. Part~(d) upgrades the moment-level statement to observational equivalence, by tilting the conditional class probabilities with a bounded, conditionally mean-zero transform of the outcome residual. The construction and the resulting bounded segment are given in the supplement, together with the reasons the conditions are sufficient rather than sharp. When $K=1$ identification is all or nothing; with several classes an analyst can have sharp inference on some contrasts and no point identification on others in the same study, and the spectrum of $\widehat\Msf$ says which.

In the disparity-audit literature, part~(d) is the spectrum-indexed counterpart of the sharp partial-identification sets of \citet{kallus2022}, the difference being that Assumption~\ref{ass:mean} restricts the outcome--membership coupling throughout, so that only the collapsed directions remain set-identified.

\begin{remark}[Clustered eigenvalues]
\label{rem:cluster}
When eigenvalues cluster, the sample eigenvectors of $\widehat\Msf$ are unstable, and direction-specific quantities should be reported for the spanned eigenspace rather than a single eigenvector. In the simulations, inference on a fixed direction is unaffected even when two small eigenvalues nearly coincide.
\end{remark}

\section{Estimation and Inference}
\label{sec:inference}

\subsection{Estimator and Neyman orthogonality}

With nuisances $m,r$ estimated by cross-fitting, the moment is
\begin{equation}
  \psi(W;\tau,m,r)\;=\;a\,(R-a\T\tau),\qquad a=p-r(X),\ R=Y-m(X),
  \label{eq:psi}
\end{equation}
and the estimator solves $n^{-1}\sum_i\hat a_i(\hat R_i-\hat a_i\T\hat\tau)=0$, i.e.
\begin{equation}
  \hat\tau \;=\; \Bigl(n^{-1}\textstyle\sum_i \hat a_i\hat a_i\T\Bigr)^{-1}
                 \Bigl(n^{-1}\textstyle\sum_i \hat a_i\hat R_i\Bigr).
  \label{eq:est}
\end{equation}
We treat the score $p$ as an externally supplied calibrated output; the only estimated nuisances are $m$ and $r$. When $p$ is itself produced by a classifier trained in-sample, cross-fitting the classifier (as in Section~\ref{sec:adult}) controls overfitting, but our asymptotics condition on the score-generating mechanism.

\begin{proposition}[Full Neyman orthogonality]
\label{prop:ortho}
Under Assumptions~\ref{ass:mean}, \ref{ass:cal} and~\ref{ass:mom}, the moment \eqref{eq:psi} is Neyman-orthogonal with respect to both nuisances: the Gateaux derivatives of $\E[\psi]$ in the $m$- and $r$-directions vanish at the truth.
\end{proposition}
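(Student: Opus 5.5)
Perturb each nuisance along a direction, differentiate $\E[\psi]$ at $t=0$, and show the derivative vanishes using only the conditional mean-zero properties of $a$ and of the structural residual. Write $m_t=m+t\,\delta_m$ and $r_t=r+t\,\delta_r$, with $\delta_m:\mathcal{X}\to\R$ and $\delta_r:\mathcal{X}\to\R^K$ square-integrable. Then $a_t=a-t\,\delta_r(X)$, $R_t=R-t\,\delta_m(X)$, and
\[
  \psi_t \;=\; \bigl(a-t\delta_r(X)\bigr)\Bigl(R-t\delta_m(X)-\bigl(a-t\delta_r(X)\bigr)\T\tau\Bigr).
\]
By Assumption~\ref{ass:mom} and the boundedness of $p$, everything is square-integrable, so differentiation under the expectation is justified. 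At $t=0$ the derivative is
\[
  \partial_t\E[\psi_t]\big|_{t=0}
  \;=\; -\E\bigl[\delta_r(X)\,(R-a\T\tau)\bigr]
  \;-\;\E\bigl[a\,\delta_m(X)\bigr]
  \;+\;\E\bigl[a\,\delta_r(X)\T\tau\bigr].
\]
The two nuisance directions can be handled separately, since each is the sum of the terms linear in the corresponding perturbation.

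For the $m$-direction, the only term is $-\E[a\,\delta_m(X)]$. Condition on $X$ and use $\E[a\mid X]=0$, which holds by the definition of $r$; the term vanishes. No structural assumption is needed here.

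For the $r$-direction there are two terms. The term $\E[a\,\delta_r(X)\T\tau]$ vanishes by the same conditioning argument. The remaining term is $\E[\delta_r(X)(R-a\T\tau)]$, and the key step is to show $\E[R-a\T\tau\mid X]=0$. By \eqref{eq:plr}, which follows from Assumptions~\ref{ass:mean} and~\ref{ass:cal}, we have $\E[Y\mid p,X]=\mu(X)+\tau\T p$. Taking expectations given $X$ yields $m(X)=\mu(X)+\tau\T r(X)$. Hence
\[
  R-a\T\tau \;=\; Y-\mu(X)-\tau\T p,
\]
and this has conditional mean zero given $(p,X)$, and so also given $X$. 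Alternatively, Lemma~\ref{lem:resid} gives $R-a\T\tau=\tau\T(g-p)+\varepsilon$, whose conditional mean given $(p,X)$ is zero by Assumption~\ref{ass:cal}. Either route kills the term. It is worth noting that the stronger conditional-on-$(p,X)$ statement is more than orthogonality requires.

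I expect no real obstacle; this is the standard Robinson-type orthogonality. The one point that needs care is the observation that the $r$-direction derivative requires $\E[R-a\T\tau\mid X]=0$, which genuinely uses the identifying model through \eqref{eq:plr}. The $m$-direction, by contrast, uses only the definition of $r$. A brief remark on integrability completes the argument.
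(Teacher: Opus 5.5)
Your proof is correct and matches the paper's: you compute the same three-term Gateaux derivative, kill the $m$-term and the $\E[a\,\delta_r(X)\T\tau]$ term with $\E[a\mid X]=0$, and kill the remaining $r$-term with $\E[R-a\T\tau\mid X]=0$. One correction to your commentary: that last identity does not "genuinely use the identifying model," because $\E[R-a\T\tau\mid X]=\E[R\mid X]-\E[a\mid X]\T\tau=0$ holds for every $\tau$ directly from the definitions of $m$ and $r$, so your detour through \eqref{eq:plr} is valid but unnecessary and the structural assumptions are not needed for this step.
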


Proposition~\ref{prop:ortho} (proved in the supplement) makes \eqref{eq:est} compatible with machine-learning nuisance estimators under the usual second-order rate condition \citep{chernozhukov2018}. The symmetric residual form \eqref{eq:psi} is orthogonal in both $m$ and $r$.

\subsection{Multivariate sandwich CLT}

\begin{theorem}[Multivariate CLT]
\label{thm:clt}
Under Assumptions~\ref{ass:mean}--\ref{ass:mom}, i.i.d.\ sampling, and nuisance estimates satisfying $\|\hat m-m\|_{2}\,\|\hat r-r\|_{2}+\|\hat r-r\|_{2}^{2}=o_P(n^{-1/2})$ with cross-fitting,
\begin{equation}
  \sqrt{n}\,(\hat\tau-\tau)\;\dto\;\N\!\left(0,\;\Omega\right),
  \qquad \Omega=\Msf^{-1}\Sigma\,\Msf^{-1},
  \quad \Sigma=\E\!\left[u^2\,a a\T\right],\ u=R-a\T\tau.
  \label{eq:clt}
\end{equation}
Write $\widehat\Sigma=n^{-1}\sum_i\hat u_i^2\,\hat a_i\hat a_i\T$ with $\hat u_i=\hat R_i-\hat a_i\T\hat\tau$, $\widehat\Omega=\widehat\Msf^{-1}\widehat\Sigma\,\widehat\Msf^{-1}$, and the finite-sample variance estimate $\widehat V=\widehat\Omega/n$. Then $\widehat\Omega\pto\Omega$, Wald intervals $\hat\tau_k\pm z_{1-\alpha/2}\sqrt{\widehat V_{kk}}$ and the ellipsoid $(\hat\tau-\tau)\T\widehat V^{-1}(\hat\tau-\tau)\le\chi^2_{K,1-\alpha}$ have asymptotic coverage $1-\alpha$. Writing $\sigma^2_j=v_j\T\Sigma v_j$, the asymptotic variance along eigen-direction $j$ is $v_j\T\Omega v_j=\sigma_j^2/\lambda_j^2$, so precision degrades as $\lambda_j\to0$, quantifying Theorem~\ref{thm:spectral}.
\end{theorem}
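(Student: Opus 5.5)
The plan is the standard double machine learning argument for a linear orthogonal score, specialised to \eqref{eq:psi}.

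\emph{Step 1: decompose the estimator.} By Theorem~\ref{thm:ident}, $\E[a(R-a\T\tau)]=0$. From \eqref{eq:est},
\[
\sqrt n(\hat\tau-\tau)=\widehat\Msf^{-1}\,n^{-1/2}\textstyle\sum_i\hat a_i(\hat R_i-\hat a_i\T\tau),
\qquad
\widehat\Msf=n^{-1}\textstyle\sum_i\hat a_i\hat a_i\T .
\]
It then suffices to prove two facts:
(i) $\widehat\Msf\pto\Msf$;
(ii) $n^{-1/2}\sum_i\hat a_i(\hat R_i-\hat a_i\T\tau)=n^{-1/2}\sum_i a_iu_i+o_P(1)$.

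Given (ii), the multivariate Lindeberg--L\'evy CLT gives $n^{-1/2}\sum a_iu_i\dto\N(0,\Sigma)$. Its two ingredients are:
\begin{itemize}
\item $\E[a u]=0$, by the moment identity;
\item $\Sigma=\E[u^2aa\T]$ is finite, because $a$ is bounded and $\E[Y^4]<\infty$ (Assumption~\ref{ass:mom}).
\end{itemize}
Slutsky's lemma together with Assumption~\ref{ass:rank} then yields \eqref{eq:clt}. For (i), write $\hat a_i=a_i-\Delta_r(X_i)$ with $\Delta_r=\hat r-r$. Each fold's contribution is a bounded empirical average plus cross terms of size $O_P(\|\Delta_r\|_2)=o_P(1)$, and cross-fitting keeps those cross terms conditionally i.i.d.

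\emph{Step 2: the key expansion (ii).} Set $\Delta_m=\hat m-m$. On each fold, substituting $\hat R=R-\Delta_m$ and $\hat a=a-\Delta_r$ gives
\[
\hat a(\hat R-\hat a\T\tau)=a u-a\Delta_m+a\Delta_r\T\tau-\Delta_r u+\Delta_r\Delta_m-\Delta_r\Delta_r\T\tau .
\]
The terms are handled in two groups.
\begin{itemize}
\item \emph{First-order terms:} $-a\Delta_m$, $a\Delta_r\T\tau$ and $-\Delta_r u$. Conditional on the auxiliary fold, the nuisance errors are fixed functions of $X$. Each term therefore has conditional mean zero, since $\E[a\mid X]=0$ and $\E[u\mid X]=\E[R\mid X]-\E[a\mid X]\T\tau=0$; this is Proposition~\ref{prop:ortho} in its exact form. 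Each term's conditional variance is $O(\|\Delta\|_2^2)$, using $\E[u^2\mid X]$ in $L^2$ via the fourth moment, possibly after a Cauchy--Schwarz step. Chebyshev then makes each centred sum $o_P(1)$ once the nuisances are $L^2$-consistent.
\item \emph{Second-order terms:} $\Delta_r\Delta_m-\Delta_r\Delta_r\T\tau$. These are not mean zero. By Cauchy--Schwarz, $\sqrt n$ times their sample mean is $O_P\bigl(\sqrt n(\|\Delta_m\|_2\|\Delta_r\|_2+\|\Delta_r\|_2^2)\bigr)=o_P(1)$, which is exactly the stated rate condition.
\end{itemize}

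\emph{Step 3: variance consistency and the remaining claims.} For $\widehat\Sigma\pto\Sigma$, expand
\[
\hat u_i=u_i-\Delta_m(X_i)+\Delta_r(X_i)\T\tau-\hat a_i\T(\hat\tau-\tau).
\]
Then $\hat u_i^2\hat a_i\hat a_i\T-u_i^2a_ia_i\T$ is bounded by sums of terms of the form
\[
|u_i|\times(\text{small error})\times(\text{bounded}) \quad\text{or}\quad (\text{small error})^2 .
\]
Each averages to $o_P(1)$ by Cauchy--Schwarz, using $\E[u^2]<\infty$, the $L^2$-consistency of the nuisances and $\hat\tau-\tau=O_P(n^{-1/2})$. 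Combined with (i) and the continuous mapping theorem, this gives $\widehat\Omega\pto\Omega$. The remaining claims follow directly:
\begin{itemize}
\item Coverage of the Wald intervals and of the $\chi^2_K$ ellipsoid follows from the CLT, Slutsky's lemma and continuous mapping, since $\Omega$ is nonsingular.
\item For the eigen-direction formula, $\Msf^{-1}v_j=\lambda_j^{-1}v_j$, so $v_j\T\Omega v_j=\lambda_j^{-2}v_j\T\Sigma v_j$.
\end{itemize}

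\emph{Main obstacle.} The delicate point is the first-order terms involving $u$ when $\E[u^2\mid X]$ is not bounded. The variance bound needs $\E[\Delta_r(X)^2u^2]$, and under only $\E[Y^4]<\infty$ this is controlled by $\|\Delta_r\|_4\,\|u\|_4$ rather than $\|\Delta_r\|_2$. I would try first to use boundedness: since $p\in\Delta_K$, both $r$ and any reasonable $\hat r$ (after clipping to the simplex) are bounded. Hence $\|\Delta_r\|_\infty\le2$ and $\E[\Delta_r^2u^2]\le 2\,\|\Delta_r\|_2\,\|\Delta_r u^2\|_2\lesssim\|\Delta_r\|_2\,\|u\|_4^2\to0$. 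The $\Delta_m$ term, $a\Delta_m$, has bounded $a$, so it needs only $\|\Delta_m\|_2\to0$.
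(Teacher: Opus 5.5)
Your proposal is correct and follows essentially the same route as the paper's proof. You expand $\hat a(\hat R-\hat a\T\tau)$ around the true nuisances and kill the first-order terms using $\E[a\mid X]=0$ and $\E[u\mid X]=0$ (Proposition~\ref{prop:ortho}). You then bound the second-order terms $\Delta_r\Delta_m$ and $\Delta_r\Delta_r\T\tau$ by the rate condition, and conclude with the CLT, Slutsky, and $\Msf^{-1}v_j=\lambda_j^{-1}v_j$. You spell out what the paper delegates to \citet{chernozhukov2018}, and your extra regularity is exactly what the paper's proof leaves implicit in that citation: $L^2$-consistency of $\hat m$ (not implied by the stated product-rate condition), a bounded or clipped $\hat r$, and nonsingular $\Omega$ for the ellipsoid.
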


The proof (in the supplement) is a standard $Z$-estimator argument using the orthogonality of Proposition~\ref{prop:ortho} to absorb the nuisance estimation error. The directional variance formula $v_j\T\Omega v_j=\sigma_j^2/\lambda_j^2$ links the spectral geometry to inference: a weakly-identified direction ($\lambda_j$ small) yields a wide confidence interval, while well-identified directions remain sharp. The exponent depends on how the moment noise behaves: with locally homoskedastic noise $\Sigma\approx\sigma^2\Msf$ giving $v_j\T\Omega v_j\approx\sigma^2/\lambda_j$, while at fixed $\Sigma$ it is $\lambda_j^{-2}$; the experiments show the realized scaling sits between $\lambda_j^{-1}$ and $\lambda_j^{-2}$ because $\Sigma$ co-varies with $\Msf$ across designs (Section~\ref{sec:sim}). A companion result in the supplement gives the direction-specific rate $\sqrt{n\lambda_{v,n}}$ at which a contrast is learned as its eigenvalue drifts to zero, and shows that the Wald interval remains correctly calibrated along such a direction under conditional homoskedasticity.

\subsection{Semiparametric efficiency}
\label{sec:eff}
\noindent By Theorem~\ref{thm:ident}, $\tau$ solves the conditional moment restriction $\E[\,R-a\T\tau\mid p,X\,]=0$, and the efficiency theory for such restrictions \citep{chamberlain1987} gives the semiparametric efficiency bound.

\begin{theorem}[Efficiency bound and optimal weighting]
\label{thm:eff}
Let $\sigma^{2}(p,X)=\Var(R\mid p,X)>0$ almost surely, let $\tilde p(X)=\E[\sigma^{-2}(p,X)\,p\mid X]/\E[\sigma^{-2}(p,X)\mid X]$ be the precision-weighted conditional mean of $p$, and set $\tilde a=p-\tilde p(X)$. Under Assumptions~\ref{ass:mean}--\ref{ass:mom}, assuming the displayed expectations are finite and $\E[\sigma^{-2}(p,X)\,\tilde a\tilde a\T]$ is nonsingular, the semiparametric efficiency bound for $\tau$ is
\begin{equation}
  V_{\mathrm{eff}}=\bigl(\E[\sigma^{-2}(p,X)\,\tilde a\tilde a\T]\bigr)^{-1},
  \label{eq:veff}
\end{equation}
attained by the estimator solving $\sum_i \sigma^{-2}(p_i,X_i)\,\tilde a_i\,(R_i-a_i\T\tau)=0$, with $\sigma^2$, $\tilde p$, $m$ and $r$ replaced by cross-fitted estimates. The unweighted estimator \eqref{eq:est} attains \eqref{eq:veff} if and only if $\sigma^{2}(p,X)\,a=C\tilde a$ almost surely for some nonsingular constant matrix $C$; a constant $\sigma^{2}$ is the leading sufficient case. In general $\Omega=\Msf^{-1}\Sigma\,\Msf^{-1}\succeq V_{\mathrm{eff}}$, with strict inequality along some direction under generic heteroskedasticity.
\end{theorem}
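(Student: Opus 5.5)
The plan is to treat $\tau$ as the parameter of the conditional moment restriction $\E[R-a\T\tau\mid p,X]=0$, with $m$ and $r$ as nuisance functions, and to apply Chamberlain's (1987) bound. Write the model in the Robinson form $Y=\mu(X)+\tau\T p+e$ with $\E[e\mid p,X]=0$, which Assumptions~\ref{ass:mean}--\ref{ass:cal} imply through \eqref{eq:plr}. The unknown infinite-dimensional nuisance is $\mu$, together with the law of $(p,X)$ and of $e$ given $(p,X)$. The bound for $\tau$ in this partially linear model is the inverse of the variance of the efficient score. That score is obtained by projecting the parametric score $\sigma^{-2}e\,p$ off the nuisance tangent space generated by perturbations of $\mu$.

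\emph{Efficient score.} Those perturbations contribute scores of the form $\sigma^{-2}(p,X)\,e\,b(X)$. Projecting $\sigma^{-2}e\,p$ onto this space under the inner product weighted by $\E[e^2\mid p,X]=\sigma^2$ gives $b(X)=\tilde p(X)$, the precision-weighted mean defined in the statement. Hence the efficient score is $S^*=\sigma^{-2}(p,X)\,\tilde a\,e$. Here $e=R-a\T\tau$, since $R-a\T\tau=Y-m(X)-(p-r(X))\T\tau=e+\E[e\mid X]$ and $\E[e\mid X]=0$.

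Its variance is $\E[\sigma^{-4}\tilde a\tilde a\T e^2]=\E[\sigma^{-2}\tilde a\tilde a\T]$, which gives \eqref{eq:veff}. The derivative of $\E[S^*]$ with respect to $\tau$ equals the same matrix, because $\E[\sigma^{-2}\tilde a(\tilde p(X)-r(X))\T]=0$ by the defining property of $\tilde p$.

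\emph{Attainment.} The weighted estimator is a $Z$-estimator with moment $\sigma^{-2}\tilde a(R-a\T\tau)$. I would repeat the argument of Theorem~\ref{thm:clt}. First, check Neyman orthogonality in $(m,r,\sigma^2,\tilde p)$, as in Proposition~\ref{prop:ortho}. The derivative in $\sigma^{-2}$ and $\tilde p$ directions vanishes because $\E[R-a\T\tau\mid p,X]=0$. The derivative in $m$ and $r$ vanishes because $\E[\sigma^{-2}\tilde a\,c(X)]=0$ for any $c$. Cross-fitting and product-rate conditions then give asymptotic variance $J^{-1}\E[S^*S^{*\top}]J^{-1}=V_{\mathrm{eff}}$.

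\emph{Comparison with the unweighted estimator.} This is a standard GMM efficiency comparison. Any estimator with moment $w(p,X)(R-a\T\tau)$, where $w=a$ in the unweighted case, has influence function $J_w^{-1}w\,e$ with $J_w=\E[w a\T]$. Its variance exceeds $V_{\mathrm{eff}}$ by the variance of the residual $J_w^{-1}w e-V_{\mathrm{eff}}S^*$, which is orthogonal to $S^*$. So $\Omega\succeq V_{\mathrm{eff}}$, with equality iff $J_w^{-1}a\,e=V_{\mathrm{eff}}\sigma^{-2}\tilde a\,e$ a.s. Since $\sigma^2>0$, this is equivalent to $\sigma^2 a=C\tilde a$ with $C=J_a V_{\mathrm{eff}}$, which is nonsingular by Assumption~\ref{ass:rank} and the nonsingularity hypothesis. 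Conversely, that proportionality makes the two influence functions coincide.

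When $\sigma^2$ is constant, $\tilde p=r$ and $\tilde a=a$, so the condition holds with $C=\sigma^2 I$. The ``generic heteroskedasticity'' strictness claim means that proportionality fails. It then follows because the difference matrix $\E[(\cdot)(\cdot)\T]$ is nonzero, hence positive along some direction.

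\emph{Main obstacle.} The hardest step is rigorously characterizing the tangent space and showing that the projection is onto functions of $X$ only. One must confirm that the calibration restriction (Assumption~\ref{ass:cal}) imposes no further observable constraint beyond \eqref{eq:plr}, and that the latent structure adds no extra scores. I would argue this by noting that any observed law satisfying \eqref{eq:plr} is generated by some latent model: set $g$ to be a draw from the categorical distribution with parameter $p$, independent of $Y$ given $(p,X)$. With that, the observed model is exactly the partially linear model and Chamberlain's result applies.
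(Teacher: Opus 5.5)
Your efficient-score computation, the attainment argument and the influence-function comparison are correct. The step you single out as the crux, however, is resolved by an argument that fails.

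\textbf{The gap.} If $g\sim\mathrm{Categorical}(p)$ is drawn independently of $Y$ given $(p,X)$, calibration holds. But then $\E[Y\mid g,p,X]=\E[Y\mid p,X]=\mu(X)+\tau^\top p$ does not depend on $g$, so Assumption~\ref{ass:mean} holds only with $\tau=0$. The paper makes exactly this remark when explaining why the construction in Theorem~\ref{thm:spectral}(d) must tilt the class probabilities with a function of the outcome residual.

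The claim you want is also false as stated. By \eqref{eq:het} the latent structure forces $\sigma^{2}(p,X)\ge\tau^\top(\operatorname{diag}(p)-pp^\top)\tau$, where $\tau$ is the coefficient pinned down by Theorem~\ref{thm:ident}. More generally, it requires $Y\mid(p,X)$ to split into a $p$-weighted mixture with component means $\mu(X)+\tau_k$ (with $\tau_{K+1}=0$). A partially linear law whose conditional variance falls below that floor is therefore generated by no latent model.

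So the observed model induced by Assumptions~\ref{ass:mean}--\ref{ass:mom} is a submodel of \eqref{eq:plr}, and without further argument $V_{\mathrm{eff}}$ is only an upper bound on its efficiency bound. What you actually need is local: these restrictions must not bind at the true law, i.e.\ the induced tangent space must contain the efficient influence function $V_{\mathrm{eff}}S^*$. Showing this requires coupling $g$ to $Y$, for example by perturbing the within-class laws of $\varepsilon$ or by a tilt $q_k=p_k+c_k\psi$ as in the proof of Theorem~\ref{thm:spectral}(d). It also requires some nondegeneracy of the within-class noise.

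The paper sidesteps the issue by computing the bound directly in the observed-data moment model \eqref{eq:plr} with $\mu$ unrestricted. If you adopt that convention, drop your last paragraph and the rest stands. Attainment, and the ordering $\Omega\succeq V_{\mathrm{eff}}$ with its equality condition, compare two asymptotic variances and are unaffected either way.

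\textbf{Comparison of routes.} Apart from this, your route genuinely differs from the paper's. The paper does not compute an efficient score. It restricts to instruments $h(p,X)$ with $\E[h\mid X]=0$, which are exactly the moments insensitive to $\mu$, and cites Chamberlain (1987) and Newey (1994) for the fact that the minimum variance over that class is the bound. It obtains $\tilde p$ as the unique recentering that makes $\sigma^{-2}\tilde a$ admissible. It then gets both the optimum and the equality case $h=C\sigma^{-2}\tilde a$ from a single matrix Cauchy--Schwarz inequality.

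You instead project $\sigma^{-2}e\,p$ off $\{\sigma^{-2}e\,b(X)\}$, so $\tilde p$ arises as a $\sigma^{-2}$-weighted $L^2$ projection. You get the comparison from the Pythagorean decomposition around $V_{\mathrm{eff}}S^*$.

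\begin{itemize}
\item Your version makes the ``this is the bound'' step self-contained rather than cited.
\item Your orthogonality check in all four nuisances makes attainment explicit, where the paper defers to Newey's regularity conditions.
\item The paper's version ranks every admissible instrument at once, which is what exposes the naive $\sigma^{-2}a$ as inadmissible.
\end{itemize}

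One restriction is needed in your comparison. The influence function $J_w^{-1}w\,e$, and the identity $\E[w\tilde a^\top]=J_w$ behind its orthogonality to $S^*$, hold only when $\E[w\mid X]=0$. Otherwise $\hat m,\hat r$ contribute first-order terms and $\E[w\tilde a^\top]\neq J_w$. This is precisely the paper's admissibility condition. For $w=a$ it holds, so all of your steps go through.
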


The proof (in the supplement) is a Chamberlain calculation carried out under the constraint that $\mu$ is unknown. An instrument $h(p,X)$ leaves the moment $\E[h\,(Y-\mu(X)-\tau\T p)]$ insensitive to perturbations of $\mu$ only if $\E[h\mid X]=0$, and since $\sigma^{2}$ depends on $p$, the naive choice $h=\sigma^{-2}a$ violates that condition; recentering $p$ at $\tilde p(X)$ restores it, and $\sigma^{-2}\tilde a$ is optimal among the instruments that satisfy it. Heteroskedasticity is intrinsic here: since $u=\tau\T(g-p)+\varepsilon$ and $\E[\varepsilon\mid G,p,X]=0$ kills the cross term, $\sigma^{2}(p,X)=\tau\T(\operatorname{diag}(p)-pp\T)\tau+\Var(\varepsilon\mid p,X)$, so the latent-membership term varies with $p$ even when $\Var(\varepsilon\mid p,X)$ is constant, and $\tilde p$ then departs from $r$. The efficient estimator therefore requires two nuisance functions beyond $(m,r)$, the conditional variance and the precision-weighted conditional mean. The unweighted estimator---used in the rest of the paper---needs neither and is orthogonal by Proposition~\ref{prop:ortho}.

\section{Coarsening the Probability Vector}
\label{sec:atten}

Equation~\eqref{eq:plr} describes the regression the analyst could run. It is common to run a different one. A common practice when probability vectors feed a downstream analysis is to discard them and keep a coarser object: an argmax one-hot label \citep{lee2013}, retained in some variants only when its predicted probability clears a confidence threshold \citep{sohn2020}, a top-$k$ summary, or a rounded version of $p$. This section characterizes the effect of that substitution.

\begin{definition}
A \emph{coarsening} is a measurable map $h=h(p,X)$ taking values in the same coordinates as membership, $h:\Delta_K\times\mathcal{X}\to\Delta_K$ or into the one-hot vertices. Write $a_h=h-\E[h\mid X]$ and let $\hat\tau_h$ solve $n^{-1}\sum_i\hat a_{h,i}(\hat R_i-\hat a_{h,i}\T\hat\tau_h)=0$.
\end{definition}

The requirement that $h$ live in the coordinates of $g$ is what makes the comparison with $\tau$ meaningful: a linear change of basis $h=Bp$ would return coefficients in transformed units rather than a distorted version of the same object. Coarsening in this sense is the operation studied under a different guise in the coarse-data literature \citep{heitjan1991}; here the coarsening is deliberate and its target is a downstream coefficient.

\begin{theorem}[Coarsening operator]
\label{thm:atten}
Under Assumptions~\ref{ass:mean}, \ref{ass:cal} and~\ref{ass:mom}, for any coarsening $h$ with $D_h=\E[a_ha_h\T]$ nonsingular,
\begin{equation}
  \plim\,\hat\tau_{h} \;=\; \mathcal{A}_h\,\tau,
  \qquad
  \mathcal{A}_h \;=\; D_h^{-1}C_h,\quad
  C_h=\E[a_h a\T].
  \label{eq:atten}
\end{equation}
\end{theorem}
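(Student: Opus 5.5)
The plan is to identify the population moment that $\hat\tau_h$ solves, compute its solution using the residual decomposition of Lemma~\ref{lem:resid}, and then argue consistency of the cross-fitted estimator for that solution. By construction $\hat\tau_h=\bigl(n^{-1}\sum_i\hat a_{h,i}\hat a_{h,i}\T\bigr)^{-1}n^{-1}\sum_i\hat a_{h,i}\hat R_i$. Its probability limit is therefore $D_h^{-1}\E[a_hR]$, provided the empirical moments converge to their population versions and $D_h$ is nonsingular. So the substantive step is to show $\E[a_hR]=C_h\tau$.

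For that identity, write $R=\tau\T(g-r(X))+\varepsilon$ by Lemma~\ref{lem:resid}. First, $a_h$ is a function of $(p,X)$ and $\E[\varepsilon\mid G,p,X]=0$, so $\E[a_h\varepsilon]=0$. Next, by the tower property conditioning on $(p,X)$ and Assumption~\ref{ass:cal}, $\E[a_h(g-r(X))\T]=\E[a_h(p-r(X))\T]=\E[a_ha\T]=C_h$. Hence $\E[a_hR]=C_h\tau$ and $\plim\hat\tau_h=D_h^{-1}C_h\tau=\mathcal{A}_h\tau$. To connect this with \eqref{eq:A-intro}, split $a=a_h+u$. Then $C_h=D_h+\E[a_hu\T]$, so $\mathcal{A}_h=I+D_h^{-1}\E[a_hu\T]$. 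Boundedness of $h$ and $p$ together with $\E[Y^4]<\infty$ ensures all these moments are finite.

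The remaining step is consistency of the sample moments when the nuisances $\E[h\mid X]$, $m$ and $r$ are replaced by cross-fitted estimates. This is the only place needing care, and I expect it to be the main (though routine) obstacle. I would assume, as in Theorem~\ref{thm:clt}, that $L_2$-consistent estimators are available for $m$ and for $r_h(X)=\E[h\mid X]$. Expanding $\hat a_{h,i}=a_{h,i}-(\hat r_h-r_h)(X_i)$ and $\hat R_i=R_i-(\hat m-m)(X_i)$, each cross term is bounded by Cauchy--Schwarz, a product of an $L_2$ nuisance error with an $O_P(1)$ empirical second moment. Cross-fitting makes the nuisance independent of the evaluation fold, so these terms are $o_P(1)$. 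The law of large numbers handles the oracle averages. Since only consistency is required, no product-rate or orthogonality condition is needed beyond $o_P(1)$ nuisance errors. Inversion of $\widehat D_h$ is then justified by the continuous mapping theorem at the nonsingular $D_h$.
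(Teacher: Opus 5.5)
Your proposal is correct and follows essentially the same route as the paper. Both arguments use Lemma~\ref{lem:resid} to split $R$, use $(p,X)$-measurability of $a_h$ to make $\E[a_h\varepsilon]=0$, and apply the tower property with Assumption~\ref{ass:cal} to turn $\E[a_h(g-r(X))\T]$ into $C_h$, followed by $C_h=D_h+\E[a_hu\T]$. Your only addition is the cross-fitting consistency paragraph; the paper simply asserts $\hat\tau_h\pto D_h^{-1}\E[a_hR]$, and like your version it tacitly requires $L_2$-consistent estimates of $\E[h\mid X]$ and $m$ beyond the stated assumptions.
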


The proof (in the supplement) has two steps: $\E[a_h\varepsilon]=0$ because $a_h$ is $\sigma(p,X)$-measurable, and conditioning on $(p,X)$ turns $\E[a_h(g-r(X))\T]$ into $\E[a_ha\T]$ by Assumption~\ref{ass:cal}. Nonsingularity of $D_h$ requires the coarsened label to retain residual variation given $X$; it fails, for instance, if the argmax never selects some class.

Writing $u=a-a_h$ for the signal the coarsening discards gives the form we use throughout.

\begin{corollary}[Distortion is covariance with the discarded signal]
\label{cor:decomp}
$\mathcal{A}_h = I + D_h^{-1}\E[a_h u\T]$. Hence $\mathcal{A}_h=I$ if and only if $\E[a_hu\T]=0$, and $\mathcal{A}_h=c\,I$ for a scalar $c$ if and only if $\E[a_hu\T]=(c-1)D_h$.
\end{corollary}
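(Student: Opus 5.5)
The plan is to derive the corollary directly from Theorem~\ref{thm:atten} by splitting the full residual score $a$ into its retained and discarded parts. By definition $u=a-a_h$, so $a=a_h+u$ and
\[
C_h=\E[a_h a\T]=\E[a_h a_h\T]+\E[a_h u\T]=D_h+\E[a_h u\T].
\]
Because $D_h$ is nonsingular (the standing hypothesis of Theorem~\ref{thm:atten}), multiplying on the left by $D_h^{-1}$ gives
\[
\mathcal{A}_h=D_h^{-1}C_h=I+D_h^{-1}\E[a_h u\T].
\]
This is the displayed identity. It uses only linearity of expectation, and the expectations are finite because $p$ and $h$ take values in the bounded set $\Delta_K$ (Assumption~\ref{ass:mom}).

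For the first equivalence, suppose $\E[a_hu\T]=0$. Then the identity immediately gives $\mathcal{A}_h=I$. Conversely, suppose $\mathcal{A}_h=I$. Then $D_h^{-1}\E[a_hu\T]=0$, and multiplying on the left by $D_h$ yields $\E[a_hu\T]=0$.

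The scalar case is argued the same way. The condition $\mathcal{A}_h=cI$ holds if and only if $D_h^{-1}\E[a_hu\T]=(c-1)I$. Left-multiplying by the invertible $D_h$ turns this into $\E[a_hu\T]=(c-1)D_h$, and the step reverses in the other direction.

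There is no real obstacle in this argument. The only point to be careful about is that both equivalences need the invertibility of $D_h$, so that left multiplication by $D_h$ or $D_h^{-1}$ can be undone; this is exactly the hypothesis already in force in Theorem~\ref{thm:atten}. One could add the remark that $u$ is $\sigma(p,X)$-measurable with $\E[u\mid X]=0$, so $\E[a_hu\T]$ is a genuine cross-covariance. That remark is interpretive and is not needed for the proof.
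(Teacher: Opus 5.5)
Your proof is correct and follows the paper's own argument: the paper likewise writes $C_h=\E[a_h(a_h+u)\T]=D_h+\E[a_hu\T]$ and left-multiplies by $D_h^{-1}$ to obtain the identity. The equivalences then follow from invertibility of $D_h$. You also spell out the scalar case $\mathcal{A}_h=cI$, which the paper leaves implicit.
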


Coarsening leaves the coefficient undistorted exactly when what it throws away is uncorrelated with what it keeps. That condition is restrictive. The second part of the corollary says that even proportional shrinkage is non-generic: it requires a covariance condition that the designs and audits below never satisfy.

\begin{corollary}[Directional retention]
\label{cor:aniso}
Let $s_j=v_j\T \mathcal{A}_h v_j$ be the retention coefficient along the $j$th eigen-direction of $\Msf$. Coarsening acts uniformly on $\tau$ only when $\mathcal{A}_h=cI$, equivalently when $\E[a_hu\T]=(c-1)D_h$; outside that case it is direction-specific, and the $s_j$ need neither coincide across $j$ nor lie in $[0,1]$.
\end{corollary}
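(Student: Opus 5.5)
The plan is to reduce Corollary~\ref{cor:aniso} to Corollary~\ref{cor:decomp} and then exhibit the failure of ordering and of the unit-interval bound by explicit examples. The claim has three parts. First, uniform action on $\tau$ means $\mathcal{A}_h\tau=c\tau$ for every $\tau\in\R^K$. Second, that is equivalent to the covariance condition. Third, outside that case the retention coefficients $s_j$ carry no structural restriction.

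\textbf{Step 1: uniformity is equivalent to $\mathcal{A}_h=cI$.} I read ``acts uniformly on $\tau$'' as: there is a scalar $c$ with $\mathcal{A}_h\tau=c\tau$ for every $\tau$. Since $\tau\in\R^K$ is unrestricted by Assumptions~\ref{ass:mean}--\ref{ass:mom}, we may take $\tau$ to run over the standard basis. Then $\mathcal{A}_h e_k=c\,e_k$ for each $k$, so $\mathcal{A}_h=cI$. The converse is immediate.

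If instead uniformity means only that every $\tau$ is an eigenvector, possibly with its own scalar, I reduce to the previous case. A matrix for which every vector is an eigenvector is a scalar multiple of the identity: apply it to $e_i$, $e_j$ and $e_i+e_j$ and compare coefficients. By Corollary~\ref{cor:decomp}, $\mathcal{A}_h=cI$ is equivalent to $\E[a_hu\T]=(c-1)D_h$, which gives the stated equivalence. In that case $s_j=v_j\T(cI)v_j=c$ for all $j$, using $\|v_j\|=1$.

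\textbf{Step 2: direction-specific action otherwise.} If $\mathcal{A}_h\neq cI$ for every $c$, then Step~1 gives some $\tau$ with $\mathcal{A}_h\tau\notin\mathrm{span}\{\tau\}$. Hence the distortion depends on the direction of $\tau$.

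It remains to show that the $s_j$ need neither coincide nor lie in $[0,1]$. This is a ``need not'' statement, so a concrete design suffices, and this is the main obstacle: the example must satisfy all the assumptions while producing the desired $s_j$. I would take $X$ degenerate, so that $a=p-\E p$ and $a_h=h-\E h$, and let $K=2$ with $p$ supported on finitely many points of $\Delta_2$. Calibration then holds by construction, because we can define $g\mid p\sim\mathrm{Multinomial}(p)$.

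\emph{Non-coinciding $s_j$.} Take an argmax coarsening with $p$ placed symmetrically near two vertices but with unequal spreads. Then $D_h$ and $C_h$ are computable $2\times2$ matrices, and I expect unequal diagonal entries of $v_j\T\mathcal{A}_hv_j$.

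\emph{$s_j$ outside $[0,1]$.} For $s_j<0$, use a rounding map that reverses order along one coordinate: for instance, $h$ sends points with moderate $p_1$ to the vertex $e_2$ and points with extreme $p_1$ to $e_1$. This makes $\E[a_{h,1}a_1]<0$, so $C_h$ has a negative diagonal entry while $D_h\succ0$. For $s_j>1$, use $h$ that maps points with only slightly elevated $p_1$ to the vertex $e_1$, so that $\mathrm{Var}(h_1)$ is small relative to $\Cov(h_1,p_1)$. Finally, verify the entries numerically with a two- or three-point distribution and confirm that $D_h$ is nonsingular.
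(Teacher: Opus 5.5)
Your Step~1 is correct, and it is essentially all the paper relies on. The corollary restates Corollary~\ref{cor:decomp}, that is, the identity $C_h=D_h+\E[a_hu\T]$ from the proof of Theorem~\ref{thm:atten}, and $\mathcal{A}_h=cI$ forces $s_j=c$. For the ``need not'' clauses the paper gives no construction: it points to designs and audits where the $s_j$ differ, and on the range it only declines to assert one.

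Your Step~2 aims higher, but as written it produces no verified witness, and two of the heuristics behind it are wrong. First, the $s_j>1$ mechanism runs backwards. For a vertex-valued coordinate, $\Cov(h_1,p_1\mid X)=\Var(h_1\mid X)\,\bigl(\E[p_1\mid h_1=1,X]-\E[p_1\mid h_1=0,X]\bigr)$, so $\E[a_{h,1}a_1]\le\E[a_{h,1}^2]$ for every one-hot coarsening. Sending barely elevated $p_1$ to $e_1$ shrinks that conditional-mean gap, which makes $\Var(h_1)$ large, not small, relative to the covariance. Second, a negative diagonal entry of $C_h$ with $D_h\succ0$ does not give $s_j<0$. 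The quantity $s_j=v_j\T D_h^{-1}C_hv_j$ mixes coordinates through $D_h^{-1}$ and is evaluated along eigenvectors of $\Msf$, not along coordinate axes. Your ``moderate $\mapsto e_2$, extreme $\mapsto e_1$'' map also need not make $\E[a_{h,1}a_1]$ negative if both tails count as extreme. Finally, ``I expect unequal'' $s_j$ for argmax is not yet an argument.

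The gap closes with a design you can compute exactly.
\begin{itemize}
\item Take $K=2$, $X$ degenerate, $G\mid p\sim\mathrm{Categorical}(p)$, and $p$ uniform on three affinely independent points.
\item Any vertex-valued $h$ that is injective on the support agrees there with an affine map $p\mapsto h_0+Bp$. Hence $a_h=Ba$, $D_h=B\Msf B\T$, $C_h=B\Msf$, and $\mathcal{A}_h=B^{-\top}$.
\item Concretely, the $k$th row of $\mathcal{A}_h$ is $(q^{(k)}-q^{(0)})\T$, where $q^{(k)}$ is the support point sent to $e_k$ and $q^{(0)}$ is the one sent to the reference vertex.
\end{itemize}
With support $(0.6,0.2)$, $(0.3,0.4)$, $(0.2,0.2)$ one finds $\Msf=\tfrac{1}{2700}\begin{pmatrix}78&-12\\-12&24\end{pmatrix}$, with $v_1\approx(0.208,0.978)$ and $v_2\approx(0.978,-0.208)$.
\begin{itemize}
\item Argmax gives $\mathcal{A}_h=\begin{pmatrix}0.4&0\\0.1&0.2\end{pmatrix}$ and $(s_1,s_2)\approx(0.23,0.37)$, so the $s_j$ do not coincide.
\item The order-reversing assignment $(0.6,0.2)\mapsto$ reference, $(0.2,0.2)\mapsto e_1$, $(0.3,0.4)\mapsto e_2$ gives $\mathcal{A}_h=\begin{pmatrix}-0.4&0\\-0.3&0.2\end{pmatrix}$ and $(s_1,s_2)\approx(0.11,-0.31)$, so one $s_j$ lies outside $[0,1]$.
\end{itemize}
All four assumptions hold and $D_h\succ0$, so this single support witnesses both clauses by exact computation. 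One value outside $[0,1]$ suffices, so you can drop the $s_j>1$ case.
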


Coarsening therefore need not act on the effect vector uniformly. Empirically, in every design and audit we examine, the $s_j$ of a nontrivial coarsening lie in $(0,1)$ and differ across directions, so coarsening attenuates some contrasts much more than others. The corollary does not assert this in general, and the diagnostic below reports the $s_j$ rather than presuming their range or their ordering.

$D_h$ and $C_h$ are functionals of the observed $(p,X)$ alone: $\mathcal{A}_h$ can be estimated without labels and without invoking Assumption~\ref{ass:cal}, which enters only in reading $\mathcal{A}_h$ as the distortion of $\tau$. When $K=1$, \eqref{eq:atten} reduces to the scalar projection slope $\kappa=\E[a_{h}a]/\E[a_{h}^2]$, the familiar dilution factor of a binary pseudo-label. Under a calibrated probability and monotone thresholding it falls in $(0,1)$ \citeparenKbs; the operator result does not guarantee this in general. \citet{chen2019} analyze the same choice for a binary group-mean disparity and decompose the bias of the thresholded estimator, which for that estimand may inflate as well as dilute; the target here is a structural coefficient, and \eqref{eq:atten} collects the whole distortion into the single matrix $\mathcal{A}_h$. With several classes the distortion is directional rather than a single scalar factor.

\section{Coarsening and the Validity of Downstream Inference}
\label{sec:coverage}

Theorem~\ref{thm:atten} concerns the point estimate. We turn to the interval an analyst reports alongside it, which is what enters a decision.

Consider the analyst who computes $\hat\tau_h$, forms the sandwich as though the moment were correctly specified,
\begin{equation}
  \widehat\Omega_h=\widehat D_h^{-1}\widehat\Sigma_h\widehat D_h^{-1},
  \qquad
  \widehat\Sigma_h=n^{-1}\textstyle\sum_i \hat u_{h,i}^{2}\,\hat a_{h,i}\hat a_{h,i}\T,
  \quad \hat u_h=\hat R-\hat a_h\T\hat\tau_h,
  \label{eq:naive-sandwich}
\end{equation}
and reports $v\T\hat\tau_h\pm z_{1-\alpha/2}\sqrt{v\T\widehat\Omega_h v/n}$ for a direction $v$. Write the directional coarsening bias and variance as $b_v=v\T(\mathcal{A}_h-I)\tau$ and $\sigma^2_{h,v}=v\T\Omega_h v$, and define the standardized bias
\[
  \nu_{v,n}=\frac{\sqrt n\,b_v}{\sigma_{h,v}}.
\]

\begin{proposition}[Coverage under coarsening]
\label{prop:coverage}
Under the conditions of Theorem~\ref{thm:clt} applied to the moment $a_h(R-a_h\T\vartheta)$, whose population solution is $\mathcal{A}_h\tau$, along sequences with $b_v=b/\sqrt n$ for fixed $b$,
\begin{equation}
  \Pbb\bigl(v\T\tau \in \mathrm{CI}\bigr)\;\longrightarrow\;
  \Phi(z_{1-\alpha/2}-\nu_v)-\Phi(-z_{1-\alpha/2}-\nu_v),
  \qquad \nu_v=b/\sigma_{h,v}.
  \label{eq:coverage}
\end{equation}
For fixed $b_v\neq0$, $|\nu_{v,n}|\to\infty$ and the coverage tends to zero.
\end{proposition}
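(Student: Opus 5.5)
The plan is to treat $\hat\tau_h$ as a correctly specified $Z$-estimator for its own pseudo-true value $\vartheta_h:=\mathcal{A}_h\tau$, and then read coverage for $v\T\tau$ off the resulting normal limit.

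\textbf{Step 1: pseudo-true value.} By Theorem~\ref{thm:atten}, $\E[a_h(R-a_h\T\vartheta_h)]=C_h\tau-D_h\mathcal{A}_h\tau=0$. So the moment $a_h(R-a_h\T\vartheta)$ is exactly solved at $\vartheta_h$, with Jacobian $-D_h$, which is nonsingular by assumption.

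\textbf{Step 2: orthogonality of the coarsened moment.} I do not expect Proposition~\ref{prop:ortho} to carry over verbatim, because $R-a_h\T\vartheta_h$ is not conditionally mean zero given $(p,X)$. Its Gateaux derivatives must be checked directly.

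\emph{Direction of $m$.} The derivative is $-\E[a_h\,\delta(X)]=0$, since $\E[a_h\mid X]=0$.

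\emph{Direction of $\E[h\mid X]$.} The derivative is $-\E[\delta(X)(R-a_h\T\vartheta_h)]+\E[a_h\delta(X)\T\vartheta_h]$. The second term vanishes for the same reason. The first vanishes because $\E[R\mid X]=0$ and $\E[a_h\mid X]=0$.

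So the moment is Neyman-orthogonal in both nuisances. Under the stated product-rate condition with cross-fitting, the argument of Theorem~\ref{thm:clt} then gives
\[
\sqrt n(\hat\tau_h-\vartheta_h)\dto\N(0,\Omega_h),\qquad \Omega_h=D_h^{-1}\Sigma_hD_h^{-1},\quad \Sigma_h=\E[u_h^2a_ha_h\T],
\]
where $u_h=R-a_h\T\vartheta_h$ is the true pseudo-residual. The sandwich \eqref{eq:naive-sandwich} is the robust sandwich for this misspecified-but-solved moment, so $\widehat\Omega_h\pto\Omega_h$ by the same law-of-large-numbers argument. This holds without appeal to any correct-specification structure, because $\Sigma_h$ is defined with $u_h$ rather than with a conditionally mean-zero error.

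\textbf{Step 3: drifting sequences.} Take a triangular array in which $\tau=\tau_n$ (or the law of $h$) varies so that $b_v=v\T(\vartheta_h-\tau)=b/\sqrt n$, while $D_h$ and $\Omega_h$ converge. The CLT must then hold uniformly along the array. I expect this to be the \textbf{main obstacle}. The route I would try is a Lindeberg CLT for triangular arrays, using $\E[Y^4]<\infty$ uniformly and bounded $a_h$, together with nuisance rates that hold uniformly. This is essentially the argument of Theorem~\ref{thm:clt} rerun under $P_n$; I would state it as a regularity condition.

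\textbf{Step 4: coverage.} Write
\[
\frac{\sqrt n\,v\T(\hat\tau_h-\tau)}{\sqrt{v\T\widehat\Omega_hv}}=\frac{\sqrt n\,v\T(\hat\tau_h-\vartheta_h)}{\sqrt{v\T\widehat\Omega_hv}}+\frac{\sqrt n\,b_v}{\sqrt{v\T\widehat\Omega_hv}}.
\]
By Step 2 and Slutsky's lemma, the first term converges in distribution to $Z\sim\N(0,1)$, and the second converges in probability to $\nu_v=b/\sigma_{h,v}$ (assuming $\sigma_{h,v}>0$). The event $v\T\tau\in\mathrm{CI}$ is $\{|Z+\nu_v|\le z_{1-\alpha/2}\}$ in the limit. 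Since the normal law is continuous, its probability is $\Phi(z_{1-\alpha/2}-\nu_v)-\Phi(-z_{1-\alpha/2}-\nu_v)$.

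\textbf{Fixed bias.} For fixed $b_v\ne0$, the first term is $O_P(1)$, while $\sqrt n\,b_v/\sqrt{v\T\widehat\Omega_hv}\to\pm\infty$ in probability. The statistic therefore leaves any bounded interval with probability tending to one, so coverage tends to zero.
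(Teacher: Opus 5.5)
Your proposal is correct and follows the paper's proof. You obtain the pseudo-true value $\mathcal{A}_h\tau$ from Theorem~\ref{thm:atten}, rerun the argument of Theorem~\ref{thm:clt} on the coarsened moment to get $\sqrt n(v\T\hat\tau_h-v\T\mathcal{A}_h\tau)\dto\N(0,\sigma^2_{h,v})$ and $\widehat\Omega_h\pto\Omega_h$, and then split the studentized statistic into a centered part plus $\sqrt n\,b_v$ over the standard error. Your explicit orthogonality check and your triangular-array condition for the drifting case $b_v=b/\sqrt n$ add care the paper omits; note that the paper's orthogonality proof uses only $\E[u\mid X]=0$ and $\E[a\mid X]=0$, so it actually transfers verbatim.
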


The proof (in the supplement) is the studentized form of Theorem~\ref{thm:clt} with the probability limit displaced from $\tau$ to $\mathcal{A}_h\tau$. The formula itself is the power function of a $z$-test.

\begin{remark}[Greater apparent precision can worsen coverage]
\label{rem:precise}
The magnitude $|\nu_{v,n}|=\sqrt n\,|b_v|/\sigma_{h,v}$ increases in $1/\sigma_{h,v}$. For a given bias, coverage therefore degrades as the reported standard error shrinks. Coarsened labels are more extreme than the probabilities they replace, and the reported variance $v\T\Omega_hv$, with $\Omega_h=D_h^{-1}\Sigma_hD_h^{-1}$, can be smaller than $v\T\Omega v$. It is smaller along the reported direction for every nontrivial coarsening in Table~\ref{tab:e11}: the reported interval is narrower than the uncoarsened one, around a displaced center.
\end{remark}

Because $\mathcal{A}_h$ and $\Omega_h$ are identified from the observed data, only $b_v$ additionally requires $\tau$, and the uncoarsened estimator supplies it. The asymptotic approximation to its coverage can therefore be computed before the interval is reported:
\begin{equation}
  \hat\nu_v=\frac{v\T(\widehat{\mathcal{A}}_h-I)\hat\tau}{\sqrt{v\T\widehat\Omega_h v/n}},
  \qquad
  \widehat{\mathrm{cov}}=\Phi(z_{1-\alpha/2}-\hat\nu_v)-\Phi(-z_{1-\alpha/2}-\hat\nu_v).
  \label{eq:cov-hat}
\end{equation}

\citet{battaglia2024} study machine-learning-generated regressors and show that treating them as observed data can yield biased estimates and invalid inference. There the distortion comes from estimation error in the generated regressor and correction requires auxiliary information. Here the probability vector is observed exactly and the distortion is a deterministic function of it, which is why \eqref{eq:cov-hat} needs no auxiliary information. Prediction-powered inference \citep{angelopoulos2023} restores validity using gold labels on part of the analysis sample. The diagnostic above applies when no such labels exist and assesses coverage without correcting the estimator. \citet{li2026} requires no assumptions on the upstream classifier, treating the proxy as a link to an auxiliary validation sample and bounding the parameter by optimal transport. That paper also observes that a retained probability vector is more informative than a categorical label, which is the statement \eqref{eq:atten} makes quantitative---again without a validation sample.

\section{Simulation Studies}
\label{sec:sim}

We report here the three experiments most central to the theory, labeled E6, E11 and E12 in a numbering shared with the supplement. The data-generating process has $X\in\R^3$ standard normal and $W=(X,S)$, where $S\in\R^2$ is an auxiliary signal outside the downstream control set. The probability vector $p$ is drawn from a Dirichlet centred at $\mathrm{softmax}(B_r\T W)$, with dispersion $\sigma_u$ controlling $\Var(p\mid X)$ and hence the spectrum of $\Msf$, and the latent class is $G\mid p\sim\mathrm{Categorical}(p)$ on $\{1,\dots,K+1\}$ with $K=3$. The outcome is $Y=X\T b_m+g\T\tau+\N(0,1)$ with $\tau=(1.0,0.7,-0.4)\T$. Nuisances $m,r$ are cross-fitted (5 folds) by degree-2 polynomial regression. Monte Carlo replicate counts are stated with each experiment.

\subsection{Inference for the uncoarsened estimator (E6)}
Componentwise and joint Wald coverage for $\hat\tau$ itself lie in $[0.938,0.960]$ and $[0.946,0.958]$ at nominal $95\%$ over $1000$ replicates as $n$ grows from $500$ to $5000$, consistent with Theorem~\ref{thm:clt}. The table and the normal QQ-plot can be found in the supplement. Under the same data-generating process, Experiment E11 below gives coverage of $0.95$ without coarsening and $0.01$ after argmax coarsening along the same contrast.

\subsection{The coarsening operator (E11)}
Five coarsenings of the same probability vector are compared at $n=8000$: the identity, rounding $p$ to a $0.20$ grid and renormalizing, a top-2 renormalization, a confidence threshold at $0.60$ below which the unit receives the zero vector rather than being dropped, and the argmax one-hot. For each we estimate $\mathcal{A}_h$ from $(p,X)$ alone, run the coarsened estimator, and record the interval it reports along the weakest eigen-direction $v_1$ of $\widehat\Msf$. For none of the four nontrivial coarsenings in Table~\ref{tab:e11} is the operator a multiple of the identity: its diagonal falls as the coarsening discards more, and its off-diagonal entries are non-zero throughout, so the distortion is not proportional across contrasts. The reported interval narrows as the coarsening becomes more severe, while its coverage falls from nominal to almost nothing. The argmax interval is $39\%$ shorter than the one the uncoarsened regression would have produced and covers $1\%$ of the time.

\begin{table}[t]
\centering
\spacingset{1}
\caption{Experiment E11: five coarsenings of the same probability vector ($n=8000$, $K=3$,
$400$ replicates). $\mathcal{A}_h$ is estimated from $(p,X)$ alone. SE and coverage
refer to the nominal-$95\%$ Wald interval along the weakest eigen-direction of
$\widehat\Msf$.}
\label{tab:e11}
\begin{threeparttable}
\begin{tabular}{lccrr}
\toprule
Coarsening $h$ & $\operatorname{diag}(\mathcal{A}_h)$ & $\max|\text{off-diag}|$ &
mean SE & coverage\\
\midrule
identity ($h=p$)        & $1.00\ \ 1.00\ \ 1.00$ & $0.000$ & $0.0405$ & $0.952$\\
round to $0.20$ grid    & $0.89\ \ 0.89\ \ 0.82$ & $0.029$ & $0.0377$ & $0.910$\\
top-2 renormalized      & $0.73\ \ 0.75\ \ 0.57$ & $0.059$ & $0.0336$ & $0.675$\\
threshold $0.60$        & $0.47\ \ 0.48\ \ 0.35$ & $0.192$ & $0.0332$ & $0.533$\\
argmax one-hot          & $0.41\ \ 0.42\ \ 0.30$ & $0.096$ & $0.0246$ & $0.010$\\
\bottomrule
\end{tabular}
\end{threeparttable}
\end{table}

\subsection{The coverage formula is calibrated (E12)}
Proposition~\ref{prop:coverage} predicts coverage from the noncentrality index $\nu_v$. Grouping $600$ replicates at $n=4000$ by $|\nu_v|$ and comparing realized with predicted coverage within groups, the formula tracks the realized value to within $0.02$ across the whole range, from nominal coverage down to $0.003$. Substituting $\hat\tau$ for $\tau$ in $\hat\nu_v$ widens these gaps, most where the distortion is most severe, so \eqref{eq:cov-hat} is best read as an accurate ordering and an approximate level. A sweep over the dispersion of $p$ and over the overlap between the classifier's features and the controls leaves the ordering unchanged. In every configuration examined the argmax interval covers between $1\%$ and $6\%$. The grouped comparison and the sweep are tabulated in the supplement.

The supplement reports four further experiments. Argmax coarsening acts anisotropically, with the retention coefficient increasing in the eigenvalue (Corollary~\ref{cor:aniso}). Setting $K=1$ recovers the scalar $(\Vs)^{-2}$ sandwich law. The efficient instrument of Theorem~\ref{thm:eff} cuts the variance threefold, while the naive weighting that omits its recentering reports standard errors that are too small. While $n\lambda_{v,n}\to\infty$, inference remains valid at the $\sqrt{n\lambda_{v,n}}$ rate along a collapsing direction.

\section{A Disparity Audit with a Proxied Protected Attribute}
\label{sec:adult}

Auditing outcome disparities across groups when the protected attribute is unobserved and must be proxied \citep{kallus2022,chen2019} is naturally multi-class. We use the standard training file of UCI Adult \citep{kohavi1996} ($n=30{,}162$ after listwise deletion). The latent group is race in $K+1=4$ classes (reference \textit{White}; classes \textit{Black}, \textit{Asian--Pacific}, \textit{Other}, with the original \textit{Amer-Indian-Eskimo} category included in \textit{Other}). The controls $X$ are age, education, and sex. The proxy is a multinomial classifier on a richer feature set $W\supsetneq X$ (occupation, workclass, relationship, native region, capital variables), cross-fitted over $5$ folds to an out-of-fold probability vector $p$. A controlled validation on this real score spectrum, in which the estimator recovers a known $\tau$ along every eigen-direction, is reported in the supplement---here we run the genuine audit.

The outcome $Y$ is the high-income indicator, the attribute is treated as unobserved, and the estimator uses only the proxy $p$. Because race is in fact recorded here, we also compute the full-information benchmark from the true labels. The hard estimator uses the argmax class. There is no guarantee the assumptions hold---an off-the-shelf race proxy need not be conditionally calibrated, and its features may carry direct income channels.

The estimated spectrum of $\widehat\Msf$ is anisotropic, with eigenvalues $(0.0006,0.0069,0.0177)$ and condition number $31$: one disparity direction carries substantially more residual variation than the other two. Table~\ref{tab:adult} reports, in the eigenbasis of $\widehat\Msf$, the retention coefficient under argmax coarsening, the estimate from the proxy alone, and the full-information benchmark. The retention coefficients range from $s_1=0.46$ to $s_3=0.90$. On real data, coarsening removes more than half the signal in one direction and about a tenth in another. The identity behind Corollary~\ref{cor:aniso} requires no calibration assumption, since $\widehat{\mathcal{A}}_h$ is computed from $(p,X)$ alone.

The proxy-based estimate matches the benchmark in the best-identified direction ($0.034\pm0.017$ against $0.028$). In the two weak directions it departs sharply, as eigenvalues an order of magnitude smaller make the estimator more sensitive to violations of the identifying assumptions. The supplement reports clean calibration batteries but a direct-channel check giving $\Delta R^2=0.135$, which points to the exclusion component rather than calibration. We report this as a cautionary instance, not a substantive claim about disparities.

\begin{table}[t]
\centering
\spacingset{1}
\caption{UCI Adult disparity audit, in the eigenbasis of $\widehat\Msf$
(weakest to strongest identified direction). $s_j$: retention coefficient under
hard-labeling; soft: proxy-based estimate of $v_j\T\tau$ (sandwich SE);
benchmark: same projection from the observed labels.}
\label{tab:adult}
\begin{threeparttable}
\begin{tabular}{lrrrr}
\toprule
Direction & $\lambda_j(\widehat\Msf)$ & $s_j$ &
soft $v_j\T\hat\tau$ (SE) & benchmark $v_j\T\hat\tau$\\
\midrule
$v_1$ (weakest)   & 0.0006 & 0.46 & $-0.584\ (0.080)$ & $-0.079$\\
$v_2$             & 0.0069 & 0.72 & $-0.472\ (0.025)$ & $-0.065$\\
$v_3$ (strongest) & 0.0177 & 0.90 & $\phantom{-}0.034\ (0.017)$ & $\phantom{-}0.028$\\
\bottomrule
\end{tabular}
\end{threeparttable}
\end{table}

\section{Diagnosing Assumption Failure by Recalibration}
\label{sec:recal}

The framework rests on two substantive assumptions: conditional calibration (Assumption~\ref{ass:cal}) and the exclusion component of Assumption~\ref{ass:mean}---the requirement that the score carry no direct outcome channel given $(G,X)$. In both the Adult audit above and a real-data land-cover audit on UCI Forest CoverType (a distance-to-water outcome with a random-forest cover-type classifier; details in the supplement), at least one of these assumptions is violated. A practitioner needs to know which assumption is more problematic, because recalibration on a labeled subsample can address calibration failure but not an exclusion violation.

With a labeled subsample the two assumptions can be probed separately. Split the labeled data in half. On the first half, fit the recalibration map $\tilde p=\widehat{\E}[g\mid p,X]$ (here, a multinomial logistic regression on the log-odds of $p$, the controls, and their interactions). On the held-out half, re-run the audit with $\tilde p$ in place of $p$. Recalibration targets Assumption~\ref{ass:cal}. Were the population map known exactly, $\E[g\mid\tilde p,X]=\tilde p$ would hold by the tower property, and the out-of-sample calibration battery assesses how closely the estimated map achieves this. It does not target the exclusion component. If $p$ carries a direct outcome channel, so may any function of $(p,X)$. If Assumption~\ref{ass:mean} holds for $(p,X)$ it also holds for $(\tilde p,X)$, since $\sigma(\tilde p,X)\subseteq\sigma(p,X)$. Taken together with the calibration battery and the direct-channel check, the change in the audit under recalibration is therefore informative about which failure dominates.

If the structural mean carries a direct channel, $\E[Y\mid G,p,X]=\mu(X)+\tau\T g+d(p,X)$ for a direct channel $d$, then $\E[aR]=\Msf\tau+\E[a\,d]$ and $\plim\hat\tau=\tau+\Msf^{-1}\E[a\,d]$. In the eigenbasis of $\Msf$, the direct-channel term is scaled by the inverse eigenvalues, making weak directions particularly sensitive to the violation. In our audits recalibration compresses the probability vector toward its conditional expectation and shrinks the spectrum, without in general eliminating a direct outcome channel. Recalibration can therefore amplify bias from an exclusion violation.

\begin{proposition}[Sharp sensitivity to miscalibration]
\label{prop:cal}
Let Assumptions~\ref{ass:mean}, \ref{ass:rank} and~\ref{ass:mom} hold, and let $\E[g\mid p,X]=p+\eta(p,X)$ replace Assumption~\ref{ass:cal}, with $\mathcal{H}_\delta$ denoting the class of errors with $\|\eta\|_\infty\le\delta$ almost surely for which $p+\eta$ remains a valid class-probability vector. Then (a) $\plim\hat\tau=\tau+\Msf^{-1}N\tau$ with $N=\E[a\,\eta\T]$; the bias is linear in $\tau$, so if $\tau=0$ miscalibration alone cannot produce a nonzero coefficient vector, although a single null component can still be distorted by the mixing. (b) For every unit vector $v$,
\begin{equation}
  \sup_{\eta\in\mathcal{H}_\delta}
  \bigl|v\T(\plim\hat\tau-\tau)\bigr|
  \;\le\;\delta\,\|\tau\|_1\,\E\bigl|v\T\Msf^{-1}a\bigr|,
  \label{eq:cal-bound}
\end{equation}
with equality whenever every class probability, including the reference, is at least $K\delta$ almost surely. Along the eigen-direction $v_j$ the bound equals $\delta\|\tau\|_1\E|v_j\T a|/\lambda_j\le\delta\|\tau\|_1\lambda_j^{-1/2}$.
\end{proposition}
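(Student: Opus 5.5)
Part~(a) follows the route of Theorem~\ref{thm:ident}, with Assumption~\ref{ass:cal} replaced. Lemma~\ref{lem:resid} used calibration only to identify $\E[g\mid p,X]$, so we first re-derive its first part. Assumption~\ref{ass:mean} gives $R=\tau\T(g-r(X))+\mu(X)-m(X)+\varepsilon$ with $\E[\varepsilon\mid G,p,X]=0$. Multiplying by $a$ and taking expectations, the term $\mu(X)-m(X)$ drops because $\E[a\mid X]=0$, and $\E[a\varepsilon]=0$ because $a$ is $\sigma(p,X)$-measurable. Conditioning on $(p,X)$ then gives
\[
\E[a(g-r(X))\T]=\E[a(p+\eta-r(X))\T]=\Msf+N .
\]
Hence $\E[aR]=(\Msf+N)\tau$. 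The estimator \eqref{eq:est} converges to $\Msf^{-1}\E[aR]$ by the same cross-fitting argument as in Theorem~\ref{thm:clt}; Proposition~\ref{prop:ortho} is not needed for consistency, only rate conditions. This yields $\tau+\Msf^{-1}N\tau$, and linearity in $\tau$ is immediate.

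For the upper bound in (b), we write
\[
v\T\Msf^{-1}N\tau=\E\bigl[(v\T\Msf^{-1}a)\,(\eta\T\tau)\bigr]
\]
and apply $|\eta\T\tau|\le\|\eta\|_\infty\|\tau\|_1\le\delta\|\tau\|_1$ pointwise. This gives \eqref{eq:cal-bound}. For the eigen-direction statement, $v_j\T\Msf^{-1}=\lambda_j^{-1}v_j\T$. By Cauchy--Schwarz, $\E|v_j\T a|\le(\E[(v_j\T a)^2])^{1/2}=\lambda_j^{1/2}$, which gives the $\lambda_j^{-1/2}$ form.

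Sharpness is the main obstacle. The natural extremal choice is
\[
\eta^*(p,X)=\delta\,\sgn(v\T\Msf^{-1}a)\,\sgn(\tau),
\]
taken componentwise on the first $K$ coordinates, so that $\eta^{*\top}\tau=\delta\|\tau\|_1\sgn(v\T\Msf^{-1}a)$ and the bound is attained. We must check that $p+\eta^*$ is a valid probability vector. Each coordinate moves by at most $\delta$, so $p_k+\eta_k\ge0$ needs $p_k\ge\delta$. The implied perturbation of the reference probability is $-\mathbf 1\T\eta^*$, of size at most $K\delta$, so $p_{K+1}-\mathbf 1\T\eta^*\ge0$ needs $p_{K+1}\ge K\delta$. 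Both hold under the stated floor of $K\delta$ on every class probability. The remaining upper constraints, $p_k+\eta_k\le1$, follow because the other classes, including the reference, hold at least $K\delta$ of mass.

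Two points need care here. First, if $\E[g\mid p,X]$ is to be a genuine conditional expectation, $\eta^*$ must be $\sigma(p,X)$-measurable, which it is, since $a=p-r(X)$. Second, the construction must not change the observed data; we need only exhibit the model, and the plim formula in (a) does not depend on $\eta$ otherwise. On the set where $v\T\Msf^{-1}a=0$ we set $\eta^*=0$; this does not affect equality.
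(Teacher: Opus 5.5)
Your proof is correct and follows the paper's argument. The steps match: you compute $\E[aR]=(\Msf+N)\tau$ by conditioning on $(p,X)$, bound $|\eta\T\tau|\le\delta\|\tau\|_1$ pointwise, use the same sign-matched extremal $\eta^*$ (up to an irrelevant global sign) with the $K\delta$ floor guaranteeing admissibility, and apply Cauchy--Schwarz for the eigen-direction form. One small slip: the $X$-measurable remainder in your expansion of $R$ should be $\mu(X)+\tau\T r(X)-m(X)$ rather than $\mu(X)-m(X)$, which is harmless because any $X$-measurable term is annihilated by $\E[a\mid X]=0$.
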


Proposition~\ref{prop:cal} (proved in the supplement) is the quantitative form of the warning after Assumption~\ref{ass:cal}: the sharp bound is $\delta\|\tau\|_1\,\E|v_j\T\Msf^{-1}a|$ and is no larger than $\delta\|\tau\|_1\lambda_j^{-1/2}$, so the envelope over directions widens as $\lambda_j^{-1/2}$ and weakly identified directions can be the most fragile to calibration drift. The sharp bound itself need not be monotone in the eigenvalue, since $\E|v_j\T\Msf^{-1}a|$ varies with the direction.

The proposition specializes the approximate-moment-condition framework of \citet{armstrong2021} to this model. In that framework, a moment restriction is allowed to fail within a budget, and the resulting identified set and confidence intervals are derived in general. Here the bound has a closed form, is attained when every class probability is at least $K\delta$, and provides an eigenvalue-indexed envelope for the sensitivity of different contrasts. For inference on the resulting set rather than the point, that general machinery applies directly. The eigenvalue-indexed reading also informs the protocol below: recalibration can reduce the calibration error represented by $\delta$ but compresses the spectrum, so its net effect is direction-specific.

Table~S5 in the supplement reports the experiment on both audits, together with an independent check of the exclusion component (the partial $R^2$ of the classifier features $W$ for the outcome given the true class and $X$). On Adult the rich-basis battery finds essentially no conditional miscalibration ($R^2\le0.006$), and recalibration accordingly changes nothing. The direct-channel check gives $\Delta R^2=0.135$: the classifier's features retain predictive power for earnings conditional on true class and $X$. The evidence therefore points to an exclusion violation in the weak directions, which recalibration does not address. CoverType differs. The rich basis reveals severe conditional miscalibration ($R^2$ up to $0.32$) and recalibration reduces it out of sample to $R^2\le0.003$, yet the audit deteriorates. The recalibrated spectrum compresses (largest eigenvalue $0.122\to0.029$), and the resulting estimate is consistent with amplification through the direct-channel term carried by the soil and wilderness indicators ($\Delta R^2=0.081$), as described by the bias expression above.

A practical diagnostic sequence is to run the rich-basis $(p,X)$ calibration battery on the labeled subsample, since an $X$-only check can miss violations that live in the $p$-direction, and then recalibrate and re-run out of sample. If the gap to an available benchmark closes, the evidence points to calibration as the dominant failure, which is repairable. If the audit barely moves despite a clean battery, or moves sharply after recalibration, that is evidence of a remaining exclusion problem, and the soft audit should not be reported without the direct-channel check. The experiment requires no labeled data beyond the subsample the battery already uses.
\section{A Voter-Turnout Disparity Audit}
\label{sec:bisg}

In the Adult and CoverType audits, the diagnostics indicate a material direct channel from the classifier's features to the outcome. More generally, those features may carry outcome-relevant information beyond the latent class in an observational regression. The assumptions are more plausible under a different measurement structure, one in which the classifier reads a proxy that is approximately a pure indicator of the latent class, with little or no direct channel to the outcome. We next consider a real-world setting in which the diagnostics are substantially more favorable, though conditional calibration is still not exact.

We audit racial disparities in voter turnout when the protected attribute is proxied rather than observed---the setting of the surname- and geography-based race-proxy literature \citep{elliott2009,imai2016}. The data are the public North Carolina voter file and its voter-history file for Mecklenburg County ($n=464{,}700$ active registrants after matching), together with the 2010 Census surname list \citep{comenetz2016}. The latent group $G$ is self-reported race/ethnicity in $K+1=4$ classes (reference \emph{White}; \emph{Black}, \emph{Hispanic}, \emph{Asian}), recorded on the voter file and used only to form the benchmark. The probability vector $p$ is the surname list's national $\Pbb(\text{race}\mid\text{surname})$, an off-the-shelf classifier we take as given. The controls $X$ are the electoral precinct. The outcome $Y$ is the number of November general elections in which the registrant voted, 2016--2024. Unlike Bayesian Improved Surname Geocoding, which folds geography into the proxy itself, we keep the proxy surname-only and let precinct enter separately as a control. The framework needs residual variation in every score direction once the controls are partialled out, and a proxy already conditioned on geography would leave little of it after precinct is removed.

Surname is plausibly closer to a pure indicator of race, as it carries genuine, moderate information about race. Any direct channel to turnout once race and place are fixed is hard to motivate substantively. The data are consistent with that reading, though they cannot establish it. The direct-channel check of Section~\ref{sec:recal} is essentially zero, as the partial $R^2$ of the surname score for turnout given true race and precinct is $0.001$. The surname classifier is genuinely uncertain but skilful (argmax accuracy $0.63$ against a White base rate of $0.55$; many surnames are racially ambiguous), so the score is neither degenerate nor deterministic. The residual-score covariance $\widehat\Msf$ is well conditioned (eigenvalues $0.015,0.024,0.061$; condition number $4.1$). No direction appears close to collapse.

Table~\ref{tab:bisg} reports the estimated turnout gaps. The full-information benchmark (from self-reported race) shows Hispanic and Asian registrants turning out roughly half a standard deviation below White registrants net of precinct, and Black registrants modestly below. Hard-labeling attenuates these gaps toward zero in this application, matching the estimated coarsening operator of Theorem~\ref{thm:atten}, and does most damage where the benchmark gap is largest, shrinking the Hispanic and Asian gaps by roughly a fifth ($-0.454$ and $-0.428$ against benchmark values $-0.551$ and $-0.525$). The soft estimator, built from the surname proxy alone, comes close on both ($-0.520$ and $-0.524$). Its RMSE relative to the benchmark is $0.042$, compared with $0.082$ for hard-labeling, and this difference is stable across resamplings.

\begin{table}[t]
\centering
\spacingset{1}
\caption{Census-surname voter-turnout audit, North Carolina (Mecklenburg County,
$n=464{,}700$), in the original class basis. Benchmark uses self-reported
race; soft and hard use only the surname proxy. Entries are turnout gaps
relative to White, net of precinct (outcome standardized).}
\label{tab:bisg}
\begin{threeparttable}
\begin{tabular}{lrrrr}
\toprule
Group & benchmark (true race) & soft (surname) & hard (argmax) & hard/benchmark\\
\midrule
Black    & $-0.076$ & $-0.011$ & $-0.040$ & $0.53$\\
Hispanic & $-0.551$ & $-0.520$ & $-0.454$ & $0.82$\\
Asian    & $-0.525$ & $-0.524$ & $-0.428$ & $0.82$\\
\bottomrule
\end{tabular}
\begin{tablenotes}\small
\item \textit{Notes}: RMSE versus the benchmark: soft $0.042$, hard $0.082$.
Direct-channel check (exclusion) $\Delta R^2=0.001$; surname-argmax accuracy
$0.63$ (White base rate $0.55$); $\widehat\Msf$ condition number $4.1$. The last column is the ratio of the hard-label estimate to the benchmark, in the original class basis; it is not the eigen-direction retention coefficient $s_j$.
\end{tablenotes}
\end{threeparttable}
\end{table}

At $n\approx465{,}000$ the uncoarsened estimator does not match the benchmark exactly. Its small deviations are statistically resolvable, and on the already-small Black gap the residual bias makes it no better than the coarsened one. This reflects that a national surname list is not exactly conditionally calibrated to a single county. It nevertheless matches the benchmark closely on the two large gaps that coarsening shrinks. The gain is also not an artifact of recalibration. The raw Census probabilities already outperform the hard labels against the benchmark, and recalibrating them locally on a labeled half leaves the conclusion unchanged.

\section{Discussion}
\label{sec:disc}

The covariance matrix $\Msf$ of the partialled-out probability vector determines which contrasts the regression \eqref{eq:plr} recovers and, together with the moment noise, how precisely they are estimated. The coarsening operator $\mathcal{A}_h$ determines the displacement induced by replacing that vector with a coarser version. Together with the reported standard error, that displacement determines coverage through the noncentrality index. In the binary case both reduce to scalars. There, the operator is a dilution factor and identification is all or nothing, whereas with several classes each becomes direction-specific.

\subsection*{Scope and operating conditions}

The three real-data audits illustrate the conditions under which the framework is informative. It is most informative when all of the following hold:

\begin{enumerate}[(i)]
\item Membership in several mutually exclusive latent classes is unobserved, and the probability vector is conditionally calibrated. Conditional calibration holds by construction for a correctly specified posterior. With a labeled subsample, it can be assessed and improved by recalibration (Section~\ref{sec:recal}). The CoverType audit shows that improving calibration need not improve the downstream audit when other identifying assumptions fail.

\item The residual-score covariance $\Msf=\E[\Cov(p\mid X)]$ is full rank, as required by Assumption~\ref{ass:rank}. The score must retain variation in every direction after the controls are partialled out. Both sides of this trade-off matter. Omitting controls required by the outcome equation makes the estimate unreliable in the usual way, and the no-controls estimator in the land-cover audit (supplement) sign-flips. At the other extreme, with $X$ as rich as $W$, the residual $a$ degenerates and the spectrum collapses.

\item The classifier is genuinely uncertain. When it is nearly deterministic, the probability vector concentrates on the vertices of the simplex, so coarsened and uncoarsened labels nearly coincide and $\mathcal{A}_h\to I$. The framework remains valid in that case but adds little. Hard-but-learnable tasks are settings in which coarsening can be particularly consequential.

\item There is no trusted forward model of the measurement process. Where a generative likelihood for the probability vector is available and believed, full-likelihood methods may be more efficient.
\end{enumerate}

Potential applications include multi-group fairness audits with validation labels, hard medical subtyping against registry gold standards, and verbal-autopsy cause-of-death attribution. The surname-proxy turnout audit of Section~\ref{sec:bisg} comes closest to satisfying all four. Land-cover and similar remote-sensing problems qualify only conditionally. The favorable case uses a proxy that is closer to a pure indicator (a surname carries information about race with little evidence of a direct channel to turnout), whereas the conditional cases use covariates that also drive the outcome. Where a labeled subsample exists, the recalibration experiment of Section~\ref{sec:recal} provides an empirical check on the calibration assumption and evidence about which assumption is more problematic. The exclusion restriction cannot be established from these diagnostics and remains a substantive judgment.

The efficient estimator attains the bound but needs the two extra nuisances of Section~\ref{sec:eff}. The supplement shows that the gain survives a well-specified variance model, while a misspecified one can substantially reduce it. Behavior under fully nonparametric estimation of both is unresolved. Inference on the set induced by a calibration budget can be carried out with the machinery of \citet{armstrong2021}, which is built for moment conditions that fail by a bounded amount. The identified set along a collapsed direction is a different problem. By Theorem~\ref{thm:spectral}(d) it does not shrink with the sample size. Finally, the coarsening operator could be used to build a direction-by-direction selection rule, calibrated against a label-based benchmark on a validation sample and then applied without labels to choose between the uncoarsened and coarsened estimators. For a binary latent class \citetKbs\ develops such a rule, indexing the attenuation factor by the confidence threshold. Carrying it over to several classes is left to future work.

\section*{Funding}
The author reports no external funding for this work.

\section*{Supplementary Material}
Proofs of all results stated in the main text, two further identification results, additional simulation experiments, a land-cover audit, and full reports of the empirical audits.

\section*{Competing Interests}
The author declares no competing interests.

\section*{Data Availability Statement}
All data used in this article are publicly available. The UCI Adult (\href{https://doi.org/10.24432/C5XW20}{doi:10.24432/C5XW20}) and Forest CoverType (\href{https://doi.org/10.24432/C50K5N}{doi:10.24432/C50K5N}) data are available from the UCI Machine Learning Repository; the 2010 Census surname list is available from the U.S.\ Census Bureau (\url{https://www.census.gov/topics/population/genealogy/data/2010_surnames.html}); and the North Carolina voter registration and voter-history files are available from the North Carolina State Board of Elections (\url{https://www.ncsbe.gov/results-data/voter-registration-data}).

\section*{Declaration of Generative AI Use}

The author used Claude Opus 4.8 (Anthropic) for assistance with manuscript editing and code development. All outputs were reviewed and verified by the author, who takes full responsibility for the content of the article.


%% file: supplement-body.tex
\noindent
This supplement contains proofs of all results stated in the main text, together with two additional identification results (Section~\ref{sec:addid}), further simulation experiments, and a land-cover audit. Throughout, $a=p-r(X)$, $R=Y-m(X)$, $u=R-a\T\tau$, and $\Msf=\E[aa\T]=\sum_j\lambda_j v_jv_j\T$ is the residual-score covariance matrix with eigenpairs $(\lambda_j,v_j)$, $\lambda_1\le\cdots\le\lambda_K$. References to numbered assumptions and results are to the main text.

\section{Proof of Lemma \MainLemResid (residual decomposition)}
By Assumption~1, $\E[Y\mid G,p,X]=\mu(X)+\tau\T g$. Taking $\E[\cdot\mid X]$ and using $\E[g\mid X]=\E[\E[g\mid p,X]\mid X]=\E[p\mid X]=r(X)$ (Assumption~2) gives $m(X)=\mu(X)+\tau\T r(X)$. Hence $R=(Y-\E[Y\mid G,p,X])+(\E[Y\mid G,p,X]-m(X))=\varepsilon+\tau\T(g-r(X))$, where $\varepsilon:=Y-\E[Y\mid G,p,X]$ satisfies $\E[\varepsilon\mid G,p,X]=0$. \hfill$\square$

\section{Proof of Theorem \MainThmIdent (identification)}
By Lemma~\MainLemResid, $\E[aR]=\E[a\,\tau\T(g-r(X))]+\E[a\varepsilon]$. Since $a$ is $\sigma(p,X)$-measurable and $\E[\varepsilon\mid G,p,X]=0$, $\E[a\varepsilon]=\E[a\,\E[\varepsilon\mid p,X]]=0$. For the first term, $\E[a(g-r(X))\T]=\E[a\,\E[(g-r(X))\T\mid p,X]]=\E[a(p-r(X))\T]=\E[aa\T]=\Msf$, the middle equality being Assumption~2. Hence $\E[aR]=\Msf\tau$; under Assumption~3, $\Msf$ is invertible and $\tau=\Msf^{-1}\E[aR]$. \hfill$\square$

\noindent The result rests on Assumption~2, which gives $\E[g-p\mid p,X]=0$: this is the Berkson-type orthogonality used here, placing the error orthogonal to the regressor rather than inside it, so no attenuation factor survives. Theorem~\MainThmAtten records what replaces that step once $p$ is coarsened.

\section{Proof of Theorem \MainThmSpectral (spectral characterization)}

\begin{enumerate}[(a)]
\item The moment $\E[aR]=\Msf\tau$ determines $\Msf\tau$ from observables. Writing $\tau=\sum_j(v_j\T\tau)v_j$ gives $\Msf\tau=\sum_{j:\lambda_j>0}\lambda_j(v_j\T\tau)v_j$, which fixes $v_j\T\tau$ for each $j$ with $\lambda_j>0$ and nothing about the components with $\lambda_j=0$; hence $\tau$ is identified iff $\lambda_1>0$.

\item The solution set of $\Msf\tau=\E[aR]$ is $\Msf^{+}\E[aR]+\nullsp(\Msf)$, with $\Msf^{+}\E[aR]=P_{\range(\Msf)}\tau$ the minimum-norm solution; the moment restricts only the projection of $\tau$ onto $\range(\Msf)$.

\item $\lambda_j=v_j\T\Msf v_j=\E[(v_j\T a)^2]\ge0$, which is zero iff $v_j\T a=0$ a.s., i.e.\ $v_j\T p=v_j\T r(X)$ a.s.

\item Let $\tilde m(p,X)=\E[Y\mid p,X]=\mu(X)+\tau\T p$ (Assumptions~1--2), $u=Y-\tilde m(p,X)$, $g_0(u)=u/(1+|u|)$, and $\kappa(p,X)=\E[u\,g_0(u)\mid p,X]\ge\underline\kappa>0$; note $u\,g_0(u)\ge0$ and $|g_0|<1$. For $z\in\nullsp(\Msf)$, part~(c) gives $z\T p=z\T r(X)$ a.s., so $z\T p$ is $\sigma(X)$-measurable. Fix such a $z$ with $\|z\|_\infty=1$ and $|c|\le\bar c=\underline\kappa/8$; set $\tau'=\tau+cz$, $\mu'(X)=\mu(X)-c\,z\T r(X)$ (which is $X$-measurable), and
\begin{gather*}
  \psi=g_0(u)-\E[g_0(u)\mid p,X],\qquad
  c_k=\frac{p_k(\tau'_k-\tau'^{\top}p)}{\kappa(p,X)},\\
  q_k=p_k+c_k\psi\ (k\le K),\qquad q_{K+1}=1-\textstyle\sum_{k\le K}q_k.
\end{gather*}
Since $|\psi|<2$ and, assuming $\|\tau\|_\infty\le\underline\kappa/8$, $|\tau'_k-\tau'^{\top}p|\le2\|\tau\|_\infty+2|c|\le\underline\kappa/2$, we get $|c_k\psi|<p_k$, so $q_k\ge0$; also $\sum_{k\le K}c_k=(\tau'^{\top}p)p_{K+1}/\kappa$ with $|\tau'^{\top}p|\le\underline\kappa/4$, so $q_{K+1}>0$. Thus $q$ is a valid distribution on $\{1,\dots,K+1\}$.
\end{enumerate}

Adjoin $V\sim\mathrm{Uniform}(0,1)$ independent of $(Y,X,p)$ and draw $g'$ categorically from $q$, so $\Pbb(g'=e_k\mid Y,p,X)=q_k$; the observables are untouched, so observational equivalence is exact. Calibration holds because $\E[g'_k\mid p,X]=\E[q_k\mid p,X]=p_k$ (as $\E[\psi\mid p,X]=0$). For the structural mean, $\E[Y\psi\mid p,X]=\kappa(p,X)$, so on $\{p_k>0\}$, $\E[Y\mid g'=e_k,p,X]=\tilde m+c_k\kappa/p_k=\tilde m+\tau'_k-\tau'^{\top}p =\mu'(X)+\tau'_k$, and $\E[Y\mid g'=e_{K+1},p,X]=\mu'(X)$; hence $\E[Y\mid g',p,X]=\mu'(X)+\tau'^{\top}g'$, which is Assumption~1 with coefficient $\tau'$. Thus every $\tau+cz$ with $|c|\le\bar c$ is observationally indistinguishable, giving a nondegenerate segment in each null direction.

Convexity: mixing two such kernels $q',q''$ preserves calibration and mixes the branch means linearly, so the coefficient set is convex. Boundedness: for any valid model with coefficient $\tau^{\circ}$, Cauchy--Schwarz with $(q^{\circ}_k)^2\le q^{\circ}_k$ gives, on $\{p_k\ge\underline p\}$, $|\tau^{\circ}_k-\tau^{\circ\top}p|\le(\E[u^2\mid p,X])^{1/2}/p_k^{1/2} \le\bar\sigma/\sqrt{\underline p}$; since $\tau^{\circ}\mapsto (\tau^{\circ}_k-\tau^{\circ\top}p)_k$ is invertible in the simplex interior ($I_K-\mathbf1 p\T$ has eigenvalues $1$ and $p_{K+1}$), $\tau^{\circ}$ is confined to a bounded set. \hfill$\square$

\section{Proof of Proposition \MainPropOrtho (Neyman orthogonality)}
Let $\psi=(p-r)(Y-m-(p-r)\T\tau)$. For $m\mapsto m+t\delta_m$, $\tfrac{d}{dt}\E[\psi]|_0=-\E[(p-r)\delta_m]=-\E[\E[a\mid X]\delta_m]=0$. For $r\mapsto r+t\delta_r$, the derivative is $-\E[\delta_r(Y-m-a\T\tau)]+\E[a\,\delta_r\T\tau] =-\E[\delta_r\E[u\mid X]]+\E[\E[a\mid X]\delta_r\T\tau]=0$, since $\E[u\mid X]=0$ and $\E[a\mid X]=0$. \hfill$\square$

\section{Proof of Theorem \MainThmClt (multivariate central limit theorem)}
The estimator $\hat\tau=(n^{-1}\sum_i\hat a_i\hat a_i\T)^{-1} (n^{-1}\sum_i\hat a_i\hat R_i)$ solves the moment $\psi=a(R-a\T\tau)$. By Proposition~\MainPropOrtho and cross-fitting the empirical-process contribution of the nuisance estimates is negligible, and the population remainder is $O_P(\|\hat m-m\|_2\|\hat r-r\|_2+\|\hat r-r\|_2^2)$: writing $\hat a=a-\Delta_r$ and $\hat R=R-\Delta_m$ with $\Delta_r=\hat r-r$, $\Delta_m=\hat m-m$, the cross terms $\E[a\Delta_m]$ and $\E[\Delta_r R]$ vanish because $\E[a\mid X]=0$ and $\E[R\mid X]=0$, the remaining numerator term is $\E[\Delta_r\Delta_m]$, and the denominator satisfies $\E[\hat a\hat a\T]=\Msf+\E[\Delta_r\Delta_r\T]$, which contributes the squared term. Under the rate condition of Theorem~\MainThmClt both are $o_P(n^{-1/2})$; see \citet{chernozhukov2018} for the general orthogonal-score and cross-fitting argument. Hence $\sqrt n(\hat\tau-\tau)=\Msf^{-1}n^{-1/2}\sum_i a_iu_i+o_P(1)$ with $\E[a_iu_i]=0$. The central limit theorem and Slutsky give $\sqrt n(\hat\tau-\tau)\dto\N(0,\Omega)$, $\Omega=\Msf^{-1}\Sigma\Msf^{-1}$, $\Sigma=\E[u^2aa\T]$; consistency of $\widehat\Omega$ follows from the law of large numbers and continuity of inversion \citep{vandervaart1998}. Finally $v_j\T\Omega v_j=\lambda_j^{-2}v_j\T\Sigma v_j$ since $\Msf^{-1}v_j=\lambda_j^{-1}v_j$. \hfill$\square$

\section[Proof of the direction-specific rate and collapse-robust inference]{Proof of Proposition~\ref{prop:weakid} (direction-specific rate and collapse-robust inference)}
\label{sec:weakid-proof}
\noindent From the influence representation of Theorem~\MainThmClt, $v\T(\hat\tau-\tau)=\lambda_{v,n}^{-1}n^{-1}\sum_i(v\T a_i)u_i+o_P(\cdot)$ for $v$ the eigenvector with eigenvalue $\lambda_{v,n}$. The summand has variance $v\T\Sigma v=\sigma^2\lambda_{v,n}(1+o(1))$, so by the Lindeberg CLT $v\T(\hat\tau-\tau)=O_P((n\lambda_{v,n})^{-1/2})$ and $\sqrt{n\lambda_{v,n}}\,v\T(\hat\tau-\tau)\dto\N(0,\sigma^2)$, giving~(i). For (ii), $v\T\widehat V v=\sigma^2/(n\lambda_{v,n})(1+o_P(1))$; the factor $\lambda_{v,n}^{-1}$ cancels in the studentized ratio, so it is $\N(0,1)$ whenever $n\lambda_{v,n}\to\infty$. In the oracle Gaussian model the numerator $n^{-1/2}\sum_i(v\T a_i)u_i$ is, conditionally on $\{a_i\}$, exactly $\N(0,\sigma^2 n^{-1}\sum_i(v\T a_i)^2)$ and the denominator concentrates at rate $n^{-1/2}$ for any $\lambda_{v,n}>0$, so if $n\lambda_{v,n}\to\kappa$ the studentized statistic remains $\N(0,1)$ while $v\T\hat\tau$ is inconsistent; coverage stays nominal with $O(1)$ half-width. \hfill$\square$

\section{Proof of Theorem \MainThmEff (efficiency bound and optimal weighting)}
\label{sec:proof-eff}
By Theorem~\MainThmIdent the model is $Y=\mu(X)+\tau\T p+u$ with $\E[u\mid p,X]=0$ and $\Var(u\mid p,X)=\sigma^2(p,X)$, where $\mu$ is an unrestricted function of $X$; write $Z=(p,X)$ and note $Y-\mu(X)-\tau\T p=R-a\T\tau=u$.

\emph{Admissible instruments.} Consider moments $\E[h(Z)\,(Y-\mu(X)-\tau\T p)]$ with $\E\|h\|^2<\infty$. Along the path $\mu\mapsto\mu+t\delta$ the derivative at $t=0$ is $-\E[h\,\delta(X)]=-\E[\E[h\mid X]\,\delta(X)]$, so the moment is insensitive to the unknown $\mu$ in every direction $\delta\in L_2(X)$ if and only if $\E[h\mid X]=0$ a.s. Let $\mathcal{H}$ collect the $h$ satisfying this with $\E[hp\T]$ nonsingular. These are exactly the moment functions orthogonal to the scores generated by perturbing $\mu$, so the minimum of the asymptotic variance over $\mathcal{H}$ is the semiparametric efficiency bound for $\tau$ in the model with $\mu$ unknown \citep{chamberlain1987,newey1994}. For $h\in\mathcal{H}$ the $Z$-estimator solving $\sum_ih_i(Y_i-\hat\mu(X_i)-\tau\T p_i)=0$ satisfies $\sqrt n(\hat\tau-\tau)\dto\N(0,V(h))$ with $V(h)=(\E[hp\T])^{-1}\E[\sigma^2hh\T](\E[ph\T])^{-1}$, the nuisance estimation contributing no first-order term under the regularity conditions for preliminary nonparametric estimators in \citet{newey1994}.

\emph{The optimum.} Because $\E[h\mid X]=0$, $\E[hp\T]=\E[h\tilde a\T]$ for $\tilde a=p-\tilde p(X)$ and any $X$-measurable $\tilde p$ for which the required moments exist. Take $\tilde p(X)=\E[\sigma^{-2}p\mid X]/\E[\sigma^{-2}\mid X]$, the unique choice for which $h^{*}=\sigma^{-2}\tilde a$ is itself admissible, since $\E[\sigma^{-2}\tilde a\mid X]=\E[\sigma^{-2}p\mid X]-\tilde p\,\E[\sigma^{-2}\mid X]=0$; write $J=\E[\sigma^{-2}\tilde a\tilde a\T]$. For any $h\in\mathcal{H}$ the matrix Cauchy--Schwarz inequality applied to $\sigma h$ and $\sigma^{-1}\tilde a$ gives $\E[\sigma^2hh\T]\succeq\E[h\tilde a\T]J^{-1}\E[\tilde ah\T]$, the difference being $\E[ww\T]\succeq0$ for $w=\sigma h-\sigma^{-1}\E[h\tilde a\T]J^{-1}\tilde a$. Pre- and post-multiplying by $(\E[h\tilde a\T])^{-1}$ and its transpose yields $V(h)\succeq J^{-1}$, with equality if and only if $w=0$ a.s., that is $h=C\sigma^{-2}\tilde a$ with $C=\E[h\tilde a\T]J^{-1}$ nonsingular. Both $\E[h^{*}p\T]$ and $\E[\sigma^2h^{*}h^{*\top}]$ equal $J$, so $V(h^{*})=J^{-1}=V_{\mathrm{eff}}$ and the bound is attained.

\emph{The unweighted estimator.} $h=a$ is admissible because $\E[a\mid X]=0$. Since $\E[ap\T]=\Msf$ and $\E[\sigma^2aa\T]=\Sigma$, we have $V=\Msf^{-1}\Sigma\Msf^{-1}\succeq V_{\mathrm{eff}}$, with equality iff $\sigma^2(p,X)\,a=C\tilde a$ a.s. A constant $\sigma^2$ gives $\tilde p=r$, hence $\tilde a=a$ and $C=\sigma^2I$; heteroskedasticity in $X$ alone also gives $\tilde a=a$, but then equality requires $\sigma^2(X)$ to be constant wherever $a\neq0$. The naive instrument $\sigma^{-2}a$ is generally inadmissible: $\E[\sigma^{-2}a\mid X]=\E[\sigma^{-2}\mid X]\,(\tilde p(X)-r(X))$, which vanishes only when $\Cov(\sigma^{-2}(p,X),\,p\mid X)=0$. That fails generically here. Since $u=\tau\T(g-p)+\varepsilon$ with $\E[g-p\mid p,X]=0$, $\Var(g\mid p,X)=\operatorname{diag}(p)-pp\T$ and $\Cov(\tau\T(g-p),\varepsilon\mid p,X)=0$, we obtain $\sigma^2(p,X)=\tau\T(\operatorname{diag}(p)-pp\T)\tau+\Var(\varepsilon\mid p,X)$, whose latent-membership component generally varies with $p$ when $\tau\neq0$. \hfill$\square$

\section{Proof of Theorem \MainThmAtten (the coarsening operator)}
Let $h$ be any coarsening and $a_h=h-\E[h\mid X]$, so that $\hat\tau_h\pto D_h^{-1}\E[a_hR]$ with $D_h=\E[a_ha_h\T]$ assumed nonsingular. By Lemma~\MainLemResid, $R=\tau\T(g-r(X))+\varepsilon$ with $\E[\varepsilon\mid G,p,X]=0$. Since $a_h$ is $\sigma(p,X)$-measurable, $\E[a_h\varepsilon]=\E[a_h\,\E[\varepsilon\mid p,X]]=0$. For the remaining term, conditioning on $(p,X)$ and applying Assumption~2 gives
\begin{equation*}
  \E[a_h(g-r(X))\T]=\E\bigl[a_h\,\E[(g-r(X))\T\mid p,X]\bigr]=\E[a_h(p-r(X))\T]=\E[a_ha\T]=C_h.
\end{equation*}
Hence $\E[a_hR]=C_h\tau$ and $\plim\hat\tau_h=D_h^{-1}C_h\tau=\mathcal{A}_h\tau$. Writing $u=a-a_h$ for the discarded signal, $C_h=\E[a_h(a_h+u)\T]=D_h+\E[a_hu\T]$, so $\mathcal{A}_h=I+D_h^{-1}\E[a_hu\T]$, which is the identity in the statement. It equals $I$ if and only if $\E[a_hu\T]=0$. Nothing in the argument uses the form of $h$, so the conclusion holds for every coarsening. The argmax one-hot label is the special case $h=\tilde g$. Nonsingularity of $D_h$ requires the coarsened vector to retain residual variation given $X$, which can fail if the coarsening never selects some class.
\hfill$\square$

\section{Proof of Proposition \MainPropCal (sharp sensitivity to miscalibration)}
\begin{enumerate}[(a)]

\item With $\varepsilon=Y-\E[Y\mid G,p,X]$, $\E[a\varepsilon]=0$ since $a$ is $\sigma(p,X)$-measurable. By Assumption~1 and the tower property, $m(X)=\mu(X)+\tau\T\E[g\mid X]=\mu(X)+\tau\T(r(X)+\bar\eta(X))$ with $\bar\eta(X)=\E[\eta\mid X]$. Hence $\E[aR]=\E[a\,g\T]\tau-\E[a\,(r(X)+\bar\eta(X))\T]\tau$. The second term vanishes because $\E[a\mid X]=0$, and for the first, $\E[ag\T]=\E[a\,\E[g\mid p,X]\T]=\E[a(p+\eta)\T]=\Msf+N$, using $\E[ap\T]=\E[aa\T]+\E[a\,r(X)\T]=\Msf$. Thus $\E[aR]=(\Msf+N)\tau$. With nuisances converging to the observable regressions $m,r$ as in Theorem~\MainThmClt, $\hat\tau\pto\Msf^{-1}(\Msf+N)\tau=\tau+\Msf^{-1}N\tau$, linear in $\tau$.

\item $v\T\Msf^{-1}N\tau=\E[(v\T\Msf^{-1}a)(\eta\T\tau)]$, and for $\eta\in\mathcal{H}_\delta$, $|\eta\T\tau|\le\|\eta\|_\infty\|\tau\|_1 \le\delta\|\tau\|_1$ pointwise, whence the bound. For attainment set $\eta^{*}=-\delta\,\sgn(v\T\Msf^{-1}a)\,s$ with $s_k=\sgn(\tau_k)$ (any sign when $\tau_k=0$). Then $\eta^{*\top}\tau=-\delta\,\sgn(v\T\Msf^{-1}a)\|\tau\|_1$, so $\E[(v\T\Msf^{-1}a)\,\eta^{*\top}\tau]=-\delta\|\tau\|_1\E|v\T\Msf^{-1}a|$ attains the bound in magnitude. When every class probability is at least $K\delta$ a.s., $\eta^{*}$ is admissible. We have $\|\eta^{*}\|_\infty\le\delta$. For $k\le K$, $p_k+\eta^{*}_k\ge K\delta-\delta\ge0$ and $p_k+\eta^{*}_k\le p_k+\delta\le(1-p_{K+1})+\delta\le1$ since $p_{K+1}\ge K\delta\ge\delta$. The implied reference probability $p_{K+1}-\1\T\eta^{*}$ lies in $[p_{K+1}-K\delta,\,p_{K+1}+K\delta] \subseteq[0,1]$: the lower end follows from $p_{K+1}\ge K\delta$, while the upper end follows from $\sum_{k\le K}p_k\ge K\cdot K\delta\ge K\delta$, which gives $p_{K+1}\le1-K\delta$. Finally $v_j\T\Msf^{-1}a=\lambda_j^{-1}v_j\T a$ and, by Cauchy--Schwarz, $\E|v_j\T a|\le(\E[(v_j\T a)^2])^{1/2}=\lambda_j^{1/2}$.
\end{enumerate}
\hfill$\square$

\section{Additional Identification Results}
\label{sec:addid}

\subsection{The marginal contrast is also identified}
\label{sec:marginal}
The coefficient $\tau_k$ is the within-$X$ contrast between class $k$ and the reference. A natural companion estimand is the marginal contrast $\Delta_k=\E[Y\mid G=k]-\E[Y\mid G=K{+}1]$, which does not hold $X$ fixed. Under conditional calibration the two differ by a compositional term that is itself point-identified---in contrast with fully latent measurement-error settings, where the covariate composition of the latent classes is lost.

\begin{proposition}[Marginal contrast]
\label{prop:marginal}
Let Assumptions~1--4 hold and $\Pbb(G=k)>0$ for all $k$. Then, with $\mu(X)=m(X)-\tau\T r(X)$ and $p_{K+1}=1-\1\T p$,
\[
\Delta_k=\tau_k+C_k,\qquad C_k=\frac{\E[\mu(X)\,p_k]}{\E[p_k]}-\frac{\E[\mu(X)\,p_{K+1}]}{\E[p_{K+1}]},
\]
so the vector $\Delta$ is point-identified alongside $\tau$. Moreover $\Delta_k=\tau_k$ if and only if $\E[\mu(X)\mid G=k]=\E[\mu(X)\mid G=K{+}1]$, i.e.\ classes $k$ and the reference are balanced in $\mu(X)$.
\end{proposition}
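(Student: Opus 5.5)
The plan is to compute $\E[Y\mid G=k]$ directly from the structural mean and then use calibration to turn the latent conditioning on $G$ into weights built from $p$. First, as in the proof of Lemma~\MainLemResid, taking $\E[\cdot\mid X]$ in Assumption~1 and using $\E[g\mid X]=r(X)$ (from Assumption~2) gives $m(X)=\mu(X)+\tau\T r(X)$. Hence $\mu(X)=m(X)-\tau\T r(X)$, which is identified once $\tau$ is identified by Theorem~\MainThmIdent (using Assumptions~3--4). Extend $g$ by $g_{K+1}=\mathbf 1\{G=K+1\}=1-\1\T g$ and set $\tau_{K+1}=0$. Then Assumption~1 reads $\E[Y\mid G,p,X]=\mu(X)+\tau_G$.

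Next, for each class $k\in\{1,\dots,K+1\}$, the tower property gives
\[
\E[Y g_k]=\E\bigl[g_k(\mu(X)+\tau_G)\bigr]=\E[\mu(X)g_k]+\tau_k\Pbb(G=k).
\]
The same calibration step used in Theorem~\MainThmIdent, conditioning on $(p,X)$ and using $\E[g_k\mid p,X]=p_k$, yields $\E[\mu(X)g_k]=\E[\mu(X)p_k]$ and $\Pbb(G=k)=\E[p_k]$. For the reference class this uses $\E[g_{K+1}\mid p,X]=1-\1\T p=p_{K+1}$. Dividing by $\Pbb(G=k)>0$ gives
\[
\E[Y\mid G=k]=\tau_k+\frac{\E[\mu(X)p_k]}{\E[p_k]}.
\]
Subtracting the $k=K+1$ case, where $\tau_{K+1}=0$, gives $\Delta_k=\tau_k+C_k$. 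Every ingredient is a functional of the law of $(Y,X,p)$ together with the identified $\tau$, so $\Delta$ is point-identified.

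For the balance characterization, the same calibration computation applied to $\E[\mu(X)g_k]$ shows that $\E[\mu(X)p_k]/\E[p_k]=\E[\mu(X)\mid G=k]$. Therefore $C_k=\E[\mu(X)\mid G=k]-\E[\mu(X)\mid G=K+1]$, and $\Delta_k=\tau_k$ holds if and only if $C_k=0$, which is exactly the stated balance condition.

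The main point needing care is integrability rather than any conceptual step. We need $\mu(X)$ to be integrable so that these expectations are finite. Since $p$ is bounded and $m(X)$ is square-integrable under Assumption~4, $\mu(X)=m(X)-\tau\T r(X)$ is in $L_2$, which handles this. Beyond that, the only thing to check is that the reference-class calibration identity follows linearly from the $K$ coordinate identities in Assumption~2, which it does.
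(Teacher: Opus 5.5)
Your proposal is correct and follows essentially the same route as the paper's proof. Both apply the tower property to $\E[Yg_k]$ and use calibration to replace $g_k$ by $p_k$ in $\E[\mu(X)g_k]$ and in $\Pbb(G=k)$. Both repeat the computation for the reference class via $g_{K+1}=1-\1\T g$, and both read the balance equivalence off $\E[\mu(X)\mid G=k]=\E[\mu(X)p_k]/\E[p_k]$. Your convention $\tau_{K+1}=0$ and your integrability remark are harmless additions that the paper leaves implicit.
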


\emph{Proof.} For $k\le K$, one-hotness gives $(\tau\T g)g_k=\tau_k g_k$, so by Assumption~1 and the tower property $\E[Yg_k]=\E[\mu(X)g_k]+\tau_k\E[g_k]$. By Assumption~2 and the tower property, $\E[\mu(X)g_k]=\E[\mu(X)\,\E[g_k\mid p,X]]=\E[\mu(X)p_k]$ and $\E[g_k]=\E[p_k]=\Pbb(G=k)$, so $\E[Y\mid G=k]=\tau_k+\E[\mu(X)p_k]/\E[p_k]$. The same computation with $g_{K+1}=1-\1\T g$ (on which $\tau\T g\,g_{K+1}=0$) gives $\E[Y\mid G=K{+}1]=\E[\mu(X)p_{K+1}]/\E[p_{K+1}]$. Taking differences gives the display. The identity $\mu=m-\tau\T r$ follows from Lemma~\MainLemResid's decomposition $m=\mu+\tau\T r$, and the balance equivalence is immediate since $\E[\mu(X)\mid G=k]=\E[\mu(X)p_k]/\E[p_k]$. \hfill$\square$

The same tower argument identifies the full latent-class covariate law, $\Pbb(X\in A\mid G=k)=\E[\mathbf{1}_A(X)\,p_k]/\E[p_k]$. Under conditional calibration, reporting $\tau$ or $\Delta$ is a choice of estimand, not an identification issue, and both are estimable by cross-fitted plug-in.

\subsection{Heterogeneous effects and the matrix-weighted estimand}
\label{sec:hetero}

\begin{proposition}[Matrix-weighted estimand]
\label{prop:hetero}
Replace the constant coefficient in Assumption~1 by $\E[Y\mid G,p,X]=\mu(X)+\tau(X)\T g$ with $\E\|\tau(X)\|^2<\infty$, and keep Assumptions~2 and~4. Let $\Msf(X)=\Cov(p\mid X)$, so $\Msf=\E[\Msf(X)]$, and let $\Msf$ be nonsingular. Then $\E[aR]=\E[\Msf(X)\,\tau(X)]$ and
\[
\plim\hat\tau \;=\;\bar\tau \;:=\;\Msf^{-1}\,\E\bigl[\Msf(X)\,\tau(X)\bigr],
\]
the $\Msf(X)$-weighted average of $\tau(\cdot)$; when $K=1$ this is $\E[\tau(X)\Var(p\mid X)]/\E[\Var(p\mid X)]$, and $\bar\tau=\tau$ under constancy.
\end{proposition}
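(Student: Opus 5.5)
The argument repeats the proof of Theorem~\MainThmIdent, with the constant coefficient replaced by $\tau(X)$ and conditioning on $X$ used to pull $\tau(X)$ outside the inner expectation. First I would re-derive the residual decomposition. Taking $\E[\cdot\mid X]$ in the modified structural mean, and using Assumption~2 through $\E[g\mid X]=\E[\E[g\mid p,X]\mid X]=r(X)$, gives $m(X)=\mu(X)+\tau(X)\T r(X)$. Hence
\[
R=\tau(X)\T\bigl(g-r(X)\bigr)+\varepsilon,\qquad \varepsilon=Y-\E[Y\mid G,p,X],\qquad \E[\varepsilon\mid G,p,X]=0.
\]
Square integrability of $\tau(X)$ and boundedness of $g$, $p$ and $r$ make every product below integrable. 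Assumption~4 controls $Y$, and therefore $\varepsilon$.

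Next I would compute the cross-moment. The term $\E[a\varepsilon]$ vanishes because $a$ is $\sigma(p,X)$-measurable. For the systematic part, write $a\,\tau(X)\T(g-r(X))=a(g-r(X))\T\tau(X)$ and condition on $(p,X)$. Since $\tau(X)$ is $\sigma(p,X)$-measurable, Assumption~2 gives $\E[a(g-r)\T\mid p,X]\,\tau(X)=aa\T\tau(X)$. Conditioning further on $X$ yields $\E[aa\T\mid X]\,\tau(X)=\Cov(p\mid X)\,\tau(X)=\Msf(X)\tau(X)$. Therefore $\E[aR]=\E[\Msf(X)\tau(X)]$. The tower property also gives $\Msf=\E[\Msf(X)]$.

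For the probability limit, I would reuse the argument of Theorem~\MainThmClt. With cross-fitting and the stated rate conditions, the denominator satisfies $n^{-1}\sum\hat a_i\hat a_i\T\pto\Msf$ and the numerator satisfies $n^{-1}\sum\hat a_i\hat R_i\pto\E[aR]$. Orthogonality (Proposition~\ref{prop:ortho}) does not depend on $\tau$ being constant for these limits: the remainder terms $\E[\Delta_r\Delta_m]$ and $\E[\Delta_r\Delta_r\T]$ tend to zero under mere consistency of $\hat m$ and $\hat r$. The continuous mapping theorem and nonsingularity of $\Msf$ then give $\plim\hat\tau=\Msf^{-1}\E[\Msf(X)\tau(X)]=\bar\tau$.

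The special cases follow directly. When $K=1$, $\Msf(X)=\Var(p\mid X)$ is scalar and $\Msf=\E[\Var(p\mid X)]$. Under constancy, $\E[\Msf(X)\tau]=\Msf\tau$, so $\bar\tau=\tau$. The only delicate point is the consistency step. Because $\bar\tau$ is defined through population moments, it needs only $L_2$-consistency of the nuisances, not the product-rate condition, so I would state that explicitly rather than cite Theorem~\MainThmClt wholesale.
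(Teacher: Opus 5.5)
Your proposal is correct and follows the paper's proof essentially step for step. You re-derive $R=\tau(X)^{\top}(g-r(X))+\varepsilon$ as in Lemma~1, drop $\E[a\varepsilon]$ by $\sigma(p,X)$-measurability, reduce the cross-moment to $\E[aa^{\top}\mid X]=\Cov(p\mid X)$ via calibration, and obtain the limit by the plug-in argument of the multivariate CLT; conditioning on $(p,X)$ first rather than on $X$ first with $\E[a\mid X]=0$, as the paper does, is immaterial. Your remark that the probability limit needs only $L_2$-consistency of $\hat m,\hat r$ (with $\hat r$ bounded), not the product-rate condition, is a correct sharpening of the paper's ``follows as in Theorem~4.''
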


\emph{Proof.} As in Lemma~\MainLemResid, $m(X)=\mu(X)+\tau(X)\T r(X)$ and $R=\tau(X)\T(g-r(X))+\varepsilon$ with $\E[\varepsilon\mid G,p,X]=0$, so $\E[aR]=\E[\E[a(g-r(X))\T\mid X]\,\tau(X)]$. Conditionally on $X$, $\E[a(g-r(X))\T\mid X]=\E[a\,g\T\mid X]=\E[a\,p\T\mid X] =\E[a\,a\T\mid X]=\Msf(X)$, using $\E[a\mid X]=0$ and Assumption~2. The plug-in limit follows as in Theorem~\MainThmClt. \hfill$\square$

The weight $\Msf(X)$ is the conditional residual-score covariance at $X$: covariate cells in which the score varies more along a direction contribute more to $\bar\tau$ along that direction. This is the multiclass analogue of the variance-weighting familiar from partially linear models under effect heterogeneity, and the estimand should be interpreted accordingly when constancy is implausible.

\section{Additional Simulation Experiments}
The following experiments supplement those in the main text.

\subsection{Scalar reduction recovers the binary law (E3, E4)}
Setting $K=1$ recovers the binary model. Across a $\sigma_u$ sweep the raw log--log slope of $\Var(\hat\tau)$ on $\Vs$ is $-1.02$, not $-2$, because the moment-noise variance moves with $\Vs$. Once normalized by the moment noise the slope is exactly $-2.00$ ($R^2=1.00$), recovering the scalar $(\Vs)^{-2}$ sandwich law of the binary case (Figure~\ref{fig:e3}), and the sandwich interval attains coverage in $[0.953,0.963]$ across two decades of $\Vs$ (Figure~\ref{fig:e4}). This clarifies the scaling: the apparent departure from $-2$ in the multivariate variance scaling is a $\Sigma$-effect, not a failure of the sandwich, and the clean $-2$ re-emerges once the moment noise is held fixed.

\begin{figure}[t]
\centering
\includegraphics[width=\textwidth]{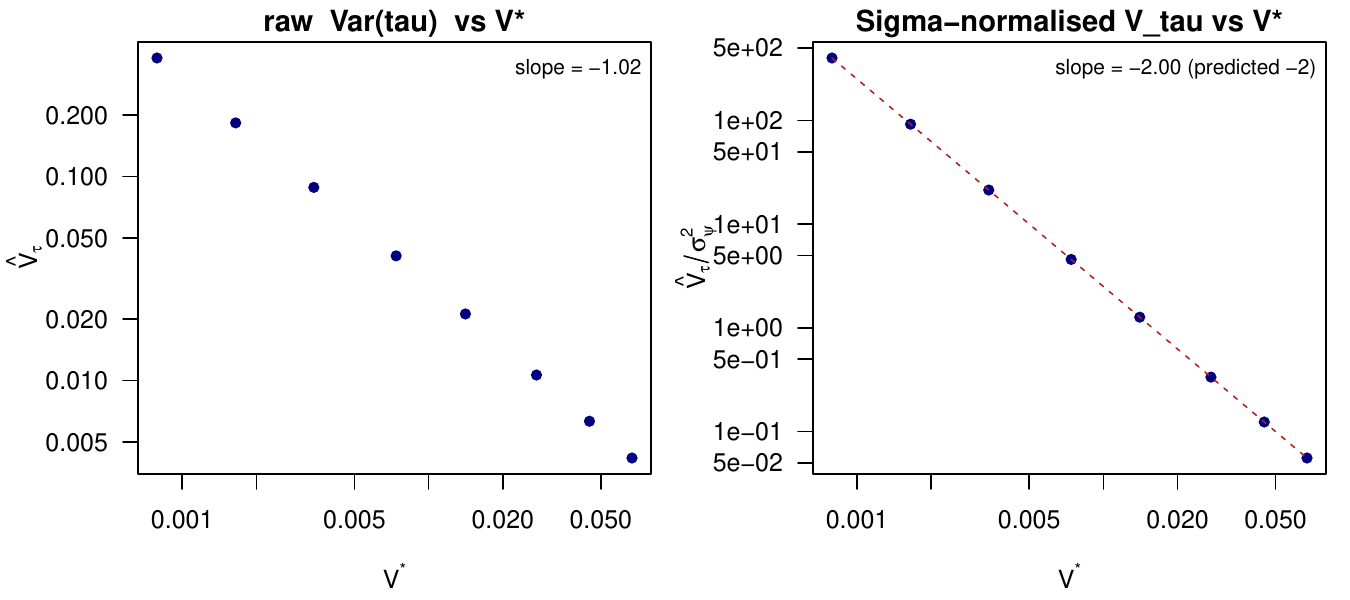}
\caption{E3. Scalar reduction ($K=1$): raw variance against $\Vs$ (left,
slope $-1.02$) and moment-noise-normalized variance against $\Vs$ (right,
slope $-2.00$, the predicted $(\Vs)^{-2}$ law).}
\label{fig:e3}
\end{figure}

\begin{figure}[t]
\centering
\includegraphics[width=0.69\textwidth]{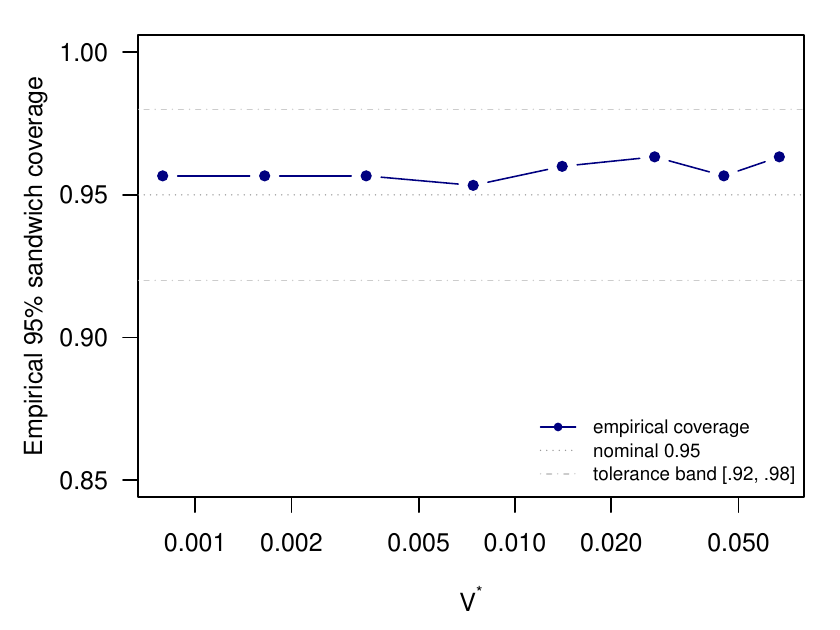}
\caption{E4. Empirical coverage of the nominal-$95\%$ sandwich interval in
the scalar reduction, across two decades of $\Vs$.}
\label{fig:e4}
\end{figure}

\subsection{Optimal weighting attains the efficiency bound (E7)}
\label{sec:e7}
We examine Theorem~\MainThmEff under conditional heteroskedasticity, generating the outcome noise with standard deviation $0.6\,e^{0.8X_1}$. Table~\ref{tab:e7} compares three instruments: the unweighted $h=a$ of (\MainEqEst), the naive $h=\sigma^{-2}a$, and the efficient $h=\sigma^{-2}\tilde a$ of (\MainEqVeff), the last two using the true $\sigma^{2}(p,X)$ so that the comparison isolates the instrument. The unweighted estimator carries about three times the variance of the bound (mean ratio $3.10$). The naive instrument captures most but not all of that gain, and falls further behind as $\sigma_u$ grows and $\sigma^2$ becomes more responsive to $p$ (ratios $1.08$, $1.11$, $1.22$).

Inadmissibility also affects inference, not only precision. The naive instrument's sandwich reduces to $(\E[\sigma^{-2}aa\T])^{-1}$, which lies below the bound because $\tilde p$ is the $\sigma^{-2}$-weighted projection of $p$. The reported standard errors therefore understate the true variability, by $5\%$ at $\sigma_u=0.20$ and $14\%$ at $\sigma_u=0.45$, and Wald coverage drifts from $0.945$ to $0.936$. The efficient instrument shows no such gap: its sandwich matches its empirical variance to within $4\%$ and coverage stays at $0.95$.

Replacing the oracle $\sigma^2$ by a cross-fitted estimate (log-linear in $(X,p)$, matching the multiplicative form of the noise) leaves the efficient estimator at the bound, with variance ratios $1.06$, $1.02$ and $0.98$ and coverage $0.948$--$0.954$. Thus, in this design, the two extra nuisances can be estimated without a detectable first-order loss. This conclusion is design-specific: a variance model fitted linearly rather than log-linearly on the same data inflates the ratio to about $1.6$.

\begin{table}[t]
\centering
\spacingset{1}
\caption{E7: trace of $\Var(\hat\tau)$ for the unweighted, naively weighted
and efficient instruments under heteroskedasticity ($n=4000$, $500$ replicates,
true $\sigma^2$). The last column is the ratio of the naive estimator's mean
sandwich to its empirical variance.}
\label{tab:e7}
\begin{threeparttable}
\begin{tabular}{rrrrrrr}
\toprule
$\sigma_u$ & $h=a$ & $h=\sigma^{-2}a$ & $h=\sigma^{-2}\tilde a$ &
$\tfrac{\text{unw}}{\text{eff}}$ & $\tfrac{\text{naive}}{\text{eff}}$ &
naive $\tfrac{\text{sand}}{\text{emp}}$\\
\midrule
0.20 & 0.149 & 0.051 & 0.047 & 3.18 & 1.08 & 0.95\\
0.30 & 0.066 & 0.026 & 0.024 & 2.81 & 1.11 & 0.93\\
0.45 & 0.038 & 0.014 & 0.011 & 3.32 & 1.22 & 0.86\\
\bottomrule
\end{tabular}
\end{threeparttable}
\end{table}

\subsection{Collapse-robust inference and the $\sqrt{n\lambda}$ rate (E10)}
We examine Proposition~\ref{prop:weakid} numerically on the oracle linear-Gaussian model with a single weak eigen-direction driven to zero at the bounded-information rate $\lambda_{v,n}=5/n$ (so $n\lambda_{v,n}=5$ is fixed). Table~\ref{tab:e10} reports the outcome. The standard deviation of the weak-direction estimate does not shrink with $n$ (it sits at $0.44$ from $n=10^3$ to $n=6.4\times10^4$)---the estimator is inconsistent along the collapsing direction. Yet the rescaled deviation $\sqrt{n\lambda_{v,n}}\,\mathrm{SD}$ is constant at $\approx\!1=\sigma$, matching the same $\sqrt{n\lambda}$ scaling as in part~(i), and the Wald coverage stays at the nominal $0.95$ throughout. Part~(ii) does not cover this case, since it assumes $n\lambda_{v,n}\to\infty$---the oracle-Gaussian statement that follows it does. Across all rates we examined ($\lambda$ fixed, $c/\sqrt n$, $c/n$) the studentized statistic was indistinguishable from $\N(0,1)$ (excess kurtosis within $\pm0.2$). A near-degenerate weak pair ($\lambda_1\approx\lambda_2=4/n$) behaved identically. In this design the intervals widen as identification weakens without losing their nominal coverage.

\begin{table}[t]
\centering
\spacingset{1}
\caption{E10: weak direction driven to zero at $\lambda_{v,n}=5/n$ (oracle
model, $\sigma=1$, $1{,}500$ replicates). The point estimate is inconsistent
(SD flat in $n$) but the Wald interval is valid (coverage $0.95$), and the
$\sqrt{n\lambda}$-rescaled SD is constant.}
\label{tab:e10}
\begin{threeparttable}
\begin{tabular}{rrrrr}
\toprule
$n$ & $\lambda_{v,n}$ & $\mathrm{SD}(v\T\hat\tau)$ &
$\sqrt{n\lambda_{v,n}}\,\mathrm{SD}$ & Wald coverage\\
\midrule
1{,}000  & $5.0\times10^{-3}$ & 0.442 & 0.99 & 0.951\\
4{,}000  & $1.3\times10^{-3}$ & 0.448 & 1.00 & 0.951\\
16{,}000 & $3.1\times10^{-4}$ & 0.445 & 1.00 & 0.959\\
64{,}000 & $7.8\times10^{-5}$ & 0.439 & 0.98 & 0.959\\
\bottomrule
\end{tabular}
\end{threeparttable}
\end{table}


\subsection{Hard-labeling attenuates anisotropically (E5)}
We estimate the attenuation matrix $\mathcal{A}$ of Theorem~\MainThmAtten and read off the per-direction retention coefficient $s_j=v_j\T \mathcal{A} v_j$. Table~\ref{tab:e5} shows that the best-identified direction retains $0.41$--$0.54$ of its signal while the weakest retains only $0.12$--$0.40$. The median anisotropy ratio $s_3/s_1$ is $2.03$, and $s_j$ increases monotonically with $\lambda_j$ (pooled slope of $s_j$ on $\log\lambda_j$ positive, $R^2=0.82$). In this design hard-labeling erodes the weakly-identified contrasts roughly twice as fast as the well-identified ones, one instance of the directional heterogeneity of Corollary~\MainCorAniso. Figure~\ref{fig:e5} displays the relationship.

\begin{table}[t]
\centering
\spacingset{1}
\caption{E5: retention coefficient $s_j=v_j\T \mathcal{A} v_j$ by eigen-direction and score
dispersion ($n=4000$, $K=3$). Direction~1 is the weakest-identified, 3 the
strongest.}
\label{tab:e5}
\begin{threeparttable}
\begin{tabular}{rrrrrrr}
\toprule
$\sigma_u$ & $\lambda_1$ & $\lambda_3$ & $s_1$ (weak) & $s_2$ & $s_3$ (strong) &
mean $s$\\
\midrule
0.12 & 0.0016 & 0.072 & 0.116 & 0.236 & 0.411 & 0.254\\
0.20 & 0.0034 & 0.078 & 0.180 & 0.258 & 0.429 & 0.289\\
0.30 & 0.0066 & 0.088 & 0.266 & 0.306 & 0.461 & 0.345\\
0.45 & 0.0131 & 0.111 & 0.399 & 0.413 & 0.536 & 0.450\\
\bottomrule
\end{tabular}
\end{threeparttable}
\end{table}

\begin{figure}[t]
\centering
\includegraphics[width=\textwidth]{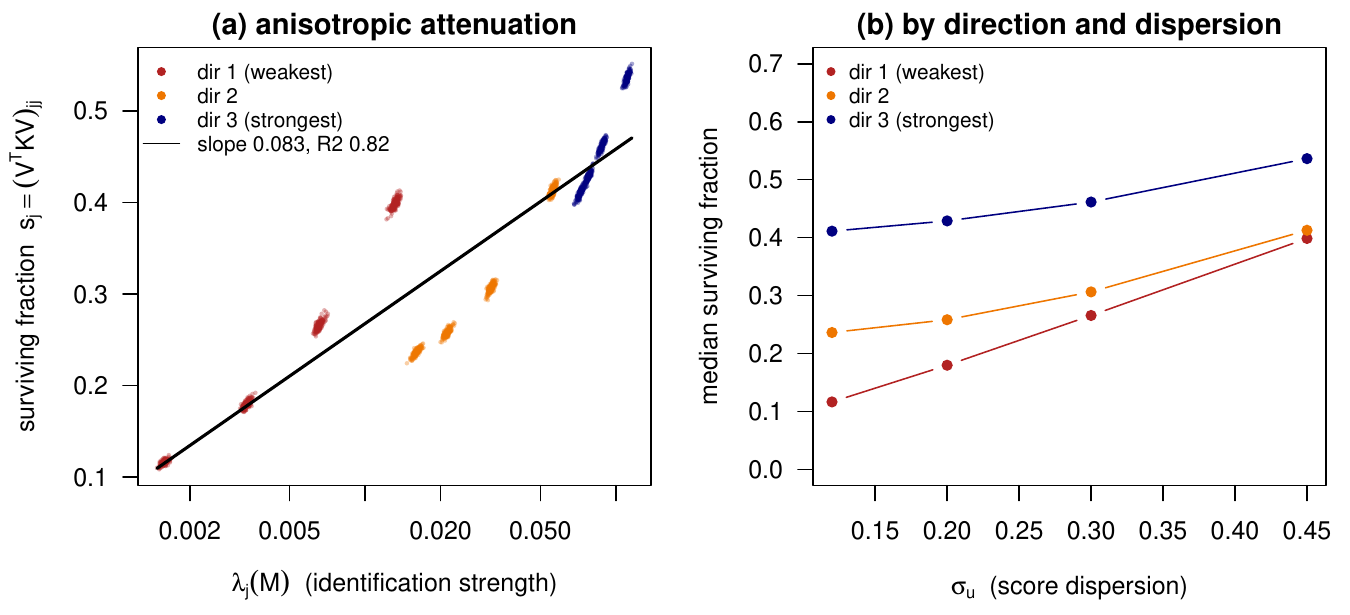}
\caption{E5. (a) Surviving fraction $s_j$ against the eigenvalue
$\lambda_j(\widehat\Msf)$, pooled over replicates and dispersions; the weakest
direction (red) is attenuated most. (b) Median $s_j$ by direction and
$\sigma_u$.}
\label{fig:e5}
\end{figure}

\section{Additional Empirical Results}

\subsection{Controlled validation on a real score spectrum (E8)}
We first check that the estimator recovers a known $\tau$ on a spectrum derived from real data. Using the real covariates and the real cross-fitted classifier score $p$, we draw membership $g\sim\mathrm{Multinomial}(p)$---so calibration holds by construction---and generate $Y=\mu(X)+g\T\tau+\varepsilon$ with a known $\tau=(0.8,0.5,-0.3)$ and no direct $W$-channel. The score's empirical spectrum is anisotropic (condition number $33$). Estimating from $p$ alone, the soft estimator recovers $\tau$ along every eigen-direction of $\widehat\Msf$: the standardized gaps to the truth are $|z|=0.86,0.89,1.44$ for the weakest to strongest direction, all within sampling error. Hard-labeling again attenuates anisotropically (retention coefficients $0.42,0.74,0.90$).

\section{A Land-Cover Audit}
\label{sec:covertype}

The Adult illustration uses a socio-economic score. We repeat the exercise on a different data geometry, and on one where the assumptions are less likely to hold, using UCI Forest CoverType \citep{blackard1999}, restricted to four cover types (\textit{Spruce/Fir}, \textit{Lodgepole Pine} as reference, \textit{Ponderosa Pine}, \textit{Krummholz}; $K=3$ contrasts, with Krummholz rare at about $4\%$). The latent class is the cover type, and the score $p$ is a cross-fitted random-forest probability vector trained on the wilderness-area and soil indicators $W$. The downstream controls are $X=(\text{elevation},\text{slope})$, deliberately kept out of $W$ so that the score retains residual variation once $X$ is partialed out.

\emph{The real score spectrum and its attenuation geometry} ($n=40{,}000$). Computed from the real covariates and the real classifier alone---no outcome and no calibration assumption is involved---the spectrum of $\widehat\Msf$ is anisotropic, $\lambda=(0.078,\,0.019,\,0.0099)$ (condition number $7.9$), and the retention coefficients under argmax hard-labeling are $s=(0.59,\,0.80,\,0.63)$ ordered from the strongest to the weakest eigen-direction. Here $s_j$ is non-monotone in $\lambda_j$ ($s_{\text{strong}}/s_{\text{weak}}=0.95$): a real score geometry illustrating the caveat after Corollary \MainCorAniso. The monotone attenuation-versus-eigenvalue pattern of the controlled designs (Section~\MainSecSim) is a feature of those designs, not a theorem, and the diagnostic must report the $s_j$ rather than presume their ordering.

\begin{sloppypar}
\emph{Known-truth recovery on the real spectrum.} We draw membership from the real score, so that calibration holds by construction, synthesize the outcome with known $\tau=(1.0,\,0.6,\,-0.5)$, and run $200$ replicates. The soft estimator recovers $\tau$ to $(0.998,\,0.603,\,-0.501)$, with RMSE $0.040$. The hard estimator is attenuated to $(0.590,\,0.423,\,-0.268)$ (RMSE $0.292$). The no-controls estimator, the analogue of regressing on the score without partialing out terrain, is confounded to the point of sign reversal ($(1.92,\,-1.73,\,1.21)$; RMSE $1.75$). The attenuation matrix computed from $(p,X)$ alone gives $\mathcal{A}\tau=(0.590,\,0.420,\,-0.268)$, within $0.003$ in every component of the realized hard-label estimate $(0.590,\,0.423,\,-0.268)$. As in E8, the operator closely predicts the hard-label estimate on a real score geometry.
\end{sloppypar}

\emph{A real-data audit} ($n=60{,}000$). Finally we run the same logic end to end: the outcome is a real terrain variable (horizontal distance to surface water), the classifier ($W=$ wilderness and soil indicators only; out-of-fold accuracy $0.678$) supplies $p$, the estimator never sees the labels, and the recorded cover types provide the full-information benchmark. Table~\ref{tab:covertype} reports the result together with a per-class conditional-calibration diagnostic (the $R^2$ of regressing $g_k-p_k$ on a quadratic basis in $X$; the richer battery in $(p,X)$, which reveals a substantially larger violation, is deployed in Section~\MainSecRecal). The Ponderosa contrast, whose calibration diagnostic is smallest, is recovered within sampling error ($0.332\pm0.018$ against a benchmark of $0.320$, $|z|=0.65$). The Spruce/Fir contrast is off by $|z|=14.6$, and the rare Krummholz class---with the worst calibration diagnostic, $R^2=0.091$---recovers only about a quarter of the benchmark effect. The soft estimator barely improves on the hard one overall (RMSE $0.344$ versus $0.361$).

\begin{table}[t]
\centering
\spacingset{1}
\caption{CoverType, real-data audit: cover-type contrasts (reference
\textit{Lodgepole Pine}) on distance-to-water. Benchmark uses the real labels.
Soft and hard use only the classifier score. cal-$R^2$: conditional-calibration
diagnostic for the class, quadratic basis in $X$ (larger is worse).}
\label{tab:covertype}
\begin{threeparttable}
\small
\begin{tabular}{lrrrrr}
\toprule
Contrast & benchmark & soft (SE) & hard & cal-$R^2$ & $|z|$ vs.\ benchmark\\
\midrule
Spruce/Fir vs.\ ref.  & $-0.501$ & $-0.659\ (0.011)$ & $-0.552$ & $0.021$ & $14.6$\\
Ponderosa vs.\ ref.   & $\phantom{-}0.320$ & $\phantom{-}0.332\ (0.018)$ & $\phantom{-}0.248$ & $0.012$ & $0.65$\\
Krummholz vs.\ ref.   & $-0.780$ & $-0.205\ (0.070)$ & $-0.161$ & $0.091$ & $8.2$\\
\bottomrule
\end{tabular}
\begin{tablenotes}\small
\item \textit{Notes}: $n=60{,}000$; classifier accuracy $0.678$; RMSE versus
the benchmark: soft $0.344$, hard $0.361$.
\end{tablenotes}
\end{threeparttable}
\end{table}

As in the Adult audit, now on a real outcome, validity requires both the calibration and exclusion assumptions, and an off-the-shelf classifier need not satisfy either. Where a labeled subsample is available, the regression of $g_k-p_k$ on $(p,X)$ can detect a calibration failure before the audit is reported rather than after, to the extent that the chosen basis has power against it. The spectrum of $\widehat\Msf$ carries no such information: it grades how well each direction is identified, not whether the assumptions hold. Section~\MainSecRecal examines the evidence on the source of the failure.


\section{The recalibration experiment on the two real-data audits}
\label{sec:recal-supp}
Table~\ref{tab:recal} collects the outcome for Adult and CoverType, alongside a separate probe of the exclusion component: the partial $R^2$ of the classifier features $W$ for the outcome, given the true class and $X$. The evidence points to different dominant problems in the two. On Adult there is barely any conditional miscalibration for the battery to find ($R^2\le 0.006$), so recalibration leaves the estimates where they were. The direct-channel check, by contrast, gives $\Delta R^2=0.135$. Occupation, work class, relationship, and capital income predict earnings on their own, so the failure of the weak directions is one that recalibration cannot address. Consistently, the argmax hard estimator matches the benchmark almost exactly (RMSE $0.023$, per-class $|z|\le 1.3$). This is consistent with hard-labeling discarding the within-class score variation through which the direct channel operates. Its argmax labels agree with the true class for $88.5\%$ of units. With an accurate classifier and a direct channel, the hard estimator is the closer of the two to the benchmark here, reversing the ordering that Theorem~\MainThmAtten implies when the exclusion component is clean.

For CoverType the rich basis reveals severe conditional miscalibration ($R^2$ up to $0.32$---far beyond what an $X$-only battery could see), and recalibration eliminates it out-of-sample ($R^2\le 0.003$). Yet the audit deteriorates. The largest eigenvalue falls from $0.122$ to $0.029$ under recalibration, and the violation moment is premultiplied by $\Msf^{-1}$ (Section~\MainSecRecal), so that compression enlarges the bias produced by the remaining direct channel---here the one carried by the soil and wilderness indicators ($\Delta R^2=0.081$).

\begin{table}[t]
\centering
\spacingset{1}
\caption{The recalibration experiment on the two real-data audits.
cal-$R^2$: largest per-class conditional-calibration $R^2$ on a rich
$(p,X)$ basis, before and after recalibration (out-of-sample). RMSE:
soft-estimator distance to the full-information benchmark on the audit
half. $\Delta R^2_{W}$: partial $R^2$ of the classifier features for the
outcome given the true class and $X$ (direct-channel check).}
\label{tab:recal}
\begin{threeparttable}
\begin{tabular}{lrrrrrr}
\toprule
& \multicolumn{2}{c}{max cal-$R^2$} & \multicolumn{2}{c}{RMSE soft} & &\\
\cmidrule(lr){2-3}\cmidrule(lr){4-5}
Audit & raw & recal & raw & recal & RMSE hard & $\Delta R^2_{W}$\\
\midrule
Adult (income)          & 0.006 & 0.006 & 0.397 & 0.463 & 0.023 & 0.135\\
CoverType (dist.\ water)& 0.317 & 0.003 & 0.395 & 1.322 & 0.393 & 0.081\\
\bottomrule
\end{tabular}
\begin{tablenotes}\small
\item \textit{Notes}: $50/50$ calibration/audit split ($n_{\mathrm{audit}}
= 15{,}081$ and $30{,}000$); recalibration map fitted on the labeled half
only. Eigenvalues of $\widehat\Msf$ compress under recalibration from
$(0.0055, 0.0248, 0.1216)$ to $(0.0031, 0.0110, 0.0287)$ on CoverType and
mildly on Adult.
\end{tablenotes}
\end{threeparttable}
\end{table}

The sequence of Section~\MainSecRecal starts from the rich-basis $(p,X)$ battery because an $X$-only check is blind to violations in the $p$-direction, as CoverType shows. Recalibration and an out-of-sample re-run follow. A closing gap to an available benchmark indicates calibration as the dominant failure, with the repaired audit then nearer the benchmark. If the audit barely moves despite a clean battery (Adult), or moves sharply after recalibration (CoverType), that is evidence of a remaining exclusion problem, which the direct-channel check can assess. Neither version of the soft audit should then be reported without it, and, when the classifier is accurate, the coarser argmax estimator may be closer to the benchmark.

Neither step tests the exclusion restriction directly, and neither is conclusive on its own. Together they help distinguish which of the two assumptions is more problematic, using only the labeled subsample the battery already requires. Section~\MainSecBisg turns to an audit on which both diagnostics are substantially more favorable.

\section{Spectral summaries and directional collapse}
\label{sec:e1e2-supp}
\subsection{A spectral summary tracks the variance (E1)}
We sweep $\sigma_u$ and regress, in logs, the largest diagonal of the sandwich covariance $\widehat V$ on each of three spectral summaries of $\widehat\Msf$. Table~\ref{tab:e1} shows that $\lambda_{\min}(\Msf)$ alone explains the variance to within rounding ($R^2=1.00$, slope $-1.19$), outperforming the condition number and $\tr(\Msf^{-1})$ as a single predictor. The slope differs from the naive $-2$ because the moment-noise matrix $\Sigma$ co-moves with $\Msf$ across designs. The slope is exactly $-2$ once $\Sigma$ is held fixed (experiment E3). Figure~\ref{fig:e1} plots the relationship.

\begin{table}[t]
\centering
\spacingset{1}
\caption{E1: spectral summaries of $\widehat\Msf$ versus the estimator's
variance, medians over replicates ($n=4000$, $K=3$). Log--log regressions of
$\max_k\widehat V_{kk}$ on each summary.}
\label{tab:e1}
\begin{threeparttable}
\begin{tabular}{lrrr}
\toprule
& \multicolumn{3}{c}{$\log\max_k\widehat V_{kk}$ regressed on $\log(\cdot)$}\\
\cmidrule(lr){2-4}
Spectral summary & slope & $R^2$ & \\
\midrule
$\lambda_{\min}(\Msf)$         & $-1.19$ & $1.00$ & (best single predictor)\\
condition number $\lambda_{\max}/\lambda_{\min}$ & $+1.71$ & $0.99$ & \\
$\tr(\Msf^{-1})$               & $+1.25$ & $1.00$ & \\
\bottomrule
\end{tabular}
\end{threeparttable}
\end{table}

\begin{figure}[t]
\centering
\includegraphics[width=0.56\textwidth]{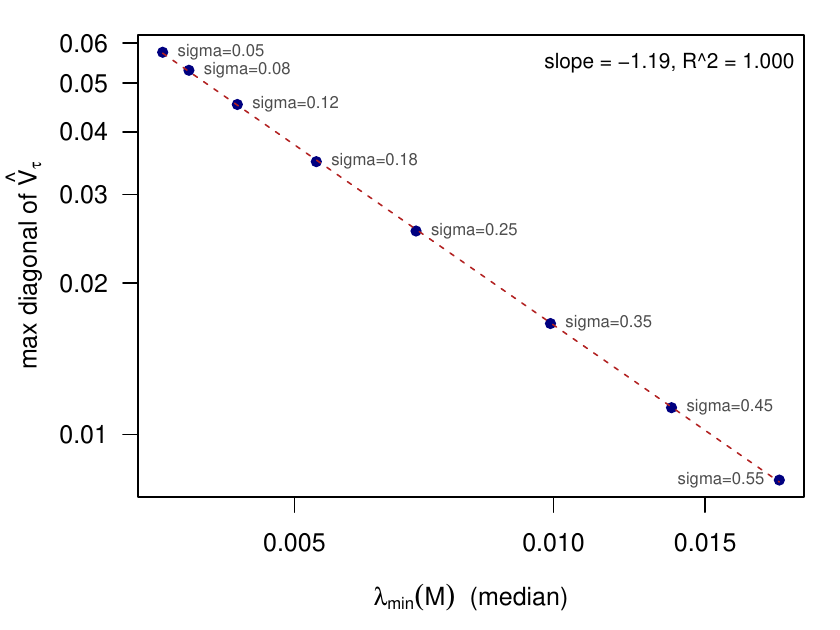}
\caption{E1. Largest diagonal of the sandwich covariance against
$\lambda_{\min}(\widehat\Msf)$ across the $\sigma_u$ sweep (log--log). The
single eigenvalue tracks the variance with $R^2=1.00$.}
\label{fig:e1}
\end{figure}

\subsection{Directional identification collapse (E2)}
We construct a design in which one group direction is served only by $X$ (so its score combination is nearly deterministic in $X$, driving its eigenvalue toward zero) while the others are served by $S$. At the default dispersion ($\sigma_u=0.30$, no tuning), Table~\ref{tab:e2} reports, in the eigenbasis of $\widehat\Msf$, the variance along the weakest versus strongest direction. The weak direction carries about seven times the variance of the strong one across the whole range of $\beta_x$, and, consistent with Theorem~\MainThmClt, the variance ratio $\Var_1/\Var_3$ tracks the eigenvalue ratio $\lambda_3/\lambda_1$ almost exactly ($7.0$ vs $7.0$, $6.1$ vs $6.1$), matching the approximately $\lambda_j^{-1}$ directional variance scaling in this design. Precision deteriorates along the weak direction while the others are unaffected. Figure~\ref{fig:e2} shows the per-direction variance against its eigenvalue.

\begin{table}[t]
\centering
\spacingset{1}
\caption{E2: directional collapse at the default $\sigma_u=0.30$ ($n=6000$),
no parameter tuning. Medians in the eigenbasis of $\widehat\Msf$. $\beta_x$ 
controls how strongly the weak direction is driven by $X$ alone.}
\label{tab:e2}
\begin{threeparttable}
\begin{tabular}{rrrrr}
\toprule
$\beta_x$ & $\lambda_1$ (weak) & $\lambda_3$ (strong) & $\Var_1/\Var_3$ &
$\lambda_3/\lambda_1$\\
\midrule
0.6 & $0.0060$ & $0.042$ & $7.0$ & $7.0$\\
1.0 & $0.0059$ & $0.040$ & $7.0$ & $6.9$\\
1.5 & $0.0058$ & $0.039$ & $6.8$ & $6.7$\\
3.0 & $0.0060$ & $0.036$ & $6.1$ & $6.1$\\
\bottomrule
\end{tabular}
\end{threeparttable}
\end{table}

\begin{figure}[t]
\centering
\includegraphics[width=0.56\textwidth]{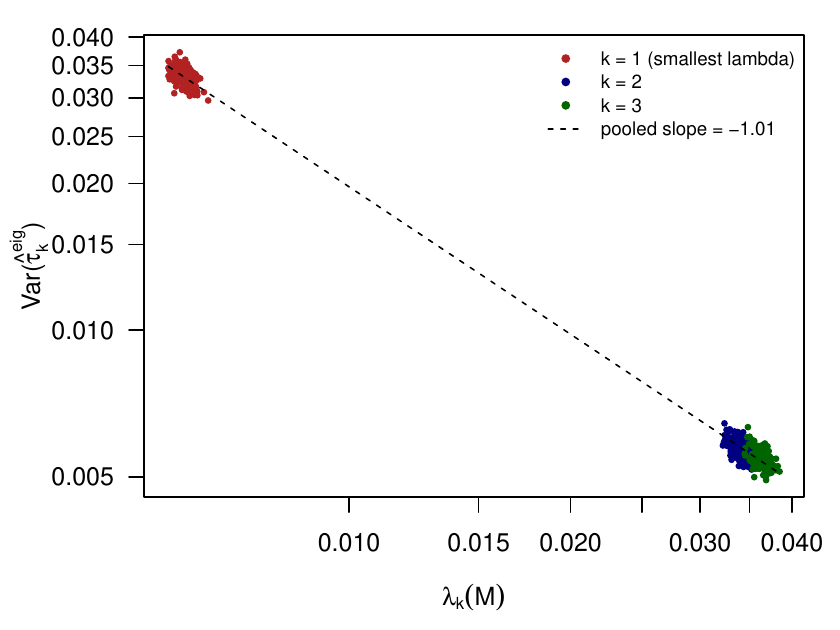}
\caption{E2. Per-eigen-direction variance of $\hat\tau$ against the eigenvalue
$\lambda_k(\widehat\Msf)$ (log--log). The weakest direction (red) has the
largest variance; the pooled slope is close to $-1$.}
\label{fig:e2}
\end{figure}

\section{Extended discussion of Theorems \MainThmSpectral and \MainThmEff}
\label{sec:extdisc}
Part~(c) of Theorem~\MainThmSpectral generalizes the scalar boundary $\Vs=0\Leftrightarrow p=r(X)$: a direction collapses precisely when the corresponding linear combination of scores carries no residual variation beyond $X$. Part~(b) shows that collapse is local to the affected directions---the effect remains point-identified along every direction with positive eigenvalue---and part~(d) upgrades the moment-level statement to genuine observational equivalence. The form of the construction in (d) matters. An alternative membership vector generated from $p$ by independent randomization satisfies $\E[Y\mid g',p,X]=\E[Y\mid p,X]$, which cannot depend on $g'$, so it supports no nonzero coefficient at all. A valid alternative model must couple membership to the outcome, and the proof does so by tilting the conditional class probabilities with a bounded, conditionally mean-zero transform of the outcome residual. Part~(d) also bounds the identified set of part~(b): a latent split of a fixed conditional outcome distribution can only support mean contrasts commensurate with its dispersion, so the identified set is bounded, not the full affine subspace. The conditions in (d) are sufficient for the bounded-tilt construction, not sharp. When $K=1$, $\Msf$ is a number and identification is all or nothing.

The efficiency bound rests on a Chamberlain calculation restricted to the instruments $h$ with $\E[h\mid X]=0$, the ones that leave the moment insensitive to the unknown $\mu$. Section~\ref{sec:proof-eff} shows that the optimum within that class is $\sigma^{-2}\tilde a$ and not $\sigma^{-2}a$. The restriction binds here because heteroskedasticity is intrinsic: since $u=R-a\T\tau=\tau\T(g-p)+\varepsilon$ and $\E[\varepsilon\mid G,p,X]=0$ kills the cross term,
\begin{equation}
  \sigma^{2}(p,X)=\tau\T\!\bigl(\operatorname{diag}(p)-pp\T\bigr)\tau
  +\Var(\varepsilon\mid p,X),
  \label{eq:het}
\end{equation}
so the latent-membership term generally varies with $p$ even when $\Var(\varepsilon\mid p,X)$ is constant, and the precision-weighted mean $\tilde p$ can then depart from $r$. Weighting can improve precision, but the efficient estimator requires two additional nuisance functions beyond $(m,r)$. Section~\ref{sec:e7} quantifies the gain from each step.

\section{Wald coverage for the uncoarsened estimator}
\label{sec:e6-supp}
\subsection{Multivariate inference is valid (E6)}
Table~\ref{tab:e6} reports componentwise and joint Wald coverage at nominal $95\%$ over $1000$ replicates, sweeping $n$. Componentwise coverage lies in $[0.938,0.960]$ and joint-ellipsoid coverage in $[0.946,0.958]$. A Shapiro--Wilk test on the standardized first component at $n=5000$ does not reject normality ($p=0.99$). Figure~\ref{fig:e6} shows the normal QQ-plot and the convergence of coverage to nominal, consistent with Theorem~\MainThmClt.

\begin{table}[t]
\centering
\spacingset{1}
\caption{E6: empirical coverage of nominal-$95\%$ Wald procedures
($1000$ replicates, $K=3$, $\sigma_u=0.30$).}
\label{tab:e6}
\begin{threeparttable}
\begin{tabular}{rrrrr}
\toprule
$n$ & $\tau_1$ & $\tau_2$ & $\tau_3$ & joint ellipsoid\\
\midrule
500   & 0.938 & 0.954 & 0.954 & 0.946\\
1000  & 0.946 & 0.945 & 0.954 & 0.947\\
2000  & 0.958 & 0.948 & 0.947 & 0.955\\
5000  & 0.954 & 0.960 & 0.951 & 0.958\\
\bottomrule
\end{tabular}
\end{threeparttable}
\end{table}

\begin{figure}[t]
\centering
\includegraphics[width=\textwidth]{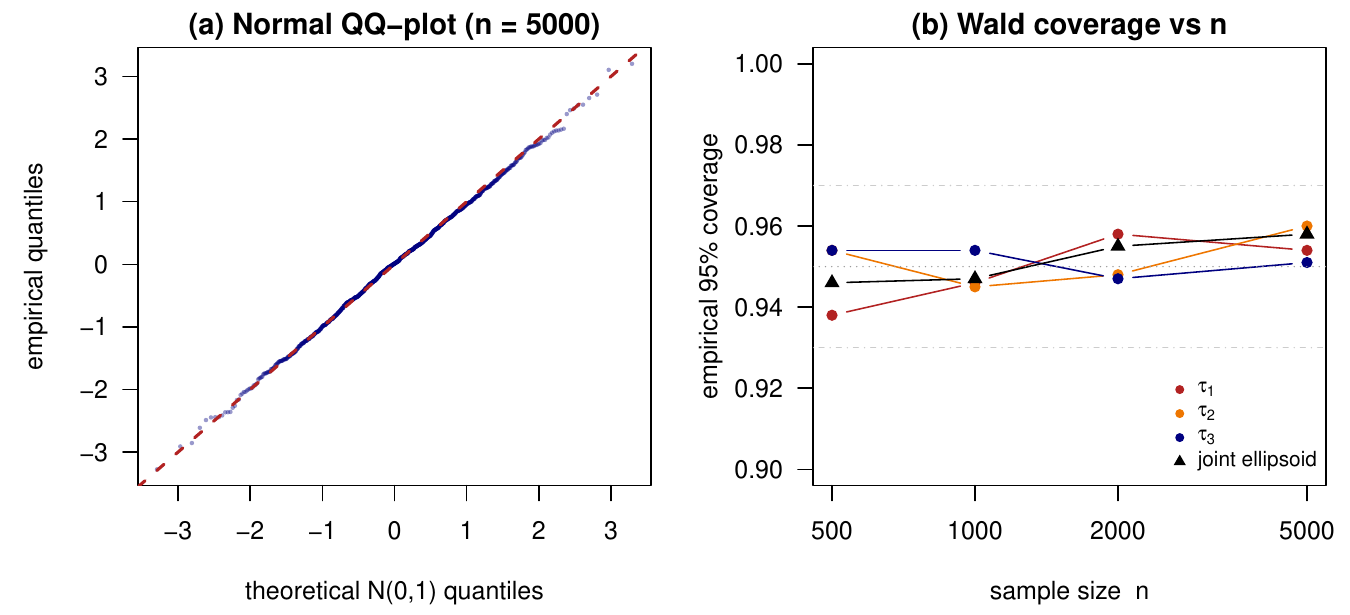}
\caption{E6. (a) Normal QQ-plot of the standardized $\hat\tau_1$ at $n=5000$.
(b) Componentwise and joint Wald coverage converging to the nominal $0.95$
(gray band $[0.93,0.97]$).}
\label{fig:e6}
\end{figure}

\section{The Adult disparity audit in full}
\label{sec:adult-supp}
The original UCI \textit{Amer-Indian-Eskimo} and \textit{Other} race categories are combined as \textit{Other}, giving the four classes used in the audit. This section reports in full the audit summarized in Table~\MainTabAdult. The spectrum of $\widehat\Msf$ is anisotropic, with eigenvalues $(0.0006,0.0069,0.0177)$ and condition number $31$, so one disparity direction carries substantially more residual variation than the other two. Hard-labeling attenuates anisotropically (Corollary~\MainCorAniso), an identity that requires no calibration assumption. The best-identified direction largely survives hard-labeling ($s_3=0.90$), while the weakest loses more than half its signal ($s_1=0.46$). Along the well-identified direction the soft estimator, built from the proxy alone, matches the full-information benchmark ($0.034\pm0.017$ versus $0.028$). Along the two weak directions the soft estimator departs sharply from the benchmark. With eigenvalues an order of magnitude smaller, these directions amplify violations of the identifying assumptions, so the disparity estimates cannot be interpreted under the identifying model without addressing those violations.

The diagnostics provide evidence about which assumption is more problematic. The calibration battery is clean both in $X$ (regressing $g_k-p_k$ on $X$ yields $R^2\le0.002$) and on the rich $(p,X)$ basis ($R^2\le0.006$), and the recalibration experiment of Section~\MainSecRecal shows that recalibration changes nothing. The evidence instead points to the exclusion restriction. The proxy's features (occupation, work class, capital income) carry direct income channels, a violation quantified by the direct-channel check and not addressed by recalibrating the score. These results suggest restricting interpretation of proxy-based disparities to well-identified directions and to settings in which both assumptions have been probed. This is a methodological stress test, not a substantive claim about disparities. The observed labels serve only as a benchmark. Proxy-based inference of a protected attribute is itself statistically and ethically fraught, and the weak directions here are dominated by assumption failure rather than by any credible estimate.

\subsection*{Voter-file category coding}
In the voter-turnout audit, registrants coded as Hispanic/Latino are assigned to the Hispanic category before the race categories are formed. Registrants not assigned to White, Black, Hispanic, or Asian are excluded from that audit.

\section{Calibration of the coverage formula}
\label{sec:cov-supp}
Proposition~\MainPropCoverage predicts coverage from the noncentrality index $\nu_v$. Grouping $600$ replicates at $n=4000$ by $|\nu_v|$ and comparing realized with predicted coverage within groups gives $0.945$ against $0.950$ for $|\nu_v|\le0.25$, $0.809$ against $0.812$ for $|\nu_v|\in(0.75,1.5]$, $0.542$ against $0.523$ for $|\nu_v|\in(1.5,2.5]$, and $0.003$ against $0.002$ for $|\nu_v|>4$. The formula tracks realized coverage to within $0.02$ across the entire range. Plugging $\hat\tau$ into $\hat\nu_v$ in place of $\tau$ opens the gaps further, especially where the distortion is largest. Equation~(\MainEqCovHat) therefore orders the candidates accurately in these simulations, while approximating the level less closely. Varying the score dispersion and the overlap between classifier features and controls does not disturb the ordering. The argmax interval covers between $1\%$ and $6\%$ in every configuration examined.

\section{Proof of Proposition \MainPropCoverage (coverage under coarsening)}
Fix a unit vector $v$ and write $b_v=v\T(\mathcal{A}_h-I)\tau$ and $\sigma^2_{h,v}=v\T\Omega_hv$ with $\Omega_h=D_h^{-1}\Sigma_hD_h^{-1}$, $\Sigma_h=\E[u_h^2a_ha_h\T]$ and $u_h=R-a_h\T\mathcal{A}_h\tau$. The estimator $\hat\tau_h$ solves the moment $a_h(R-a_h\T\vartheta)$, whose population solution is $\mathcal{A}_h\tau$ by Theorem~\MainThmAtten. The argument of Theorem~\MainThmClt applied to that moment, with $\vartheta$ in place of $\tau$, gives $\sqrt n\,(v\T\hat\tau_h-v\T\mathcal{A}_h\tau)\dto\N(0,\sigma^2_{h,v})$ and $\widehat\Omega_h\pto\Omega_h$.

\noindent The reported interval is $v\T\hat\tau_h\pm z_{1-\alpha/2}\sqrt{v\T\widehat\Omega_hv/n}$ and covers $v\T\tau$ when $|v\T\hat\tau_h-v\T\tau|\le z_{1-\alpha/2}\sqrt{v\T\widehat\Omega_hv/n}$. Writing $v\T\hat\tau_h-v\T\tau=(v\T\hat\tau_h-v\T\mathcal{A}_h\tau)+b_v$ and dividing by $\sqrt{v\T\widehat\Omega_hv/n}$, the studentized statistic converges to a normal variate with unit variance and mean $\nu_v=\sqrt n\,b_v/\sigma_{h,v}$. Along sequences with $b_v=b/\sqrt n$, the mean is $\nu_v=b/\sigma_{h,v}$ and the coverage probability converges to $\Phi(z_{1-\alpha/2}-\nu_v)-\Phi(-z_{1-\alpha/2}-\nu_v)$, which is the stated limit. If instead $b_v\neq0$ is held fixed, then $\nu_v\to\infty$ and the limit is $0$. The map $\nu\mapsto\Phi(z-\nu)-\Phi(-z-\nu)$ is strictly decreasing in $|\nu|$, so coverage falls as the coarsening bias grows relative to the reported standard error and, for a given bias, falls faster the smaller that standard error is.
\hfill$\square$

\section{Direction-specific rates and inference under collapse}
\label{sec:weakid}
\noindent The eigenvalue scaling of $\Omega$ has a sharp consequence for what happens as a direction collapses. Fix a unit vector $v$ and let $\lambda_{v,n}=v\T\Msf_n v$, allowing $\Msf_n$ to drift toward singularity along $v$.

\begin{proposition}[Direction-specific rate; collapse-robust inference; proved in Section~\ref{sec:weakid-proof}]
\label{prop:weakid}
Under Assumptions~1--4, conditional homoskedasticity $v\T\Sigma v=\sigma^2\lambda_{v,n}(1+o(1))$, and $n\lambda_{v,n}\to\infty$:
\begin{enumerate}[(i)]
\item $\sqrt{n\lambda_{v,n}}\,(v\T\hat\tau-v\T\tau)\dto\N(0,\sigma^2)$. The
convergence rate along $v$ is $\sqrt{n\lambda_{v,n}}$, i.e.\ the effective
sample size in direction $v$ is $n\lambda_{v,n}$.
\item The studentized statistic $(v\T\hat\tau-v\T\tau)/\sqrt{v\T\widehat V v}\dto\N(0,1)$,
so the Wald interval stays valid however slowly $\lambda_{v,n}\to0$.
\end{enumerate}
In the oracle Gaussian model (known nuisances), (ii) continues to hold even at the bounded-information rate $n\lambda_{v,n}\to\kappa\in(0,\infty)$. The estimate is then inconsistent, yet the Wald interval retains nominal coverage, widening at the $O(1)$ rate $(n\lambda_{v,n})^{-1/2}$.
\end{proposition}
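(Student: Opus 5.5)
The plan is to work directly from the linear influence representation already established in the proof of Theorem~\MainThmClt. That proof gives $\sqrt n(\hat\tau-\tau)=\Msf_n^{-1}n^{-1/2}\sum_i a_iu_i+\text{remainder}$. Because $v$ is taken to be an eigenvector of $\Msf_n$ with eigenvalue $\lambda_{v,n}$, we have $v\T\Msf_n^{-1}=\lambda_{v,n}^{-1}v\T$. Projecting onto $v$ therefore gives
\[
v\T(\hat\tau-\tau)=\lambda_{v,n}^{-1}\,n^{-1}\textstyle\sum_i (v\T a_i)u_i+\rho_n .
\]
The summands $\xi_{n,i}=(v\T a_i)u_i$ are i.i.d.\ within each $n$ and have mean zero, since $\E[a u]=0$. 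Their variance is $v\T\Sigma_n v=\sigma^2\lambda_{v,n}(1+o(1))$ by the homoskedasticity hypothesis. This is a triangular array, so I would verify a Lindeberg (or Lyapunov) condition for $\xi_{n,i}/\sqrt{n\lambda_{v,n}\sigma^2}$. The natural route is Lyapunov with fourth moments. Boundedness of $a$ and $\E[Y^4]<\infty$ (Assumption~\ref{ass:mom}) bound $\E[\xi_{n,i}^4]$ by a constant times $\E[(v\T a)^2u^2]$, which is $O(\lambda_{v,n})$ if $\E[u^2\mid p,X]$ is bounded. Failing that, a bound of the form $O(\lambda_{v,n})$ on $\E\xi^4$ suffices, since the Lyapunov ratio is then $O\bigl((n\lambda_{v,n})^{-1}\bigr)\to0$. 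It follows that $\sqrt{n\lambda_{v,n}}\,\lambda_{v,n}^{-1}n^{-1}\sum_i\xi_{n,i}\dto\N(0,\sigma^2)$, and part~(i) follows once the remainder satisfies $\sqrt{n\lambda_{v,n}}\,\rho_n=o_P(1)$.

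I expect the remainder to be the main obstacle. The projection of the nuisance error terms onto $v$ is multiplied by $\lambda_{v,n}^{-1}$, so the fixed-$\Msf$ condition $o_P(n^{-1/2})$ of Theorem~\MainThmClt is no longer enough. One needs those error terms to be $o_P(\sqrt{\lambda_{v,n}/n})$ along $v$. A second issue is that $v$ need not be an eigenvector of $\widehat\Msf_n^{-1}$; in the Gram-matrix inversion, the off-direction components couple with relative size $\|\widehat\Msf_n-\Msf_n\|/\lambda_{v,n}=O_P(n^{-1/2}/\lambda_{v,n})$. I would first try a block decomposition of $\widehat\Msf_n^{-1}$ in the basis $(v,v^\perp)$, using the Schur complement to isolate the $v$-component and bounding the cross terms via $\E[(v\T a)w\T a]$ for $w\perp v$. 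The cleanest way to close the argument is to read the rate condition of Theorem~\MainThmClt as holding in this rescaled, direction-specific form, and to state that this reading is understood in the proposition. I would be explicit that this is an added regularity condition rather than a consequence of the fixed-design rates.

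For part~(ii), I would show $n\lambda_{v,n}\,v\T\widehat V v\pto\sigma^2$, which amounts to $\lambda_{v,n}\,v\T\widehat\Omega v\to\sigma^2$. Writing $v\T\widehat\Omega v=\lambda_{v,n}^{-2}v\T\widehat\Sigma v(1+o_P(1))$, it suffices that $\lambda_{v,n}^{-1}v\T\widehat\Sigma v\pto\sigma^2$. This follows from a weak law for the triangular array $(v\T a_i)^2u_i^2/\lambda_{v,n}$, whose mean tends to $\sigma^2$ and whose variance is $O\bigl((n\lambda_{v,n})^{-1}\bigr)$ by the fourth-moment bound above. Replacing $\hat u_i,\hat a_i$ by $u_i,a_i$ uses the same rescaled nuisance rates. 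Slutsky then gives the $\N(0,1)$ limit, and the factors of $\lambda_{v,n}^{-1}$ cancel.

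The oracle Gaussian statement is proved separately and conditionally. With known nuisances and $u_i\mid\{a_i\}\sim\N(0,\sigma^2)$ independently, the numerator $\sum_i(v\T a_i)u_i$ is exactly $\N\bigl(0,\sigma^2\sum_i(v\T a_i)^2\bigr)$ given $\{a_i\}$, and the matching Wald denominator is $\sigma^2\sum_i(v\T a_i)^2$ up to the estimation of $\sigma^2$. Studentizing therefore yields an exact standard normal conditional on $\{a_i\}$ for any $\lambda_{v,n}>0$, with the $\hat\sigma^2$ error contributing only $O_P(n^{-1/2})$. Since this conditional law does not depend on $\{a_i\}$, the statement holds unconditionally. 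The half-width is of order $(n\lambda_{v,n})^{-1/2}\to\kappa^{-1/2}$, so $v\T\hat\tau$ is inconsistent while coverage stays nominal.
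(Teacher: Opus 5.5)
Your proposal follows the paper's own argument: project the influence representation onto the eigenvector $v$ so that $v\T\Msf_n^{-1}=\lambda_{v,n}^{-1}v\T$, apply a Lindeberg/Lyapunov CLT to the triangular array $(v\T a_i)u_i$ with variance $\sigma^2\lambda_{v,n}$, let the factor $\lambda_{v,n}^{-1}$ cancel in the studentized ratio, and handle the bounded-information oracle case by exact conditional Gaussianity given $\{a_i\}$. You are more explicit than the paper, whose $o_P(\cdot)$ silently absorbs the direction-specific nuisance rate and the $v$--$v^{\perp}$ coupling in $\widehat{\Msf}_n^{-1}$ that you flag; the one slip is in the Lyapunov step, where $\E[\xi_{n,i}^4]\le C\,\E[(v\T a)^2u^4]$, so the condition you actually need is a bound on $\E[u^4\mid p,X]$, uniform in $n$, rather than on $\E[u^2\mid p,X]$.
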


Proposition~\ref{prop:weakid} gives the corresponding inference result and differs from weak-instrument asymptotics. Here inference need not be invalidated because the residualized score $a$ is its own exogenous instrument ($\E[au]=0$), while $\widehat\Msf^{-1}$ enters the estimate and its variance estimate identically and therefore cancels in the $t$-ratio (self-normalization).

The bounded-information case is more delicate. A finite local experiment need not yield first-order normality without further structure. We therefore state the result only for the oracle Gaussian design and check it numerically in E10. In this setting, the widening intervals along weak eigen-directions retain nominal coverage rather than becoming anti-conservative. Establishing uniformity over identification strength would call for the apparatus of \citet{han2019}, who obtain Wald tests with uniformly correct asymptotic size---including for one-dimensional subvectors---under a deficient-rank Jacobian. We do not attempt that here.